\newcommand\xrowht[2][0]{\addstackgap[.5\dimexpr#2\relax]{\vphantom{#1}}}
\newtheorem{theorem}{Theorem}
\newtheorem{claim}[theorem]{Claim}
\newtheorem{corollary}[theorem]{Corollary}
\newtheorem{definition}[theorem]{Definition}
\newtheorem{lemma}[theorem]{Lemma}
\newenvironment{proof}[1][Proof]{\textbf{#1.} }{\ \rule{0.5em}{0.5em}}
\newcommand{\MovingKnife}{\mathsf{MovingKnife}}
\newcommand{\Matching}{\mathsf{Matching}}
\newcommand{\MMS}{\mathsf{MMS}}
\newcommand{\OPT}{\mathsf{OPT}}
\newcommand{\PoF}{\mathsf{PoF}}
\newcommand{\cA}{\mathbf{A}}
\newcommand{\cS}{\mathcal{S}}
\newcommand{\cI}{\mathcal{I}}
\newcommand{\EW}{\mathsf{EW}}
\newcommand{\UW}{\mathsf{UW}}
\newcommand{\calpha}{\boldsymbol\alpha}
\begin{document}

\title{On the Price of Fairness of Allocating Contiguous Blocks}

\author{Ankang Sun$^1$ \hspace{30pt} Bo Li$^2$\\
$^1$Warwick Business School, University of Warwick, United Kingdom\\
\texttt{A.Sun.2@warwick.ac.uk}\\
$^2$ Department of Computing, The Hong Kong Polytechnic University, Hong Kong\\
\texttt{comp-bo.li@polyu.edu.hk}
}
\date{}
\maketitle

\begin{abstract}
In this work, we study the problem of fairly allocating a number of indivisible items that are located on a line to multiple agents. 
A feasible allocation requires that the allocated items to each agent are connected on the line.
The items can be goods on which agents have non-negative utilities, or chores on which the utilities are non-positive.
Our objective is to understand the extent to which welfare is inevitably sacrificed by enforcing the allocations to be fair, i.e., price of fairness (PoF).
We study both egalitarian and utilitarian welfare.
Previous works by Suksompong [Discret. Appl. Math., 2019] and H{\"{o}}hne and van Stee [Inf. Comput., 2021] have studied PoF regarding the notions of envy-freeness and proportionality. 
However, these fair allocations barely exist for indivisible items, and thus in this work, we focus on the relaxations of maximin share fairness and proportionality up to one item, which are guaranteed to be satisfiable. 
For most settings, we give (almost) tight ratios of PoF and all the upper bounds are proved by designing polynomial time algorithms. 
\end{abstract}

\section{Introduction}

\begin{center}
{\em Inequality, not scarcity, that persecutes governors; anarchy, not poverty, that haunts them.}

\medskip

\rightline{-- The Analects of Confucius}
\end{center}

Allocating a set of indivisible items among multiple agents is a fundamental problem in the fields of multi-agent systems and computational social choice \cite{DBLP:journals/corr/abs-2202-07551,DBLP:journals/corr/abs-2202-08713}.
Traditionally, efficiency was at the center but fairness has attracted increasingly more attention in the recent decade.
Two types of fairness notions are widely studied, namely, envy-freeness (EF) \cite{foley1966resource} and proportionality (PROP)  \cite{steinhaus1948problem}.
Intuitively, EF requires that every agent's value for her own bundle is no smaller than that of any other agent's bundle, and PROP requires every agent's value is at least the average value if she is assigned all the items.
When the items are not divisible, exact fairness becomes hard to achieve.
Thus, various ways to relax these fairness notions are proposed, among which ``up to one'' is one of the most widely studied relaxations which requires the requirements to be satisfied after the removal of some items.
Accordingly we have relaxed notions of envy-free up to one item (EF1) and proportional up to one item (PROP1) for indivisible items.
Fortunately, for both goods and chores, an EF1 or a PROP1 allocation always exists and can be found efficiently when the valuations are additive. 
Besides PROP1, maximin share fairness (MMS) \cite{journals/bqgt/Budish10} is also a popular relaxation for PROP which requires each agent's value to be at least her max-min value if she is to allocate the items. 
Although in general, an MMS allocation is not guaranteed for either goods or chores \cite{conf/aaai/AzizRSW17,journals/jacm/KurokawaPW18}, constant approximation algorithms have been designed in the literature \cite{journals/ai/GargT21,conf/sigecom/HuangL21}. 

In this work, we care both utilitarian and egalitarian welfare. 
The utilitarian welfare is the sum of all agents' values, and the egalitarian welfare of an allocation  is the value of the least-off agent. 
Informally, the price of fairness (PoF) regarding criterion X (or price of X for short) is the ratio of the largest economic welfare attainable by any allocation to the greatest economic welfare attained by an X allocation.
The study of PoF is arguably initiated by Caragiannis et al. \cite{DBLP:journals/mst/CaragiannisKKK12} that studied the notions of EF and PROP under both indivisible goods and chores settings and concerned both utilitarian and egalitarian welfare. 
Follow-up works focus more on the utilitarian welfare.
Noting that EF and PROP allocations are not guaranteed to exist, Bei et al. \cite{DBLP:journals/mst/BeiLMS21} extended the research to the fairness criteria whose existence is guaranteed such as EF1.
Recently, Barman et al. \cite{DBLP:conf/wine/BarmanB020} proved that for indivisible goods, the tight ratio of PoF regarding both EF1 and $\frac{1}{2}$-approximate MMS is $\Theta(\sqrt{n})$, where $n$ is the number of agents. 
Unfortunately, when the items are chores, Sun et al. \cite{DBLP:conf/atal/SunCD21} proved that the price of EF1 is infinite.


Note that in the above line of research (e.g., \cite{DBLP:conf/wine/BarmanB020,DBLP:journals/mst/BeiLMS21,DBLP:journals/mst/CaragiannisKKK12,DBLP:conf/atal/SunCD21}), the allocations do not have feasibility constraints. 
In this work, we are particularly interested in the setting when the items are distributed on a line, and a feasible allocation requires that the allocated items to each agent forms a connected block.
This setting is motivated by the cake cutting literature \cite{edward1999rental} where a divisible cake, denoted by interval $[0,1]$, is to be assigned to the agents and each agent is expected to obtain a connected piece.
Many practical applications fit in this model, such as allocating offices to research groups on the same floor.
Suksompong \cite{suksompongFairlyAllocatingContiguous2019} and H{\"{o}}hne and van Stee \cite{hohneAllocatingContiguousBlocks2021} respectively studied the PoF for allocating indivisible goods and chores under connectivity constraints with fairness notions of EF and PROP.
However, since EF and PROP cannot be guaranteed to exist, we aim at investigating the tight bounds of PoF regarding the notions that can be satisfied. 
The good news is when agents only accept connected blocks, an MMS and PROP1 allocation always exists, but the existence of EF1 allocations is still unknown and seems to be a hard problem \cite{BILO2022197,DBLP:conf/ijcai/BouveretCEIP17}.  
Therefore, we focus on MMS and PROP1 allocations and our task is summarized as follows.
\begin{quote}
\em 
    In this work, we explore the (tight) ratios of the price of fairness when the fairness requirement is either MMS or PROP1, the economic welfare is either utilitarian or egalitarian, and the items are either goods or chores that are located on a line.
\end{quote}



\subsection{Main results}
Our main results are summarized in Table \ref{table:main-result:general}.
Note that all the upper bound results are proved by proposed polynomial time algorithms that return fair allocations with the desired efficiency guarantees.
For the ease of comparison, we also provide the known PoF results regarding EF and PROP \cite{hohneAllocatingContiguousBlocks2021,suksompongFairlyAllocatingContiguous2019}.
When the fairness notions are changed from EF and PROP to MMS or PROP1, the ratios of PoF change accordingly, and analyzing the ratios of PoF becomes challenging. 
For example, for the problem of allocating goods, given any PROP allocation (if any), it is straightforward that if the valuations are normalized, each agent has value at least $\frac{1}{n}$ where $n$ is the number of agents, which is already sufficient to achieve the upper bound $\Theta(n)$.
However, this proof technique does not carry over to the notions of MMS and PROP1.
To find the MMS and PROP1 allocations with the best possible efficiency guarantee, we propose two parametric subroutines that is able to balance the individual agent's value and the social welfare.
The first one is to use parameters to identify large items that are allocated to specific agents via a matching where each agent receives at most one item. 
The second subroutine is motivated by the moving knife algorithm where we stand at the very left and collect items until the first (or last) time there exist agents who are satisfied regarding the carefully designed parameters.
By carefully integrating the two subroutines into the algorithm design and properly choosing the parameters, we obtain the upper bound results as shown in Table \ref{table:main-result:general}.
The lower bound results are proved by identifying hard instances where the 
welfare is inevitably sacrificed by enforcing the allocations to be MMS or PROP1.





\begin{table} 
\centering
\begin{tabular}{c|c|c|c|c|c}
\toprule \xrowht[()]{8pt}
General $n$ & EF & PROP  & MMS & PROP1 &\\
\hline \hline \xrowht[()]{9pt}
\multirow{2}{*}{Goods} & 
$\Theta(\sqrt{n})$
& 
$\Theta(n)$
& 
\multicolumn{2}{c|}{LB: $\Omega(\sqrt{n})$ UP: $O(n)$ (Theorems \ref{thm::MMS-good-UW-n>3} and \ref{thm::PROP1-good-UW-n>3})}
& Utilitarian \\
\cline{2-5} \xrowht[()]{9pt}
& $\frac{n}{2}$ & 1 & $\Theta(n)$ (Theorem \ref{thm::MMS-good-RW-n>3}) & $\infty$ (Theorem \ref{thm::PROP1-good-RW-n>3})  &  Egalitarian \\
\hline\hline \xrowht[()]{9pt}
\multirow{3}{*}{Chores} & $\infty$ & $n$ & 
\multicolumn{2}{c|}{$\Theta(n)$ (Theorems \ref{thm::MMS-chore-UW} and \ref{thm::PROP1-chore-UW-n>3})}
& Utilitarian \\
\cline{2-5} \xrowht[()]{9pt}
& \multirow{2}{*}{$\infty$} & \multirow{2}{*}{$1$} & \multirow{2}{*}{$\frac{n}{2}$ (Theorem \ref{thm::MMS-chore-RW})} & $\frac{n}{2}$ for $n\neq 3$;  $2$ for $n=3$  & \multirow{2}{*}{Egalitarian}\\
&&&& (Theorem \ref{thm::PROP1-chore-RW-n>3}) &\\
\bottomrule
\end{tabular}
\caption{Tight ratios of the price of fairness regarding EF, PROP, MMS, PROP1. The ratios for EF and PROP are proved in \cite{suksompongFairlyAllocatingContiguous2019} for goods and in \cite{hohneAllocatingContiguousBlocks2021} for chores. 
}
\label{table:main-result:general}
\end{table}


Following the convention of fair division literature, we are also interested in a typical case when $n = 2$.
For both goods and chores, we show that there always exists an allocation that is simultaneously MMS and PROP1, and EF or PROP if the instance admits such an allocation. 
Interestingly, this allocation achieves the {\em tight} ratio of both utilitarian and egalitarian welfare, as shown in Table \ref{table:main-result:n=2}, which are also identical to the results in \cite{suksompongFairlyAllocatingContiguous2019} and \cite{hohneAllocatingContiguousBlocks2021}.
Particularly, regarding egalitarian welfare, for both goods and chores, and both MMS and PROP1,  the PoF is always $1$, which means the optimal egalitarian welfare can be achieved by MMS and PROP1 allocations.
For utilitarian welfare,  the tight ratio of PoF regarding both MMS and PROP1 is $\frac{3}{2}$ when the items are goods, and that is $2$ when the items are chores.

\begin{table}[H]
\centering
\begin{tabular}{c|c|c|c|c}
\toprule \xrowht[()]{8pt}
\multirow{2}{*}{ $n=2$ } & \multicolumn{2}{c|}{Goods (Theorem \ref{thm:n=2:goods})}  & \multicolumn{2}{c}{Chores Theorem \ref{thm:n=2:chores})}  \\
\cline{2-5} \xrowht[()]{9pt}
& Utilitarian & Egalitarian & Utilitarian & Egalitarian\\
\hline \hline  \xrowht[()]{9pt}
EF/PROP/MMS/PROP1 & $\frac{3}{2}$ &  $1$ & $2$ & 1 \\
\bottomrule
\end{tabular}
\caption{Tight ratios of the price of fairness regarding EF, PROP, PROP1, MMS. 
The ratios for EF and PROP are proved in \cite{suksompongFairlyAllocatingContiguous2019} for goods and in \cite{hohneAllocatingContiguousBlocks2021} for chores. 
}
\label{table:main-result:n=2}
\end{table}

\subsection{Other Related Works}

The traditional research on fair division is centered around allocating a divisible resource, denoted by the real interval $[0,1]$, among a set of heterogeneous agents, i.e., cake cutting problem. 
It is well known that an envy-free (and thus proportional) allocation always exists \cite{brams1995envy} and can be found in finite steps \cite{DBLP:conf/focs/AzizM16}. 
Edward Su \cite{edward1999rental} considered the case where it is required that every agent receives a connected piece.
However, when the items become indivisible, it is a different story.
On one hand, people want to understand how to achieve  approximate envy-freeness and proportionality for both unconstrained settings and the settings when there are restrictions on the allocations \cite{journals/jacm/KurokawaPW18,DBLP:conf/sigecom/LiptonMMS04,DBLP:journals/siamdm/PlautR20}. 
On the other hand, a line of research focuses on investigating how to achieve maximal efficiency while ensuring fairness \cite{DBLP:journals/mst/BeiLMS21,DBLP:journals/mst/CaragiannisKKK12}. 
Besides price of fairness, the compatibility between Pareto optimality and fairness is also widely studied \cite{DBLP:journals/tcs/AmanatidisBFHV21,DBLP:conf/sigecom/BarmanKV18,DBLP:journals/teco/CaragiannisKMPS19}. 
In the recent couple of years, motivated by the real-word applications, constraints have been investigated in company with fair division; we refer the readers to a recent survey by Suksompong \cite{DBLP:journals/sigecom/Suksompong21}.
One of the most frequently studied constraints is {\em connectivity}, where the items are assumed to be distributed on a graph \cite{DBLP:conf/ijcai/BouveretCEIP17,caragiannis2022little} and each agent should receive a connected subgraph.
Line structure as we considered in this work is one typical case. 
Some other constraints that have been considered include matroid~\cite{conf/aaai/BiswasB19,conf/aaai/DrorFS21}, cardinality~\cite{conf/ijcai/BiswasB18,journals/corr/abs-2106-07300}, budget-feasible \cite{conf/ijcai/00010G21}, conflicting~\cite{journals/aamas/HummelH22}, interval scheduling~\cite{li2021fair} and more.

\section{Preliminaries}
\label{sec::pre}
Denote by $[k] = 1,\ldots, k$ for any positive integer $k$. 
A fair division instance $\cI = \langle N, E, \mathcal{V} \rangle$ is composed of $n$ agents $N=\{1,\ldots,n\}$ and $m$ indivisible items $E= \{ e_1, \ldots, e_m \}$. 
The items are placed on a path in the order $e_1,\ldots,e_m$ from the left to the right.
For simplicity, denote by $L (k) = L(e_k) = \{ e _ 1, \ldots, e _ k \} $ the items on the left of $e_k$ including $e_k$ and by $ R ( k ) = R(e_k)= \{ e _ { k}, \ldots, e _  m \}$ the items on the right of $e_k$ including $e_k$.
A feasible allocation allocates each agent a connected bundle of items.
Let $\mathcal{S}$ be the set of all connected bundles. 
Each agent $i$ is associated with a valuation function $ v_i : \mathcal{S} \rightarrow \mathbb{R}$ and $\mathcal{V} = \{v_i\}_{i\in N}$. 
Given an item $e$, we say that $e$ is a {\em good} if for every $ i \in N$, $ v_i (e) \geq 0 $ and $e$ is a {\em chore} if for every $ i \in N$, $ v_i (e) \leq 0 $. 
We consider the case where the all items are either goods or chores, and call the instance {\em fair-goods} or {\em fair-chores} respectively.
The valuations are additive, i.e., $ v _ i ( S ) = \sum _ { e \in S} v _ i ( e )$ for any $ S \in \mathcal{S}$. 
For simplicity, we use $ v _ i ( e _ j )$ to represent $ v _ i ( \{e _ j\} )$.
Without loss of generality, it is assumed that the valuations are normalized, i.e., for all $i \in N$, $v_i ( \emptyset) = 0$, and $v_i (E) = 1$ for goods and  $v_i (E) = -1$ for chores.


A feasible allocation $\mathbf{A} = ( A _ 1, \ldots, A _ n )$ is an $n$-partition of $E$ where every bundle is connected, 
i.e., $ A _ i \cap  A _ j =\emptyset$ for any $ i \neq j$, $ \cup _ { i \in N} A _ i  = E $ and $A_i \in \mathcal{S}$ for each $i \in N$. 
If not explicitly stated otherwise, all the allocations in this paper are assumed to be feasible. 
For any bundle $S$ and any positive integer $ k$, denote by $\Pi _ { k } ( S )$ the set of all $k$-connected partitions of $S$, and by $|S|$ the number of items in $S$. 
Given an allocation $\mathbf{A}$, the {\em utilitarian welfare} (UW) of $\mathbf{A}$ is $\UW(\mathbf{A}) = \sum _ { i \in N} v _ i ( A _ i )$, 
and the {\em egalitarian welfare} (EW) of $\mathbf{A}$ is  $\EW(\mathbf{A}) = \min _ { i \in N} v _ i ( A _ i )$.


\subsection{Fairness Notions}

We next introduce the solution concepts.
Note that although the original definitions do not have any constraints,
to adapt to our setting, all allocations are required to be connected.

An allocation is called {\em proportional} if every agent gets value at least the average of her value for all items. 
Formally, an allocation $\mathbf{A} = ( A _ 1, \ldots, A _ n )$ of a fair-goods instance $\cI$ is {\em proportional} (PROP) if for any $ i \in N$, $ v _ i (A _ i) \geq \frac{1}{n} \cdot v _ i (E)$. 
Note that the requirement of PROP is same for both goods and chores, but the relaxation of PROP1 differs. 

\begin{definition} [PROP1]
	For allocating goods, a connected allocation $\mathbf{A} = ( A _ 1, \ldots, A _ n )$  is \emph{proportional up to one item} (PROP1) if for any $ i \in N$, there exists $e \in E \setminus A _ i $ such that $A_i\cup \{e\} \in \mathcal{S}$ and $ v _ i (A _ i \cup \{ e\}) \geq \frac{1}{n} \cdot v _ i (E)$;
	For allocating chores, the allocation is PROP1 if for any $i \in N$, there exists $ e \in A _ i $ such that $A _ i\setminus \{e\} \in \mathcal{S}$ and $ v _ i ( A _ i \setminus \{ e \}) \geq \frac{1}{n} \cdot v _ i ( E )$.
\end{definition}

An alternative relaxation of PROP is maximin share (MMS) fairness.
Given an instance $\cI$, the \emph{maximin share} (MMS) of agent $ i \in N$ is the maximum value she can guarantee if she partitions all items $E$ into $n$ connected bundles but receives the smallest one.
Formally,
$$
\MMS_i(E,n) =  \max\limits_{\mathbf{X} \in \Pi_n(E)}\min\limits_{ j \in N} v _ i ( X _ j ).
$$
If the instance $\cI$ is clear from the context, we write $\MMS_i(\cI)$ or $\MMS_i$ for simplicity. 
Moreover, it is not hard to verify that
for any fair-goods instance $\cI$, $\MMS_i(\mathcal{I}) \leq \frac{1}{n}$, and 
for any fair-chores $\cI$, $\MMS_i(\mathcal{I}) \leq \min\{-\frac{1}{n}, \min_{e\in E}v_i(e)\}$.
For any agent $i$ and a connected $n$-partition $\cA=(A_1,\ldots,A_n)$, if $v_i(A_j)\ge \MMS_i$ for all $j$, $\cA$ is called an $\MMS_i$-defining partition.
Note that although the computation of MMS values without connectivity constraints is NP-hard \cite{DBLP:journals/orl/Woeginger97}, in our setting, it can be computed efficiently.
Our proof of any result in this paper either immediately follows the statement of the result, or can be found in the
Appendix clearly indicated.

\begin{lemma}
\label{lem:mms:computation}
    The value of $\MMS_i$ can be computed in polynomial time for any agent $i\in N$.
\end{lemma}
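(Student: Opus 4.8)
The plan is to exploit the one-dimensional (line) structure of the instance. Because every bundle must be connected, a connected $n$-partition of $E$ is nothing more than a choice of $n-1$ cut positions that split the sequence $e_1,\dots,e_m$ into $n$ contiguous blocks; formally it is described by indices $0=c_0\le c_1\le\cdots\le c_n=m$, where the $j$-th block is $\{e_{c_{j-1}+1},\dots,e_{c_j}\}$. Hence computing $\MMS_i$ reduces to the classical task of cutting a sequence into $n$ consecutive pieces so as to maximise the minimum piece value under $v_i$. Crucially this reduction is uniform across goods and chores, since the objective refers only to the (signed) additive values $v_i(\cdot)$.

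First I would precompute the prefix sums $P(t)=v_i(\{e_1,\dots,e_t\})$ for $t=0,\dots,m$ (with $P(0)=0$), so that the value of any candidate block $\{e_{s+1},\dots,e_t\}$ equals $P(t)-P(s)$ and can be read off in constant time. Then I would set up a dynamic program indexed by the number of blocks used and the prefix of items consumed. Let $f(k,t)$ denote the largest achievable minimum block value when the first $t$ items are partitioned into $k$ contiguous blocks; the base case is $f(1,t)=P(t)$, and the recurrence is
\[
f(k,t)=\max_{0\le s\le t}\ \min\bigl\{\,f(k-1,s),\ P(t)-P(s)\,\bigr\},
\]
where $s$ marks the right endpoint of the first $k-1$ blocks. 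The desired value is then $\MMS_i=f(n,m)$. Filling in the $O(nm)$ table entries, each computed by a maximisation over $O(m)$ split points with the inner block value obtained in $O(1)$ from the prefix sums, takes total time $O(nm^2)$, which is polynomial.

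I do not expect a genuine obstacle, as this is a standard max-min partition DP; the only points requiring care are the two just mentioned: the uniform treatment of goods and chores, and the empty-block convention. Empty blocks (the case $s=t$ in the recurrence) are permitted but never help the optimum: for goods an empty block contributes value $0$, which can only depress the minimum once every item is positive, and for chores concentrating the same total $-1$ into fewer non-empty blocks forces the minimum strictly below $-\tfrac1n$. Thus the DP optimum coincides with the MMS over partitions into non-empty bundles, consistent with the bounds $\MMS_i\le \tfrac1n$ (goods) and $\MMS_i\le -\tfrac1n$ (chores) stated above.

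If a faster procedure were wanted, I would instead binary search on a target threshold $\tau$ ranging over the $O(m^2)$ candidate values $\{P(t)-P(s)\}$, testing feasibility greedily: scanning left to right and cutting as soon as the current block first reaches value at least $\tau$ (with the symmetric rule for chores), and checking whether $n$ such blocks can be formed. Since merging surplus blocks into the last one preserves feasibility, the greedy test is correct and monotone in $\tau$, giving an $O(m\log m)$-time alternative. However, the dynamic program already establishes the polynomial-time claim asserted by the lemma.
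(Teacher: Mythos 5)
Your proof is correct, but it takes a genuinely different route from the paper's. The paper reduces the computation to a threshold decision problem: to test whether $\MMS_i \ge q$, it sweeps the line from left to right, repeatedly cutting off the \emph{largest} prefix block of remaining items with value at least $q$, and checks that the leftover $n$-th block also clears $q$; correctness follows from a block-by-block containment argument, and the value is then recovered by binary search (the paper only spells this out for chores, citing prior work on trees for goods). You instead compute $\MMS_i$ directly via a max--min dynamic program $f(k,t)=\max_{0\le s\le t}\min\{f(k-1,s),\,P(t)-P(s)\}$ over prefix sums, which is uniform for goods and chores, fully self-contained, and sidesteps the question of which thresholds $q$ to search over (though your $O(nm^2)$ bound is slower than the paper's near-linear decision procedure). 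Your handling of the empty-block convention is essentially right but slightly loosely argued for chores: the cleaner statement is that an empty bundle has value $0$ and hence is never the minimizer, and when $m\ge n$ any partition with an empty bundle can be repaired by splitting some block with at least two items, which for chores only raises both parts' values; in any case the paper's own greedy also implicitly permits empty bundles, so nothing is at stake. Your closing remark that a greedy feasibility test plus a search over the $O(m^2)$ candidate values $P(t)-P(s)$ gives a faster algorithm is precisely the paper's method, so the two approaches converge there.
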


When items are goods, Lemma~\ref{lem:mms:computation} can be implied from the results on tree established by \cite{DBLP:conf/ijcai/BouveretCEIP17}, and we give a simple proof for the case of chores.
\begin{definition} [MMS]
	An allocation $\cA=(A_1,\ldots,A_n)$ of an instance $\cI$ is \emph{maximin share }(MMS) fair if for any $i\in N$, $ v _ i ( A _ i ) \geq \mathsf{MMS}_i(\cI)$.
\end{definition}



Our algorithms in this paper imply the existence and efficient computation of a(n) PROP1 and MMS allocation for allocating contiguous blocks of both goods and chores. Since we have multiple item types, welfare objectives and fairness notions, we use  $X \mid W \mid F$ to denote each specific setting, where $X$ is either goods or chores, $W$ is either egalitarian or utilitarian welfare, and $F$ is either MMS or PROP1.

\subsection{Price of Fairness}
The \emph{price of fairness} (PoF) with respect to particular fairness and efficiency notions evaluates the economic welfare loss by enforcing the allocations to be fair.
For the allocation of goods, PoF is the supremum ratio over all instances between the maximum welfare of all allocations and maximum welfare of all fair allocations;
For the allocation of chores, PoF is the supremum ratio over all instances between the maximum welfare of all fair allocations and maximum welfare of all allocations.
Given an instance $\cI$ and a
welfare function $W \in \{ \EW, \UW \}$, denote by $\OPT_W(\cI)$
the maximum welfare with respect to $W$ over all allocations of $\cI$. 
For simplicity, if the instance is clear from the context, we write $\OPT_E$ and $ \OPT_U$ to refer $\OPT_{EW}(\mathcal{I})$ and $\OPT_{UW}(\mathcal{I})$, respectively.
Letting $F \in \{ \text{PROP1, MMS} \}$ be a fairness criterion, denote by $F(\cI)$ the set of all allocations satisfying $F$.

\begin{definition}[PoF]
	The price of fairness for fair-goods instances with respect to fairness criterion $F$ and welfare function $W$ is
	$$
	\PoF(G\mid W \mid F) = \sup\limits_{\cI} \min\limits_{\mathbf{A} \in F(\cI)} \frac{\OPT_W(\cI)}{W(\mathbf{A})}.
	$$
	The price of fairness for fair-chores instances with respect to fairness criterion $F$ and welfare function $W$ is
	$$
	\PoF (C\mid W \mid F) = \sup\limits_{\cI} \min\limits_{\mathbf{A} \in F(\cI)} \frac{W(\mathbf{A})}{\OPT_W(\cI)}.
	$$
	When the setting ($X\mid W \mid F$) is clear from the context, we simply write $\PoF$.
\end{definition}

We remark that when assigning goods, if no fair allocations can achieve a non-zero welfare, the price of fairness is infinite.
The price of fairness with respect to fairness criterion $F$ is also called {\em price of $F$}, i.e., price of PROP1 and price of MMS.


\section{Fair Division of Goods}
\label{sec:goods}
In this section, we discuss the PoF for indivisible goods. 
We start with MMS fairness in Section \ref{sec:goods:MMS} and then PROP1 fairness in Section \ref{sec:goods:PROP1}.
Particularly, we focus on the general case when $n \ge 3$ in this section and defer the case when $n=2$ to Section \ref{sec:n=2}.
For each fairness notion, we design efficient algorithms to prove the upper bound results as shown in Table \ref{table:main-result:general} and provide well-constructed instances to prove the tightness of these bounds.

\subsection{Price of MMS for Indivisible Goods}
\label{sec:goods:MMS}

\subsubsection{Useful Subroutines in the Algorithms}

We first provide two useful subroutines, $\Matching$ and $\MovingKnife$ as shown in Algorithms \ref{ALG:matching} and \ref{ALG::moving-knife}.
They and their variants will be actively used in the following algorithm design. 
Intuitively, given a fair-goods instance $\cI$, $\Matching(\cI,\calpha)$ uses a threshold vector $\calpha = (\alpha_1,\ldots,\alpha_n)$ to identify a set of large items for each agent, and then uses a maximum matching to assign large items to the agents so that each agent gets at most one item. 
Particularly, if $\alpha_i \ge \MMS_i$, then the agents who get allocated by $\Matching$ are happy with MMS fairness. 
$\MovingKnife(\cI,\cA',\calpha)$ is motivated by the moving-knife algorithm \cite{dubinsHowCutCake1961}, where we stand at the left-end of the items and find the closest item such that there exists an agent who is happy (regarding the parameters $\alpha_i$'s) with the contiguous block between left-end and this item. 
As we will see, $\MovingKnife$ can ensure MMS fairness by setting $\alpha_i = \MMS_i(\cI)$ but may produce low utilitarian and egalitarian welfare. 
We can adjust $\alpha_i$'s to increase the welfare guarantee, however, we need to be very careful because otherwise we may not find a connected allocation to satisfy all agents. 
In the following sections, we show how to choose the parameters and combine $\Matching$ and $\MovingKnife$ to satisfy the connectivity requirement and also ensure the  welfare guarantees as shown in Table \ref{table:main-result:general}.

\begin{algorithm}[H]
	\caption{\hspace{-2pt}{$\Matching(\cI,\calpha)$}}
	\label{ALG:matching}
	\begin{algorithmic}[1]
		\REQUIRE Instance $\mathcal{I} = \langle N, E, \mathcal{V} \rangle$, a parameter vector $\calpha = (\alpha_i)_{i\in N}$ where $\alpha_i \ge 0$.
		\ENSURE A partial allocation $\mathbf{A}'$ where each agent is allocated at most one item.
		\STATE Construct a weighted bipartite graph $G = (N\cup E, N \times E )$ where agents are vertices on one side and items are vertices on the other side. 
		For each $i\in N$ and $j\in E$, there is an edge $(i,j)$ with weight $v_i(e_j)$ if $v_i(e_j) \ge \alpha_i$.
		
		\STATE Compute a maximum {\em weighted} matching $\mu$ of $G$ and denote by $\mu(i)$ the item matched to agent $ i \in N $. 
		\label{step:goods:matching:weight}
		If agent $ i $ is unmatched, $\mu ( i ) =\emptyset$. 
		\STATE For each agent $i \in N$, set $A' _ i  \leftarrow \{\mu (i)\}$ if $\mu (i) \neq \emptyset$ and $A'_i  \leftarrow \emptyset$ otherwise.

		\RETURN $\mathbf{A}' = (A'_1,\ldots,A'_n)$.
	\end{algorithmic}
\end{algorithm}

	\begin{algorithm}
	\caption{\hspace{-2pt}{$\MovingKnife(\cI,\cA',\calpha)$}}
	\label{ALG::moving-knife}
	\begin{algorithmic}[1]
		\REQUIRE Instance $\mathcal{I} = \langle N, E, \mathcal{V} \rangle$, a partial allocation $\cA'$ of $Q \subseteq E$ on $P \subseteq N$ with $|P| = |Q| < n$, 
		a parameter vector $\calpha = (\alpha_i)_{i\in N\setminus P}$ where $\alpha_i \ge 0$.
		\ENSURE A complete allocation $\mathbf{A}$.
		\STATE Initialize $ \bar{N} \leftarrow N\setminus P$, $ \bar{E} \leftarrow E\setminus Q $, $ p _ 0 \leftarrow 0 $;
		\WHILE{$|\bar{N} | \geq 1 \And \bar{E}  \neq \emptyset$}
		\STATE Find the smallest index $p > p _ 0 $ such that there is an agent $ i $ with value $ v _ i ( S ) \geq \alpha_i$ on a connected bundle $ S \subseteq L (p) \cap \bar{E} $. 
		If there are multiple agents, choose the one with the largest value on $S$.
		If no such $p$ is found, break and return ``{\bf Error}''.
		\label{step:movingknife:error}
		\STATE Set $ A _ i \leftarrow S $ and update $ \bar{N} \leftarrow \bar{N}  \setminus \{ i \} $, $ \bar{E}  \leftarrow \bar{E} \setminus A _ i $, $ p _ 0 \leftarrow p $.
		\ENDWHILE
		\RETURN $\mathbf{A} = (A_1,\ldots,A_n)$.
	\end{algorithmic}
\end{algorithm}

\subsubsection{$\text{Goods}\mid\text{Utilitarian}\mid\text{MMS}$}

We first design an algorithm (as shown in Algorithm \ref{ALG::MMS-good-utilitarian}) to compute an MMS fair allocation that ensures constant utilitarian welfare, 
which implies the $O(n)$ upper bound since the valuations are normalized and the utilitarian welfare of any allocation is no more than $n$. 
We then complement this result by showing that no algorithm can ensure better than $\Theta(\frac{1}{\sqrt{n}})$ fraction of the optimal utilitarian welfare.
To design such an algorithm, our first failed trail is to run $\MovingKnife$ by setting $\alpha_i=\max\{\MMS_i,\frac{1}{4n}\}$ for all agents $i$ and $\cA' = \emptyset$, so that every agent is satisfied regarding MMS fairness and has value no smaller than $\frac{1}{4n}$ which implies constant utilitarian welfare.
However, by setting $\alpha_i=\max\{\MMS_i,\frac{1}{4n}\}$, $\MovingKnife$ may not return a feasible allocation.
For example, consider an instance with $N=\{1,2\}$, $E=\{e_1,e_2\}$, and both agents have value $\epsilon \ll \frac{1}{8}$ for $e_1$ and $1-\epsilon$ for $e_2$.
Since $\MMS_i = \epsilon$ and $\alpha_i = \frac{1}{8}$, $\MovingKnife$ allocates both $e_1$ and $e_2$ to one of the agents and the other agent gets nothing.
The failure of this trail is because there is one item, i.e., $e_2$ in the example, that is too large so that by allocating it together with some items on its left to one agent, 
there may not be enough items left to ensure MMS values for the remaining agents. 

	\begin{algorithm} 
	\caption{\hspace{-2pt} Goods $\mid$ Utilitarian $\mid$ MMS}
	\label{ALG::MMS-good-utilitarian}
	\begin{algorithmic}[1]
		\REQUIRE instance $\mathcal{I} = \langle N, E, \mathcal{V} \rangle$.
		\ENSURE A connected allocation $\mathbf{A}$.
		\STATE Run Subroutine $\Matching(\cI,\calpha)$ with $\alpha_i = \max\{\MMS_i(\cI), \frac{1}{4n}\}$ and obtain a partial allocation $\cA'$ on agents $N_0\subseteq N$ and items $E_0 \subseteq E$.
		Let $N ^{\prime} = N \setminus N _ 0 $, $ E ^ {\prime} = E \setminus E _ 0 $.
		
		\IF {$\UW(\mathbf{A}') \geq \frac{1}{4}$}
		\STATE Run $\MovingKnife(\cI,\cA',\calpha)$ with $\alpha_i = \MMS_i(\cI)$ for $i \in N'$ and obtain allocation $\cA$. \label{step:goods:u:mms:1}
		\ELSE
		\STATE  Run $\MovingKnife(\cI,\cA',\calpha)$ with $\alpha_i = \max\{\MMS_i(\cI),\frac{1}{4n}\}$ for $i \in N'$ and obtain allocation $\cA$. \label{step:goods:u:mms:2}
		\ENDIF
		\STATE If there are unallocated connected items, arbitrarily assign them to the agent whose bundle is connected with them. %
		\RETURN $\mathbf{A}$
	\end{algorithmic}
\end{algorithm}

Therefore, in Algorithm \ref{ALG::MMS-good-utilitarian}, we first use $\Matching$ to allocate large items by setting each $\alpha_i = \max\{\MMS_i(\cI),\frac{1}{4n}\}$.
Note that if the partial utilitarian welfare of the agents who get allocated by $\Matching$ is at least $\frac{1}{4}$, then it suffices to ensure solely MMS values for the remaining agents and $\MovingKnife$ with $\alpha_i=\MMS_i$ is able to find such a connected allocation. 
If the welfare from $\Matching$ is smaller than $\frac{1}{4}$, to ensure constant total utilitarian welfare, we turn to feed $\alpha_i=\max\{\MMS_i,\frac{1}{4n}\}$ to $\MovingKnife$ so that every agent's value is at least $\frac{1}{4n}$, which implies constant welfare as well. 
Fortunately, given the welfare from $\MovingKnife$ being smaller than $\frac{1}{4}$, we manage to prove that the previous failure will not happen and we are able to allocate each remaining agent a connected block with value no smaller than $\max\{\MMS_i,\frac{1}{4n}\}$. 
We have the main result as follows. 

\begin{theorem}\label{thm::MMS-good-UW-n>3}
	For Goods $\mid$ Utilitarian $\mid$ MMS, $\PoF$ is at least $\Omega(\sqrt{n})$ and at most $ O (n)$. 
\end{theorem}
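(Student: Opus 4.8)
The plan is to prove the two bounds separately: the $O(n)$ upper bound by analyzing Algorithm~\ref{ALG::MMS-good-utilitarian}, and the $\Omega(\sqrt n)$ lower bound by exhibiting one carefully engineered instance. For the upper bound I would first establish that the algorithm always outputs a feasible MMS allocation whose utilitarian welfare is at least a constant, say $\frac14$. The MMS guarantee is the easy part: in every branch the thresholds satisfy $\alpha_i\ge \MMS_i$, so each matched agent receives a single item of value $\ge\alpha_i\ge\MMS_i$, each $\MovingKnife$ agent receives a connected block of value $\ge\alpha_i\ge\MMS_i$, and the final step of attaching leftover connected items to adjacent bundles only raises values (the items are goods), so no agent ever falls below $\MMS_i$. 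The welfare bound splits along the two cases exactly as the algorithm does: if $\UW(\mathbf{A}')\ge\frac14$, the matched agents alone already contribute $\frac14$ while the rest contribute nonnegatively; if $\UW(\mathbf{A}')<\frac14$, then every agent (matched or handled by the knife) ends with value at least $\frac1{4n}$, so the total is at least $n\cdot\frac1{4n}=\frac14$. Since valuations are normalized, $v_i(A_i)\le v_i(E)=1$ gives $\OPT_U\le n$, and a welfare of $\frac14$ immediately yields $\PoF\le n/\tfrac14=O(n)$.

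The real work, and the main obstacle, is proving that $\MovingKnife$ never returns ``Error'', i.e.\ that a feasible completion always exists. In Case~1, where we run $\MovingKnife$ with the slack threshold $\alpha_i=\MMS_i$, I would use an $\MMS_i$-defining partition to certify that the left-to-right scan can always peel off a satisfying block for each remaining agent while leaving enough items for the others. Case~2, with $\alpha_i=\max\{\MMS_i,\frac1{4n}\}$, is the delicate one: the failure mode illustrated before the algorithm is precisely a single item so large that taking it (with its left neighbors) starves the later agents. The key claim I would isolate is that this cannot occur once $\Matching$ has been run, because any item valuable enough to cause such starvation would have been selected by the maximum-weight matching and would have forced $\UW(\mathbf{A}')\ge\frac14$, contradicting that we are in Case~2. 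Formalizing this ``the matching absorbs all dangerous large items'' step, by relating the matching weight to the sizes of the blocks the knife is compelled to take, is where most of the effort goes.

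For the lower bound I would build a single-interval instance realizing the $\sqrt n$ gap. Place all items on one interval and split the $n$ agents into $\sqrt n$ \emph{winners} and $n-\sqrt n$ \emph{losers}. Each loser values the whole interval uniformly with low density, so her share $\MMS_i=\frac1n$ is only attainable by a long connected block; each winner $w$ values uniformly a distinct $\frac1{\sqrt n}$-subinterval $D_w$ at total value $1$, where the $D_w$ partition the interval. The welfare-optimal allocation assigns $D_w$ to winner $w$, giving $\OPT_U=\sqrt n$. The obstacle here is to upper bound the welfare of \emph{every} MMS allocation: enforcing each loser's MMS forces the losers to occupy almost the entire interval, so the winners collectively hold only $O(p/\sqrt n)$ of the $p$ items; since each such item is worth at most $O(\sqrt n/p)$ to a winner, the winners' total contribution is capped at $O(1)$, and the losers' total is likewise at most $1$. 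Hence any MMS allocation has welfare $O(1)$ while $\OPT_U=\sqrt n$, giving $\PoF=\Omega(\sqrt n)$. Verifying the $\MMS_i$ values under the contiguity constraint and making the density-counting argument tight is the crux of this direction.
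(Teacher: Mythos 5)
Your overall structure mirrors the paper's: the same algorithm analysis for the $O(n)$ upper bound (feasibility plus welfare $\ge \frac14$ plus $\OPT_U\le n$), and essentially the same two-population instance for the $\Omega(\sqrt n)$ lower bound (the paper uses $2n$ items, $\sqrt n$ ``winners'' each valuing a disjoint block of $2\sqrt n$ items and $n-\sqrt n$ ``losers'' valuing everything uniformly so that $\MMS_i=\frac1n$ forces each loser to take two items). Your lower-bound counting goes through and matches Lemma~\ref{lem:goods:u:mms:lb}; your Case~1 feasibility plan via an $\MMS_i$-defining partition is exactly the paper's Claim~\ref{claim:reduction} and its induction.

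The genuine gap is in your key claim for Case~2. You assert that any item large enough to starve a later agent ``would have been selected by the maximum-weight matching and would have forced $\UW(\mathbf{A}')\ge\frac14$.'' That is false as stated: a matched item contributes only its own value to $\UW(\mathbf{A}')$, which can be as small as $\frac1{4n}$, so the presence of one (or even several) dangerous items in the matching is entirely consistent with $\UW(\mathbf{A}')<\frac14$. Pursuing this line would not close the argument. What the paper actually proves (Lemma~\ref{lemma::MMS-goods-U-add-2}) is a global accounting for a hypothetical unserved agent $i$ with $\MMS_i<\frac1{4n}$: since $i$ is unmatched under a \emph{maximum-weight} matching, every unmatched item is worth less than $\frac1{4n}$ to her and her value for the set of matched items is at most $\UW(\mathbf{A}')<\frac14$; consequently every block handed out by $\MovingKnife$ is worth less than $\frac1{2n}$ to her (its value minus its rightmost item is below the threshold $\frac{1}{4n}$, and the rightmost item adds less than $\frac1{4n}$). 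Summing, the allocated items are worth less than $\frac14+\frac{n-|N_0|-1}{2n}$ to her, so the unallocated items retain value at least $\frac{n+2|N_0|+2}{4n}$; they form at most $n$ contiguous blocks, and the pigeonhole principle yields one block worth more than $\frac1{4n}$, contradicting that $i$ was never served. (Agents with $\MMS_i\ge\frac1{4n}$ are handled by the Case~1 reduction argument, a sub-case split you also omit.) Without some version of this accounting, your Case~2 feasibility, and hence the $O(n)$ bound, is not established.
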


The upper bound in Theorem \ref{thm::MMS-good-UW-n>3} relies on the following two technical lemmas which show that Algorithm \ref{ALG::MMS-good-utilitarian} can always return a feasible allocation with utilitarian welfare at least $\frac{1}{4}$.



\begin{lemma}\label{lemma::MMS-good-U-add-1}
	For any instance $\cI=\langle N, E, \mathcal{V} \rangle$, if the allocation $\mathbf{A}$ returned by Algorithm~\ref{ALG::MMS-good-utilitarian} is constructed in Step \ref{step:goods:u:mms:1}, 
	then $v_i(A_i) \ge \MMS_i$ for agent $i\in N$.
\end{lemma}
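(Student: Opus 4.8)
The plan is to split $N$ into the agents $N_0$ matched by $\Matching$ and the remaining agents $N' = N \setminus N_0$ handled by $\MovingKnife$, and to establish the bound $v_i(A_i)\ge \MMS_i$ separately for each group. For a matched agent $i\in N_0$, the edge used in the matching guarantees $v_i(\mu(i))\ge \alpha_i=\max\{\MMS_i,\frac{1}{4n}\}\ge \MMS_i$; since $A_i\supseteq\{\mu(i)\}$ and every item is a good, the clean-up step in the last line of Algorithm~\ref{ALG::MMS-good-utilitarian} can only add value, so $v_i(A_i)\ge \MMS_i$ immediately. For an agent $i\in N'$ that actually receives a block $S$ inside $\MovingKnife$, the stopping rule enforces $v_i(S)\ge \alpha_i=\MMS_i$, and again clean-up does not decrease its value. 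Hence the only real content is to show that $\MovingKnife$, invoked in Step~\ref{step:goods:u:mms:1} with thresholds $\alpha_i=\MMS_i$, never returns ``\textbf{Error}'' and allocates \emph{every} agent of $N'$.

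To prove this I would maintain the following invariant across the iterations of the \textbf{while} loop: whenever $k:=|\bar N|$ agents and knife position $p_0$ remain, every surviving agent $j\in\bar N$ still owns at least $k$ pairwise-disjoint connected blocks contained in $R(p_0+1)\cap\bar E$, each worth at least $\MMS_j$ to $j$. For the base case, fix an $\MMS_j$-defining partition of $E$ into $n$ connected pieces; each of the $|N_0|$ matched items lies in exactly one piece, so at least $n-|N_0|=|N'|=k$ pieces are disjoint from $E_0$. These ``clean'' pieces are contiguous, avoid every matched item, and are therefore legal blocks lying in $\bar E$, each of value $\ge\MMS_j$, which establishes the invariant at the first iteration (where $p_0=0$).

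The inductive step is where the argument has real content, and I expect it to be the main obstacle because the knife may overshoot an agent's threshold by almost a full item. Suppose the invariant holds with $k\ge 1$; then any surviving $j$ has at least one satisfying block, so a valid stopping index $p$ exists and the loop does not error. Let $S$ be the carved block (assigned to the winner $i$) and set the new knife to $p$. For every other surviving agent $j$, order its $k$ tracked blocks $Y_1^j,\dots,Y_k^j$ from left to right and let $r_1$ be the right endpoint of $Y_1^j$. The key point is that $r_1\ge p$: otherwise $Y_1^j\subseteq L(r_1)\cap\bar E$ would make $j$ satisfiable at index $r_1<p$, contradicting the minimality of the stopping index $p$. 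Consequently $Y_2^j,\dots,Y_k^j$ all start strictly to the right of $p$, hence lie in $R(p+1)\cap\bar E$ and are disjoint from $S\subseteq L(p)$; these $k-1$ blocks witness the invariant for the next iteration. Thus the invariant is preserved, the loop runs until $\bar N=\emptyset$, and every agent of $N'$ receives a block of value $\ge\MMS_i$. Combining the two groups yields $v_i(A_i)\ge\MMS_i$ for all $i\in N$.
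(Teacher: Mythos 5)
Your proposal is correct. It rests on the same core observation as the paper's proof --- in an $\MMS_j$-defining partition ordered left to right, the leftmost bundle absorbs any minimal prefix that satisfies some agent, so the remaining $n-1$ bundles survive as witnesses --- but you package it differently. The paper isolates this observation as Claim~\ref{claim:reduction} (removing a minimal prefix and one agent does not decrease $\MMS_i$ in the reduced instance) and then runs an induction in which the $\Matching$ phase is informally ``mingled'' into the moving knife by assuming the bundles are taken from left to right; that assumption is not literally true for matched items, which can sit anywhere on the line, so the paper's treatment of $N_0$ is the loosest part of its argument. You instead maintain an explicit counting invariant --- each of the $k$ surviving agents keeps $k$ pairwise-disjoint connected MMS-witness blocks to the right of the knife --- and dispose of the matched items once and for all in the base case via the pigeonhole observation that $|N_0|$ single items can meet at most $|N_0|$ bundles of any defining partition. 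This buys a cleaner and more verifiable handling of the $\Matching$ phase and makes the role of the minimality of the stopping index $p$ completely explicit (your $r_1 \ge p$ step), at the cost of carrying a slightly heavier invariant. One cosmetic remark: your invariant also silently covers the corner case where the loop exits with $\bar E=\emptyset$ but agents remain, since it then forces $\MMS_j=0$ for each of them, so the empty bundle suffices; it is worth stating this explicitly since the while-loop can terminate on either condition.
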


\noindent\begin{proof}
    The correctness of Lemma \ref{lemma::MMS-good-U-add-1} relies on the following claim, which shows that if we remove a connected set of items (starting from the left end) whose value is small enough together with one agent, every agent's MMS value does not decrease in the reduced instance.
    
    \begin{claim}\label{claim:reduction}
    For any agent $i$, let $p\le m$ be the smallest index such that $v _ i (L(p)) \ge \MMS_i(\cI)$, then $\MMS_i(E \setminus L(p), n-1) \ge \MMS_i(\cI)$.
    \end{claim}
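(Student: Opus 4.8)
The plan is to prove the claim by exhibiting an explicit connected $(n-1)$-partition of $E\setminus L(p)$ in which every part has value (to agent $i$) at least $\MMS_i(\cI)$; by the definition of the maximin share this immediately yields $\MMS_i(E\setminus L(p),n-1)\ge \MMS_i(\cI)$. Write $\mu=\MMS_i(\cI)$ and start from an $\MMS_i$-defining partition of the full instance, i.e.\ a connected $n$-partition $(X_1,\dots,X_n)$ with $v_i(X_j)\ge \mu$ for all $j$. Since the items lie on a line and the bundles are connected, pairwise disjoint, and cover $E$, each $X_j$ is an interval, so I may relabel them from left to right; let $0=c_0<c_1<\dots<c_n=m$ be the corresponding cut points, so that $X_1=L(c_1)$.

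The key observation is that $p$ falls inside this leftmost block, i.e.\ $p\le c_1$. Indeed $v_i(L(c_1))=v_i(X_1)\ge\mu$, and since $p$ is by definition the \emph{smallest} index with $v_i(L(p))\ge\mu$, minimality forces $p\le c_1$. Consequently $L(p)\subseteq X_1$, so removing $L(p)$ only erodes the first block and leaves $X_2,\dots,X_n$ untouched, together with the residual interval $X_1\setminus L(p)=\{e_{p+1},\dots,e_{c_1}\}$. I would then define the reduced partition $(Y_1,\dots,Y_{n-1})$ by absorbing this residual interval into the next block: set $Y_1=(X_1\setminus L(p))\cup X_2=\{e_{p+1},\dots,e_{c_2}\}$ and $Y_k=X_{k+1}$ for $k=2,\dots,n-1$. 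Each $Y_k$ is a contiguous interval, the $Y_k$ are pairwise disjoint, and together they cover exactly $E\setminus L(p)$, so this is a valid connected $(n-1)$-partition.

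It remains to check the values, and this is where the goods assumption does the work: since every item is a good, $v_i(X_1\setminus L(p))\ge 0$, hence $v_i(Y_1)=v_i(X_1\setminus L(p))+v_i(X_2)\ge v_i(X_2)\ge\mu$, while $v_i(Y_k)=v_i(X_{k+1})\ge\mu$ for $k\ge 2$ holds directly. Thus every part has value at least $\mu$, which completes the argument. I do not anticipate a genuine obstacle here; the only points needing care are the relabeling step (justifying that a connected partition of a line must be a partition into consecutive intervals, so that writing $X_1=L(c_1)$ is legitimate) and the boundary case $p=c_1$, where the residual interval is empty and $Y_1=X_2$, a case the same inequalities already cover.
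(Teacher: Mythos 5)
Your proposal is correct and follows essentially the same route as the paper: take an $\MMS_i$-defining partition ordered left to right, observe that minimality of $p$ forces $L(p)\subseteq X_1$, and reuse the remaining $n-1$ bundles (the paper phrases the last step as monotonicity of the maximin share under adding the leftover goods $X_1\setminus L(p)$, which you simply unroll by explicitly merging that residual interval into $X_2$). No gap; your version just makes the connectivity and non-negativity of the absorbed items explicit.
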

    
    \noindent\begin{proof}[Proof of Claim \ref{claim:reduction}]
        Consider an arbitrary $\MMS_i(\cI)$-partition $(X_1,\ldots,X_n)$ for agent $i$, where the bundles are ordered from left to right.
        Then $v_i(X_j) \ge \MMS_i(\cI)$ for any bundle $X_j$.
        The condition of $p$ being the smallest index such that $v_i(L(p)) \ge \MMS_i(\cI)$ implies that $L(p) \subseteq X_1$ and $E \setminus L(p) \supseteq \bigcup_{j=2}^n X_j$.
        Note that $(X_2,\dots,X_n)$ is one $(n-1)$-partition of $\bigcup_{j=2}^n X_j$,
        and thus 
        \[
        \MMS_i(E \setminus L(p), n-1) \ge \MMS_i(\bigcup_{j=2}^n X_j, n-1)\ge \MMS_i(\cI),
        \]
        which completes the proof of the claim.
    \end{proof}
    
    Note that the $\Matching$ procedure can be ideologically mingled with the $\MovingKnife$ procedure where an agent takes away a bundle with a single item.
    Without loss of generality, assume the bundles are allocated from left to right, and it is clear that if an agent takes away a bundle, her value is no smaller than her MMS.
    Let $E_0$ be the remaining items at the beginning of some round and suppose agent $i$ takes away $L(p) \cap E_0$ in the current round.
    For any other remaining agent $j$, let $p_j$ be the smallest index such that $v_j(L(p_j) \cap E_0) \ge \MMS_j$, then by the design of the algorithm, $p \le p_j$, which is true no matter agent $i$ gets $L(p) \cap E_0$ from $\Matching$ or $\MovingKnife$.
    Then by Claim \ref{claim:reduction}, in the reduced instance she can still be satisfied regarding MMS.
    By induction, we have the lemma.
    \end{proof}

\begin{lemma}\label{lemma::MMS-goods-U-add-2}
	For any instance $\cI=\langle N, E, \mathcal{V} \rangle$, if the allocation $\mathbf{A}$ returned by Algorithm~\ref{ALG::MMS-good-utilitarian} is constructed in Step \ref{step:goods:u:mms:2}, 
	then $v_i(A_i) \ge \max\{\MMS_i,\frac{1}{4n}\}$ for agent $i\in N$.
\end{lemma}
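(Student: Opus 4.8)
The plan is to observe that, exactly as in Lemma~\ref{lemma::MMS-good-U-add-1}, the only thing that can go wrong is feasibility: every agent that actually receives a bundle is guaranteed the target value by construction. Indeed, a matched agent $i\in N_0$ receives the single item $\mu(i)$, and the edge $(i,\mu(i))$ is present in $\Matching$ only if $v_i(\mu(i))\ge \alpha_i=\max\{\MMS_i,\frac{1}{4n}\}$; since the final clean-up step only adds items to bundles and all values are non-negative, the bound $v_i(A_i)\ge\max\{\MMS_i,\frac1{4n}\}$ is preserved. Likewise, whenever $\MovingKnife$ (run in Step~\ref{step:goods:u:mms:2} with $\alpha_i=\max\{\MMS_i,\frac1{4n}\}$) hands a block $S$ to an agent $i\in N'$, it does so precisely because $v_i(S)\ge\alpha_i$. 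Hence the whole lemma reduces to showing that, under the hypothesis $\UW(\mathbf{A}')<\frac14$, $\MovingKnife$ never returns ``\textbf{Error}'', i.e.\ it manages to allocate all of $N'$.

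The enabling structural fact is a property of the maximum \emph{weighted} matching computed by $\Matching$: for every remaining agent $i\in N'$ and every remaining item $e\in E'$ we have $v_i(e)<\alpha_i$. Otherwise the edge $(i,e)$ would be present in $G$ and, since both $i$ and $e$ are unmatched, adding it would strictly increase the weight of $\mu$, contradicting maximality. This ``all remaining items are small'' property immediately yields a bounded-overshoot guarantee: when $\MovingKnife$ selects agent $i$ at the smallest cut index $p$ with block $S$, minimality of $p$ forces the part of $S$ strictly to the left of $e_p$ to have $i$-value below $\alpha_i$, and $v_i(e_p)<\alpha_i$ by the property above, so $\alpha_i\le v_i(S)<2\alpha_i$. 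Because the smallest $p$ is taken over \emph{all} remaining agents, the same minimality shows that any already-allocated $\MovingKnife$ block $A_j$ has value below $2\alpha_{i^\ast}$ for any still-unsatisfied agent $i^\ast$.

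To prove that no error occurs I would argue by contradiction: suppose $\MovingKnife$ stalls while some agent $i^\ast\in\bar N$ is still unallocated and $\bar E\neq\emptyset$. A stall means $v_{i^\ast}(B)<\alpha_{i^\ast}$ for every connected sub-block $B$ of $\bar E$. I then lower-bound $v_{i^\ast}(\bar E)=1-v_{i^\ast}(E_0)-v_{i^\ast}(\text{blocks already handed out})$. Each handed-out $\MovingKnife$ block contributes less than $2\alpha_{i^\ast}$ to $i^\ast$ by the previous paragraph. For the matched items $E_0$ I use a maximum-weight exchange argument: if a matched item $e=\mu(j)$ satisfies $v_{i^\ast}(e)\ge\alpha_{i^\ast}$, then reassigning $e$ from $j$ to the unmatched $i^\ast$ is a valid matching whose weight changes by $v_{i^\ast}(e)-v_j(e)$, so maximality gives $v_j(e)\ge v_{i^\ast}(e)$; summing, the $i^\ast$-value of such items is at most $\UW(\mathbf{A}')<\frac14$, while the remaining matched items (those with $v_{i^\ast}(e)<\alpha_{i^\ast}$) contribute less than $|N_0|\,\alpha_{i^\ast}$. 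Combining these bounds, together with $|N_0|<n$ (which follows from $\UW(\mathbf{A}')<\frac14$ and $v_j(\mu(j))\ge\frac1{4n}$), keeps $v_{i^\ast}(\bar E)$ strictly positive; then a reduction/merging argument in the spirit of Claim~\ref{claim:reduction} is invoked to cope with the gaps that $E_0$ cuts into the line, producing a connected block of $i^\ast$-value at least $\alpha_{i^\ast}$ and contradicting the stall. Feeding this into an induction over the rounds of $\MovingKnife$ (each round removes one agent and, by a Claim~\ref{claim:reduction}-style step, leaves the remaining agents satisfiable) yields that all of $N'$ is allocated.

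The main obstacle is precisely this feasibility step, and within it the delicate case is an agent with $\MMS_i<\frac1{4n}$, for whom $\alpha_i=\frac1{4n}>\MMS_i$ so that $\MovingKnife$ removes strictly more than the $\MMS_i$-prefix used in Claim~\ref{claim:reduction}; the argument of Lemma~\ref{lemma::MMS-good-U-add-1} therefore does not transfer verbatim. Controlling the two resulting sources of loss---the per-agent overshoot and the fragmentation of the line caused by the matched items---is exactly where the constant $\frac14$ is spent, and I expect the constant-chasing in the inequalities above, rather than any conceptual difficulty, to be the part that requires the most care.
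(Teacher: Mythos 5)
Your proposal follows essentially the same route as the paper's proof: reduce the lemma to feasibility of $\MovingKnife$, use maximality of the weighted matching to conclude that every remaining item has value below $\alpha_{i}$ for every remaining agent, bound each handed-out moving-knife block by $2\alpha_{i^*}<\frac{1}{2n}$ and the matched items by $\UW(\mathbf{A}')<\frac14$ (plus, in your accounting, $|N_0|\alpha_{i^*}$ for the small matched items), and derive a contradiction for the delicate case $\MMS_{i^*}<\frac{1}{4n}$. The only step stated too weakly is the last one: keeping $v_{i^*}(\bar E)$ \emph{strictly positive} does not by itself produce a connected block of value $\alpha_{i^*}$; what your own inequalities actually deliver is $v_{i^*}(\bar E)\ge\frac{n+|N_0|+2}{4n}>\frac14$, and since the unassigned items form at most $n$ contiguous blocks, the pigeonhole principle (rather than a Claim~\ref{claim:reduction}-style reduction) gives one block of value exceeding $\frac{1}{4n}=\alpha_{i^*}$, which is exactly how the paper closes this case.
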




According to Lemmas~\ref{lemma::MMS-good-U-add-1} and \ref{lemma::MMS-goods-U-add-2}, if the allocation $\mathbf{A}$ returned 
Algorithm~\ref{ALG::MMS-good-utilitarian} is constructed in Step \ref{step:goods:u:mms:1}, it is straightforward that $\UW(\cA) \ge \frac{1}{4}$; 
if $\mathbf{A}$ is constructed in Step \ref{step:goods:u:mms:2}, 
every agent receives a connected bundle with value at least $\frac{1}{4n}$ and thus we also have $\UW(\mathbf{A}) \geq \frac{1}{4}$.
Since the utilitarian welfare of any connected allocation is at most $n$, the price of MMS with respect to utilitarian welfare is at most $ 4n$.
The lower bound in Theorem \ref{thm::MMS-good-UW-n>3} is because of the following lemma.

\begin{lemma}\label{lem:goods:u:mms:lb}
No algorithm can ensure better than $\Theta(\frac{1}{\sqrt{n}})$ fraction of the optimal utilitarian welfare using MMS fair allocations.
\end{lemma}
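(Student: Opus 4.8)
The plan is to prove the lower bound by exhibiting a single family of fair-goods instances, parameterized by $n=k^2$ (the general case follows by rounding), on which \emph{every} MMS allocation has utilitarian welfare $O(1)$ while $\OPT_U=\Theta(\sqrt n)$; the ratio is then $\Omega(\sqrt n)$, which is precisely the statement that no algorithm restricted to MMS allocations can attain more than an $O(1/\sqrt n)$ fraction of the optimum. Concretely, I place $m=n$ items on the line and cut them into $k$ consecutive regions $I_1,\dots,I_k$, each holding $k$ items. There are two kinds of agents: $k$ \emph{concentrated} agents $c_1,\dots,c_k$, where $c_j$ values every item of $I_j$ at $1/k$ and every other item at $0$; and $n-k$ \emph{spread} agents, each valuing every item at $1/n$. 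Both profiles are normalized, and empty bundles are feasible (the empty set is trivially connected), a point that turns out to be essential.

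Two computations set up the gap. First, $\OPT_U=\sqrt n$: the largest value any agent assigns to an item of $I_j$ is $1/k$ (attained by $c_j$), so $\UW(\mathbf{A})\le\sum_j|I_j|\cdot\tfrac1k=k=\sqrt n$ for any $\mathbf{A}$, and this is achieved by giving each region $I_j$ entirely to $c_j$ and \emph{nothing} to the spread agents (reassigning any $I_j$-item to a spread agent strictly lowers welfare, so the optimum genuinely leaves the spread agents empty). Second, I compute the maximin shares: a spread agent values the $n$ items uniformly, so the best connected $n$-partition is into singletons and $\MMS_i=1/n$; a concentrated agent $c_j$ values nothing outside $I_j$, and since $I_j$ has only $k$ items while an $n$-partition has $k^2$ parts, some part must lie entirely outside $I_j$ and be worth $0$, giving $\MMS_{c_j}=0$. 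This is exactly the tension I want: the spread agents have a \emph{positive} maximin share and are thus genuinely constrained, whereas the concentrated agents, whose value is peaked, have vanishing maximin share and are never obliged to claim their high-density region.

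The crux is to bound $\UW(\mathbf{A})$ for an \emph{arbitrary} MMS allocation $\mathbf{A}$. The positive share $\MMS_i=1/n$ forbids the empty bundle that the optimum exploited: since each item is worth $1/n$ to a spread agent, $v_i(A_i)\ge\MMS_i$ forces $|A_i|\ge 1$ for every spread agent. Hence the $n-k$ spread agents occupy at least $n-k$ of the $n$ items, leaving at most $k$ items in total for the concentrated agents. The spread agents then contribute at most $1$ to the welfare (their bundles hold at most $n$ items, each worth $1/n$), and the concentrated agents contribute at most $k\cdot\tfrac1k=1$ (at most $k$ items, each worth at most $1/k$ to its owner), so $\UW(\mathbf{A})\le 2$ for every MMS allocation. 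Combining with $\OPT_U=\sqrt n$ yields $\OPT_U/\UW(\mathbf{A})\ge\sqrt n/2$ for every MMS allocation, which is the lemma.

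I expect the delicate part to be establishing this welfare bound \emph{uniformly over all} MMS allocations, which rests entirely on the interplay of the two maximin-share values: one must keep the spread valuations diffuse enough that each spread agent's share is positive (so the item-counting argument bites and consumes the line), yet keep the concentrated valuations peaked enough that their shares collapse to $0$ (so no MMS allocation is forced to hand over the high-density regions that would inflate welfare). This mirrors the fundamental obstruction that positive maximin shares demand diffuse valuations while large welfare gaps demand peaked ones, and the choice $|I_j|=k=\sqrt n$ is the precise balance point. Once the construction is fixed, the remaining work—verifying the two share values and the single counting inequality—is elementary.
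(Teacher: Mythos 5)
Your proposal is correct and takes essentially the same approach as the paper: the paper also pits $\sqrt{n}$ ``concentrated'' agents (each valuing a private block of items) against $n-\sqrt{n}$ uniform agents whose positive maximin share forces them to consume almost all items, so that any MMS allocation has welfare $O(1)$ while $\OPT_U = \sqrt{n}$. The only difference is bookkeeping: the paper uses $2n$ items with each uniform agent needing two items to reach $\MMS_i = \frac{1}{n}$, whereas you use $n$ items with one item per spread agent; the mechanism and the resulting $\Omega(\sqrt{n})$ gap are identical.
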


In addition, our proof of Lemma~\ref{lemma::MMS-good-U-add-1} indeed results in a stronger property regarding MMS allocations, which may be independent of interest. Informally, if we assign $k$ arbitrary items to $k$ arbitrary agents we can still allocate the remaining items to remaining agents which is feasible and guarantees MMS for them.

\begin{corollary}\label{lem:pre:reduction}
For any instance $\cI$, any $P \subseteq N$ and $Q \subseteq E$ with $|P| = |Q| = k \le n$, there exists $(X_i)_{i\in N\setminus P} \in \Pi_{n-k}(E\setminus Q)$ such that $v_i(X_i) \ge \MMS_i(\cI)$ and $X_i \in \cS$ for all $i \in N\setminus P$.
\end{corollary}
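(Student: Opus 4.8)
The plan is to derive Corollary~\ref{lem:pre:reduction} by iterating Claim~\ref{claim:reduction} exactly $k$ times, peeling off one pair (agent, connected bundle) per round in the same spirit as the proof of Lemma~\ref{lemma::MMS-good-U-add-1}. The corollary claims that for \emph{arbitrary} $P\subseteq N$ and $Q\subseteq E$ with $|P|=|Q|=k$, after removing $Q$ and the agents $P$ we can still feasibly and MMS-fairly serve the remaining $n-k$ agents. The subtle point is that $Q$ is arbitrary and need not be a leftmost connected block, so Claim~\ref{claim:reduction} cannot be applied to $Q$ directly. My first step would therefore be to reformulate the statement to separate the \emph{existential} reduction (I get to choose which block to peel) from the \emph{adversarial} one (the block $Q$ is handed to me).

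\textbf{The existential core.} First I would prove the clean inductive statement: for any $P'\subseteq N$ of size $k$, there is a connected $(n-k)$-partition $(X_i)_{i\in N\setminus P'}$ of a suitable right-segment of $E$ with $v_i(X_i)\ge\MMS_i(\cI)$ for all $i\in N\setminus P'$. This is what the proof of Lemma~\ref{lemma::MMS-good-U-add-1} already establishes: process the remaining agents left to right; in each round, among the currently remaining agents pick any one $i$ together with the smallest leftmost block $L(p)\cap E_0$ for which $v_i\ge\MMS_i$; by Claim~\ref{claim:reduction} every remaining agent's MMS value in the reduced instance is still at least her original $\MMS_i(\cI)$, so the induction hypothesis on $n-1$ agents applies. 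After $n-k$ rounds every agent in $N\setminus P'$ has a connected bundle of value $\ge\MMS_i(\cI)$, and these bundles are disjoint and leftmost-nested, hence feasible. The key invariant to maintain explicitly is that after each peel the reduced instance $\langle \bar N,\bar E\rangle$ satisfies $\MMS_i(\bar E,|\bar N|)\ge\MMS_i(\cI)$ for all surviving $i$.

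\textbf{Handling an adversarial $Q$.} To get the full corollary I would observe that the $n-k$ bundles produced above are consumed greedily from the left, so collectively they occupy some initial segment of $E$; the items of $E\setminus Q$ that they use are exactly the ones not reserved for $P$. Concretely, I would argue that it suffices to show the \emph{agents in $P$} can absorb the (at most $k$) blocks of $Q$ together with any interleaving items, leaving the remaining agents a feasible leftmost assignment. The cleanest route is: run the existential procedure of the previous paragraph on the $n-k$ agents $N\setminus P$ over the full line $E$, obtaining disjoint connected bundles $X_i$; since at most $k$ agents were removed, at most $k$ connected "gaps'' remain, and these gaps (which is where $Q$ can live) can be assigned to the $k$ agents of $P$ — we do not care about their values since the corollary only demands MMS for $N\setminus P$. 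The one thing to verify is that the $X_i$ can be chosen to avoid $Q$, which follows because we may always shift each peeled block rightward past any item of $Q$ without decreasing its value (goods are non-negative), at the cost of only enlarging the earlier blocks; Claim~\ref{claim:reduction} guarantees the residual MMS budget survives such shifts.

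\textbf{Main obstacle.} I expect the genuine difficulty to be exactly this reconciliation of \emph{arbitrary} $Q$ with the \emph{leftmost-greedy} structure that Claim~\ref{claim:reduction} forces: the claim is tailored to removing a \emph{leftmost} connected block, whereas $Q$ may be scattered and interior. The safe way to finish is probably not the "shifting'' heuristic but a reindexing argument — contract each maximal run of $Q$-items to a single phantom super-item given to a $P$-agent, apply the existential result to the contracted line with $n-k$ real agents, then expand. I would spend most of the effort checking that contraction preserves both connectivity of the remaining bundles and the inequality $\MMS_i(\cdot,n-1)\ge\MMS_i(\cI)$ at each step, since that inequality is the sole engine driving the induction.
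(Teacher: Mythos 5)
Your skeleton (induct by peeling leftmost blocks and invoking Claim~\ref{claim:reduction} to preserve every surviving agent's $\MMS_i(\cI)$) is the right one, and you correctly locate the crux in the arbitrariness of $Q$. But neither of your two mechanisms for getting past a $Q$-item closes the gap. The ``shift the peeled block rightward past the $Q$-item'' step is unsound: if agent $i$'s block is pushed to start after some $e_q\in Q$, its new right endpoint $p'$ can exceed the minimal satisfying index $p_j$ of another surviving agent $j$; then the removed prefix is no longer contained in the first bundle of $j$'s MMS-defining partition, so the inequality $\MMS_j(E\setminus L(p'),n-1)\ge\MMS_j(\cI)$ --- the ``sole engine'' you rightly identify --- is no longer available. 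Moreover, the items that the shift strands (those between the block's old and new left endpoints) cannot be appended to an earlier non-$P$ agent's bundle without putting a $Q$-item into it. The contraction idea has the same hole: after contracting a $Q$-run to a super-item reserved for a $P$-agent, you still have not said what happens to the genuine items sitting between the current knife position and that super-item when no surviving agent's value on them has yet reached her $\MMS_i$; any connected bundle containing them and reaching past the super-item is disconnected after expansion.

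The missing idea is a case split at each round between two kinds of events. Let $E_0$ be the remaining suffix, $q^*$ the position of the leftmost remaining $Q$-item, and $p^*=\min_j p_j$ over surviving agents $j\in N\setminus P$, where $p_j$ is the smallest index with $v_j(L(p_j)\cap E_0)\ge\MMS_j(\cI)$. If $p^*<q^*$, peel $L(p^*)\cap E_0$ for a minimizing agent exactly as in the proof of Lemma~\ref{lemma::MMS-good-U-add-1}; this block avoids $Q$. If $q^*\le p^*$, do \emph{not} try to satisfy anyone: give the entire prefix $L(q^*)\cap E_0$ (including the stranded non-$Q$ items) to one agent of $P$ and delete that agent together with that $Q$-item. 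In both cases the removed prefix is $L(p)\cap E_0$ with $p\le p_j$ for every surviving $j$, so a mild strengthening of Claim~\ref{claim:reduction} --- removing \emph{any} prefix contained in the first bundle of $j$'s MMS-defining partition, together with one agent, does not decrease $\MMS_j$ --- applies, and the induction goes through; that strengthening must be stated, since the claim as written covers only the minimal prefix. Finally, note that because the second case hands non-$Q$ items to $P$-agents, what this (or any) argument yields is a family of \emph{disjoint} connected bundles avoiding $Q$, extendable to a feasible allocation of all of $E$, not a connected partition of $E\setminus Q$ as the corollary literally demands: for $n=2$, three identically-valued items and $Q=\{e_2\}$, no single connected bundle covers $E\setminus Q$, so do not spend effort trying to make the $X_i$ cover all of $E\setminus Q$.
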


\subsubsection{$\text{Goods}\mid\text{Egalitarian}\mid\text{MMS}$}

Next, we discuss the egalitarian welfare. 
For any instance $\cI$, if $\OPT_E(\mathcal{I}) \ge \frac{1}{n}$, 
it means that the egalitarian welfare maximizing allocation guarantees MMS fairness for every agent since $\MMS_i\leq\frac{1}{n}$ for all $i$.
Thus it suffices to consider the case where $\OPT_E(\mathcal{I}) < \frac{1}{n}$, and we show that we can find an MMS allocation by Algorithm \ref{ALG::MMS-good-egalitarian} where every agent's value is no smaller than $\Theta(\frac{1}{n})$ fraction of $\OPT_E(\mathcal{I})$.
Intuitively, agents with small MMS values are in disadvantage when we use MMS allocations to approximate $\OPT_E$ 
since when such an agent receives a bundle with value MMS we may stop assigning more items to her.
Therefore, Algorithm \ref{ALG::MMS-good-egalitarian} first allocates large items to agents with small MMS values via matchings like Subroutine $\Matching$.
Particularly, 
we set $\alpha_i = + \infty$ if $\MMS_i(\cI) \ge \frac{1}{2n}\cdot\OPT_E$ and $\alpha_i = \frac{1}{2n} \cdot \OPT_E$ otherwise. 
Another difference with $\Matching$ is that the maximum weighted matching used in Step \ref{step:goods:matching:weight} of $\Matching$ is changed to maximum {\em cardinality} matching and all the other steps are the same.
We denote the revised algorithm by $\Matching'$, and for completeness, a formal description is provided in the appendix; see Algorithm \ref{ALG:matching:cardinality}. 
Thereafter, Algorithm \ref{ALG::MMS-good-egalitarian} uses $\MovingKnife$ to allocate the remaining items by properly choosing the parameter. 
Seemingly we are quite conservative on the selection of the parameter $\frac{1}{2n} \cdot \OPT_E$, but it turns out to induce the (asymptotically) best possible PoF ratio.


	\begin{algorithm} 
	\caption{\hspace{-2pt} Goods $\mid$ Egalitarian $\mid$ MMS}
	\label{ALG::MMS-good-egalitarian}
	\begin{algorithmic}[1]
		\REQUIRE An instance $\mathcal{I} = \langle N, E, \mathcal{V} \rangle$ with $\textnormal{OPT}_E < \frac{1}{n}$.
		\ENSURE A connected allocation $\mathbf{A}$.
		\STATE Run Subroutine $\Matching'(\cI,\calpha)$ with $\alpha_i = + \infty$ if $\MMS_i(\cI) \ge \frac{1}{2n} \cdot \OPT_E$ and $\alpha_i = \frac{1}{2n} \cdot \OPT_E$ otherwise, and obtain a partial allocation $\cA'$ on agents $N_0\subseteq N$ and items $E_0 \subseteq E$.
		Let $N ^{\prime} = N \setminus N _ 0 $, $ E ^ {\prime} = E \setminus E _ 0 $.  \label{step:goods:e:mms:1}


		
		\STATE  Run $\MovingKnife(\cI,\cA',\calpha)$ with $\alpha_i = \max\{\MMS_i(\cI),\frac{1}{2n}\OPT_E\}$ for $i \in N'$ and obtain allocation $\cA$. \label{step:goods:e:mms:2}
		
		\STATE  If there are unallocated connected items, arbitrarily assign them to the agent whose bundle is connected with them.
		
		\RETURN $\mathbf{A}$

	\end{algorithmic}
\end{algorithm}

\begin{theorem}\label{thm::MMS-good-RW-n>3}
	For $\text{Goods}\mid\text{Egalitarian}\mid\text{MMS}$ problem, $\text{PoF} = \Theta(n)$.
\end{theorem}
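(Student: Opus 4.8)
The plan is to prove the two bounds separately: an $O(n)$ upper bound through the analysis of Algorithm~\ref{ALG::MMS-good-egalitarian}, and a matching $\Omega(n)$ lower bound through a hand-crafted family of instances. For the upper bound I would first dispose of the easy regime: whenever $\OPT_E \ge \frac1n$, the egalitarian-welfare-maximizing allocation already gives every agent value at least $\OPT_E \ge \frac1n \ge \MMS_i$, so it is itself MMS and the ratio is $1$. Hence the whole difficulty lives in the regime $\OPT_E < \frac1n$, which is exactly the precondition of Algorithm~\ref{ALG::MMS-good-egalitarian}. There I would prove one key lemma: the algorithm returns a feasible connected MMS allocation in which every agent receives value at least $\frac{1}{2n}\OPT_E$. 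Granting this, $\EW(\mathbf{A}) \ge \frac{1}{2n}\OPT_E$ and therefore $\OPT_E/\EW(\mathbf{A}) \le 2n$, giving $\PoF = O(n)$.

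To establish that lemma I would classify the agents by how they are served. An agent matched by $\Matching'$ is a small-MMS agent ($\MMS_i < \frac{1}{2n}\OPT_E$) who receives a single item of value at least $\frac{1}{2n}\OPT_E$; since this exceeds her MMS she is simultaneously MMS-satisfied and above the egalitarian target. A large-MMS agent ($\MMS_i \ge \frac{1}{2n}\OPT_E$) is fed $\alpha_i = \MMS_i$ into $\MovingKnife$, so she gets at least $\MMS_i \ge \frac{1}{2n}\OPT_E$. An unmatched small-MMS agent is fed $\alpha_i = \frac{1}{2n}\OPT_E > \MMS_i$, so if $\MovingKnife$ serves her she clears both thresholds. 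The leftover reassignment only adds items, so no value ever decreases. Consequently everything reduces to showing that $\MovingKnife$ never returns ``Error''.

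This feasibility is the main obstacle. The danger is precisely the one flagged before Algorithm~\ref{ALG::MMS-good-utilitarian}: a single oversized item can force $\MovingKnife$ to overshoot and exhaust the items before all agents are served. My plan is to exploit the maximum-\emph{cardinality} matching to rule this out. Because the matching is maximum, any unmatched small-MMS agent $i$ admits no length-one augmenting path, so every item she values at least $\frac{1}{2n}\OPT_E$ is already matched and removed; hence in the remaining pool $E'$ every item is worth strictly less than $\frac{1}{2n}\OPT_E$ to her. This ``all remaining items are small'' property is exactly what prevents overshoot for these agents, and it lets me run an induction in the spirit of Claim~\ref{claim:reduction} and Lemma~\ref{lemma::MMS-good-U-add-1}: when $\MovingKnife$ hands the current leftmost-satisfiable agent her minimal prefix, I must verify that every other surviving agent still has her threshold reachable in the reduced instance. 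For large-MMS survivors this follows as in Lemma~\ref{lemma::MMS-good-U-add-1}, since the leftmost rule places the prefix inside the first block of their MMS-defining partition; the delicate case is a boosted small-MMS survivor, where I would combine the smallness of the consumed items with Corollary~\ref{lem:pre:reduction} to show that a feasible completion persists. Reconciling the boosted thresholds with the MMS-preservation argument is the technical heart of the proof.

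For the lower bound I would construct, for each $n$, an instance witnessing $\PoF = \Omega(n)$. Observe first that any gap instance must have $\OPT_E < \frac1n$: if the egalitarian optimum $\mathbf{A}^*$ gave every agent at least $\frac1n$ it would be MMS, so some agent $j$ satisfies $\OPT_E \le v_j(A^*_j) < \MMS_j \le \frac1n$. The target is therefore an instance with $\OPT_E = \Theta(\frac1n)$ in which every MMS allocation forces some low-MMS ``victim'' agent down to $\Theta(\frac{1}{n^2})$. The mechanism I would engineer is that the egalitarian optimum reserves a contiguous, victim-valuable region by dropping each of the remaining agents just below her MMS, whereas restoring all those MMS guarantees simultaneously forces the other agents' connected bundles to collectively engulf the victim's region, leaving her only a near-worthless slice. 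The hard part of the lower bound is verifying that \emph{every} MMS allocation, not merely the natural one, is bad; since any single agent's MMS demand is mild (at most $\frac1n$) and hence individually easy to satisfy off the victim's region, the construction must make the monopolization an unavoidable \emph{collective} consequence of connectivity. I expect to exhibit explicit valuations realizing this and to verify the constants by a routine computation, yielding $\PoF = \Theta(n)$ together with the upper bound.
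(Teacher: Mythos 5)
Your overall architecture is the same as the paper's: dispose of the regime $\OPT_E \ge \frac1n$ trivially, show that Algorithm~\ref{ALG::MMS-good-egalitarian} gives every agent $\max\{\MMS_i,\frac{1}{2n}\OPT_E\}$, and pair this with an explicit $\Omega(n)$ instance. However, there is a genuine gap at precisely the step you call the technical heart. Your feasibility argument for an unmatched small-MMS agent $i$ rests on the absence of a length-one augmenting path in the maximum-cardinality matching, which yields only the \emph{weak} statement that every item $i$ values at least $\frac{1}{2n}\OPT_E$ is matched to someone else, so that all items in the \emph{remaining} pool are small for her. What the argument actually requires is the \emph{strong} statement that such an agent values \emph{every} item of $E$ --- including the items carried away by $\Matching'$ --- below $\frac{1}{2n}\OPT_E$. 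To show that agent $i$ still has a connected block worth $\frac{1}{2n}\OPT_E$ left over (so that $\MovingKnife$ must serve her), one has to bound her value on everything already given away; the matched items $E_0$ are not controlled by your observation (she could value some $\mu(j)$ at $0.9$ while being unmatched, leaving her almost nothing). The paper closes this hole with Claim~\ref{lemma::goods-RW-MMS-add1}: any agent with degree at least one in $G$ \emph{is} matched. This is not a graph-theoretic triviality --- it is proved via Hall's theorem combined with the achievability of $\OPT_E$: a deficient set $N^*$ would force $|N^*|$ agents, each of whose connected bundles of value $\ge\OPT_E$ must intersect the neighborhood $D_G(N^*)$ (small items cannot aggregate to $\OPT_E$ without contradicting $\MMS_j<\frac{1}{2n}\OPT_E$), to share fewer than $|N^*|$ items, contradicting the existence of an egalitarian-optimal allocation. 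Note also that Corollary~\ref{lem:pre:reduction} cannot substitute here, since it preserves only $\MMS_i$ and not the boosted threshold $\frac{1}{2n}\OPT_E$; your proposal defers exactly this reconciliation rather than supplying it.

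On the lower bound, the engulfing mechanism you describe is the right one and is exactly what the paper's instance with $kn+1$ items implements ($n-2$ agents with $\MMS=\frac1n$ are each forced to take $k$ of the small items, squeezing one of the two remaining agents off the single large item and down to value $O(\frac{1}{n^{k+1}})$ while $\OPT_E=\Theta(\frac{1}{n^{k}})$). But as written you exhibit no valuations and verify no constants, so the $\Omega(n)$ bound is a plan rather than a proof.
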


To prove Theorem \ref{thm::MMS-good-RW-n>3}, it suffices to prove Lemmas \ref{lem:goods:e:mms:ub} and \ref{lem:goods:e:mms:lb}. 

\begin{lemma}\label{lem:goods:e:mms:ub}
For any instance $\cI=\langle N, E, \mathcal{V} \rangle$, Algorithm \ref{ALG::MMS-good-egalitarian} returns an allocation $\cA$ where $v_i(A_i) \ge \max\{\MMS_i(\cI), \frac{1}{2n}\cdot \OPT_E\}$ for all agents $i \in N$. 
\end{lemma}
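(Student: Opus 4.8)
The plan is to verify the two guarantees $v_i(A_i)\ge \frac{1}{2n}\OPT_E$ and $v_i(A_i)\ge \MMS_i(\cI)$ by splitting the agents into those matched by $\Matching'$ in Step~\ref{step:goods:e:mms:1} and those served by $\MovingKnife$ in Step~\ref{step:goods:e:mms:2}. First I would dispose of the matched agents. By construction every agent with $\MMS_i(\cI)\ge \frac{1}{2n}\OPT_E$ is given threshold $\alpha_i=+\infty$ and therefore has no incident edge in the matching graph, so every matched agent $i\in N_0$ satisfies $\MMS_i(\cI)<\frac{1}{2n}\OPT_E$ and receives a single item of value $v_i(\mu(i))\ge \frac{1}{2n}\OPT_E=\max\{\MMS_i(\cI),\frac{1}{2n}\OPT_E\}$; since the items are goods, the final gluing step only adds value, so the bound persists for the returned bundle $A_i$.

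The substance of the proof is that $\MovingKnife$ serves every remaining agent $i\in N'$ with value at least $\alpha_i=\max\{\MMS_i(\cI),\frac{1}{2n}\OPT_E\}$, which amounts to showing the subroutine never returns {\bf Error}. For the $\MMS$ component I would reuse the argument of Lemma~\ref{lemma::MMS-good-U-add-1} almost verbatim: mingling $\Matching'$ into $\MovingKnife$ as single-item rounds and invoking Claim~\ref{claim:reduction} (equivalently Corollary~\ref{lem:pre:reduction} with $P=N_0$, $Q=E_0$, so that $|P|=|Q|$) shows that after each bundle is removed together with its agent, every remaining agent can still reach her own $\MMS$ value on the remaining items. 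Thus the thresholds $\alpha_i\ge \MMS_i(\cI)$ never by themselves cause a failure, and in particular each high-$\MMS$ agent, for whom $\alpha_i=\MMS_i(\cI)$, is automatically handled.

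The genuinely new ingredient is the egalitarian floor $\frac{1}{2n}\OPT_E$ imposed on the low-$\MMS$ agents of $N'$, and here I would first extract the consequence of using a \emph{maximum cardinality} matching in $\Matching'$: if a low-$\MMS$ agent $i$ is left unmatched, then no item still available in $E'$ can have value $\ge \frac{1}{2n}\OPT_E$ for her, as otherwise that single edge would augment the matching. Hence in $E'$ such an agent sees only \emph{small} items, each of value $<\frac{1}{2n}\OPT_E$, so any minimal connected prefix on which she first reaches the floor overshoots it by less than one more small item, consuming total value strictly below $\frac1n\OPT_E$ in her own valuation. I would then run the induction with the strengthened invariant that the remaining items admit a connected partition giving each remaining agent $\ge\alpha_i$: initialization combines the $\MMS$-feasible partition from Corollary~\ref{lem:pre:reduction} with the optimal egalitarian allocation $\cO=(O_1,\dots,O_n)$, whose every block is worth $\ge\OPT_E$ to its owner, to also meet the floor, and preservation uses the overshoot bound to charge each low-$\MMS$ step against at most a $\frac1n$-fraction of the egalitarian budget carried by the optimal blocks lying ahead.

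I expect the preservation step to be the main obstacle. The difficulty is that a $\MovingKnife$ step serving a \emph{high}-$\MMS$ agent may consume a long prefix cutting across several optimal blocks $O_k$, eroding the egalitarian budget available to the low-$\MMS$ agents that remain; this interaction is precisely why the threshold is $\frac{1}{2n}\OPT_E$ rather than $\frac1n\OPT_E$, since the factor $\frac12$ supplies the slack so that the small-item overshoot together with the prefixes removed on behalf of high-$\MMS$ agents still leaves each surviving low-$\MMS$ agent a connected block worth at least the floor. Once the invariant is maintained to the end, $\MovingKnife$ terminates having satisfied all of $N'$, and since the final gluing cannot decrease any value for goods, the returned allocation meets $v_i(A_i)\ge \max\{\MMS_i(\cI),\frac{1}{2n}\cdot\OPT_E\}$ for every $i\in N$.
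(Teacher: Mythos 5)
There is a genuine gap: the heart of the lemma --- that $\MovingKnife$ in Step~\ref{step:goods:e:mms:2} never fails to serve an unmatched agent $i$ with $\MMS_i<\frac{1}{2n}\OPT_E$ --- is exactly the part you leave unproven. Your proposed invariant (``the remaining items admit a connected partition giving each remaining agent $\ge\alpha_i$'') is not established even at initialization: combining an $\MMS$-feasible partition with the egalitarian optimum $\cO$ into a \emph{single} connected partition meeting both thresholds simultaneously is essentially the statement to be proved, and the ``charge against the optimal blocks lying ahead'' preservation step, which you yourself flag as the main obstacle, is never carried out. The paper avoids this machinery entirely with a one-shot counting argument in agent $i$'s \emph{own} valuation: each matched item is worth $<\frac{1}{2n}\OPT_E$ to $i$, and each bundle $B_j$ handed out by $\MovingKnife$ (whether to a high- or low-$\MMS$ agent) is worth $<\frac{1}{n}\OPT_E$ to $i$, because the prefix minimality in Step~\ref{step:movingknife:error} means $B_j$ minus its rightmost item is worth $<\alpha_i=\frac{1}{2n}\OPT_E$ to $i$ and the rightmost item adds $<\frac{1}{2n}\OPT_E$. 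Summing, everything taken by the other $n-1$ agents is worth $<\OPT_E<\frac{1}{n}$ to $i$, so more than $1-\frac{1}{n}$ of her value survives in at most $n$ connected gaps, and pigeonhole yields a connected block worth $\ge\frac{n-1}{n^2}>\frac{1}{2n}\OPT_E>\MMS_i$. This dissolves your worry about high-$\MMS$ agents eroding the egalitarian budget.

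A second, smaller but still real, flaw: your justification that an unmatched low-$\MMS$ agent sees only small items is both incorrectly argued and too weak. Adding an edge from an unmatched agent to a \emph{matched} item does not augment a maximum cardinality matching, so your reasoning only rules out high-value edges to items in $E'$; but the counting above also needs $v_i(e)<\frac{1}{2n}\OPT_E$ for the matched items $e\in E_0$. The paper's Claim~\ref{lemma::goods-RW-MMS-add1} proves the stronger fact that every agent with positive degree is matched, via Hall's theorem combined with the observation that a connected set of items each worth $<\frac{1}{2n}\OPT_E$ to a low-$\MMS$ agent cannot total $\OPT_E$ for her (else it would split into $n$ connected pieces each worth $\ge\frac{1}{2n}\OPT_E$, contradicting $\MMS_i<\frac{1}{2n}\OPT_E$), so a deficient set $N^*$ would make it impossible for every agent in $N^*$ to reach $\OPT_E$ in the egalitarian optimum. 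Your treatment of the matched agents and of the pure $\MMS$ guarantee (via Claim~\ref{claim:reduction} / Corollary~\ref{lem:pre:reduction}) does match the paper, but without the two pieces above the proof does not go through.
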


Note that Algorithm \ref{ALG::MMS-good-egalitarian} uses the value of $\OPT_E$ which is NP-hard to compute \cite{DBLP:conf/ijcai/BouveretCEIP17}.\footnote{It is proved in \cite{DBLP:conf/ijcai/BouveretCEIP17} that deciding whether an instance admits a PROP allocation (i.e., whether it admits an allocation with egalitarian welfare no smaller than $\frac{1}{n}$) is NP-hard, which is a special case of our problem.}
To make it run in polynomial time, we adopt the following technique by guessing the value of $\OPT_E$, denoted by $o$.
If $o \le \OPT_E$, by Lemma \ref{lem:goods:e:mms:ub}, we can find a feasible allocation where $v_i(A_i) \ge \max\{\MMS_i(\cI), \frac{1}{2n}\cdot o\}$.
Thus, we can start with a trivial upper bound of  $\OPT_E$ by setting $o=1$, and run Algorithm \ref{ALG::MMS-good-egalitarian}. 
If we do not obtain a feasible allocation, we decrease the value of $o$ by a small constant $\epsilon > 0$ and repeat Algorithm \ref{ALG::MMS-good-egalitarian}.
We stop until we obtain a feasible allocation regarding $o$ and it is guaranteed that  $o \ge \OPT_E - \epsilon$.
To ensure $\OPT_E - \epsilon = \Theta(\frac{1}{n}) \cdot \OPT_E$ for all agents, the value of $\epsilon$ cannot be too large. 
One possible way is to set $\epsilon = \frac{1}{2} \cdot \min_{i \in N} \MMS_i$, since $\OPT_E \ge \min_{i \in N} \MMS_i$ and thus $ \OPT_E - \epsilon \ge \frac{1}{2} \cdot \OPT_E$.


\begin{lemma}\label{lem:goods:e:mms:lb}
No algorithm can ensure better than $\Theta(\frac{1}{n})$ fraction of the optimal egalitarian welfare using MMS fair allocations.  
\end{lemma}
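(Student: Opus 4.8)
The goal is to exhibit a family of fair-goods instances $\{\cI_n\}$ (with $n\ge 3$) on which \emph{every} MMS allocation has egalitarian welfare at most $O(\tfrac1n)\cdot\OPT_E(\cI_n)$; together with the matching upper bound (Lemma \ref{lem:goods:e:mms:ub}) this pins $\PoF(G\mid \EW\mid \MMS)=\Theta(n)$ and completes Theorem \ref{thm::MMS-good-RW-n>3}. First I would restrict attention to the regime $\OPT_E<\tfrac1n$: if $\OPT_E\ge\tfrac1n$ then the egalitarian-optimal allocation already gives every agent at least $\tfrac1n\ge\MMS_i$ and is itself MMS, so $\PoF=1$ there. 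Hence all the action lies below the proportional threshold, and the separation must be squeezed out of the gap between the MMS requirement ($\approx\tfrac1n$ for most agents) and the best achievable egalitarian value ($<\tfrac1n$).

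The instance I would build has one distinguished \emph{victim} agent, say agent $1$, together with $n-1$ near-identical \emph{bulk} agents. The design targets are: (i) $\MMS_1=O(\tfrac1{n^2})$, which I arrange by making agent $1$'s value lumpy/concentrated so that in agent $1$'s own $n$-partition some block is nearly worthless (recall $\MMS_i\le\tfrac1n$ always, and a concentrated profile drives it far below); (ii) each bulk agent has $\MMS_i\approx\tfrac1n$, so that meeting all $n-1$ of them consumes value-mass $\approx(n-1)/n$ and leaves only a single spare block of mass $\approx\tfrac1n$ on the line; and (iii) the valuations and the left-to-right positions are chosen so that, by connectivity, the only place this spare block can sit in an MMS allocation is a region worth merely $\Theta(\tfrac1{n^2})$ to agent $1$. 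By contrast, the $\OPT_E$ allocation is obtained by letting each bulk agent dip just below its maximin share (still $\ge\OPT_E$, which is permissible because a non-MMS allocation is not bound by the maximin shares); this frees exactly enough room to relocate the spare block onto a region that agent $1$ values at $\Theta(\tfrac1n)$. The upshot is $\OPT_E=\Theta(\tfrac1n)$ while every MMS allocation is stuck at $\EW(\mathbf{A})=v_1(A_1)=\Theta(\tfrac1{n^2})$, giving the ratio $\Theta(n)$.

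Carrying this out, the routine steps are: computing the maximin shares of both agent types (polynomial and explicit by Lemma \ref{lem:mms:computation}, and here small enough to evaluate directly), and exhibiting the single $\OPT_E$ allocation to certify $\OPT_E=\Omega(\tfrac1n)$, paired with the trivial bound $\OPT_E\le\max$-block-value to certify $\OPT_E=O(\tfrac1n)$ so the ratio is exactly $\Theta(n)$ and not larger. The main obstacle is the universally-quantified step (iii): I must show that \emph{every} feasible allocation meeting all $n-1$ bulk shares forces agent $1$ into the low-value region, not merely that some natural allocation does. I expect to handle this by a rigidity/exchange argument: engineer the bulk profile so that its maximin partition is essentially unique, whence any bundle reaching the bulk threshold must contain a prescribed set of "anchor" items; a counting argument on the $n-1$ anchors then shows the bulk agents occupy all but one contiguous gap, and a short case analysis on where that gap can fall (using that the anchors and agent $1$'s heavy items are interleaved along the line) pins $v_1(A_1)=\Theta(\tfrac1{n^2})$. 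The delicate point is keeping $\MMS_1$ small (which wants agent $1$'s value concentrated) while making agent $1$'s good region unreachable for MMS allocations yet available in $\OPT_E$ (which wants that region contested \emph{through connectivity} rather than through overlapping valuations); balancing these two pressures in the numerical choice of the profile is where I would spend the most care.
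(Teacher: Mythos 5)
Your proposal is a design plan rather than a proof: no instance is exhibited, and you explicitly defer the one step that carries all the difficulty, namely showing that \emph{every} MMS allocation traps the victim in the low-value region. Worse, the architecture you chose --- a single victim plus $n-1$ near-identical bulk agents each requiring $\approx \frac{1}{n}$ --- runs into a concrete obstruction at exactly that step. If the bulk agents share (essentially) one additive measure of total mass $1$ and each needs a contiguous block of measure $\frac{1}{n}$, then satisfying all $n-1$ of them leaves $\frac{1}{n}$ of slack. Carving the victim's valuable region out of the line splits it into two segments, and the fragmentation loss from rounding each segment down to an integer number of $\frac{1}{n}$-blocks is at most one block per segment; the $\frac{1}{n}$ slack absorbs this, so generically the bulk agents can re-tile around the victim's region and an MMS allocation \emph{can} hand her the $\Theta(\frac{1}{n})$ block. (A quick check with $kn$ uniform crumbs plus one item $e^*$ worth $0$ to the bulk agents confirms this: the two sides of $e^*$ still yield $n-1$ blocks of bulk-value $\frac{1}{n}$.) Breaking this would require genuinely non-uniform, interleaved bulk valuations with an essentially unique maximin tiling, and you give no evidence such a profile exists; this is not a ``routine'' numerical balancing act but the entire content of the lemma.

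The paper's construction sidesteps the rigidity issue with a different forcing mechanism: \emph{two} agents with exponentially small $\MMS$ who both place value $1-\frac{1}{n^{k}}$ on a single indivisible item $e_{kn+1}$, alongside $n-2$ bulk agents with $\MMS_i=\frac{1}{n}$ who value only the $kn$ uniform crumbs. In any MMS allocation the bulk agents must each hoard $k$ crumbs, leaving at most $2k$ crumbs for the two special agents, and by pigeonhole one of them misses $e_{kn+1}$ and is stuck with value at most $\frac{2}{n^{k+1}}$. In the benchmark allocation the bulk agents are starved down to one crumb each (legal, since $\OPT_E$ is not bound by their maximin shares), freeing $\approx kn$ crumbs for the losing special agent and giving $\OPT_E \approx \frac{1}{n^{k}}$, whence the ratio $\frac{kn-n+2}{2k}\to\frac{n}{2}$. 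Note also that the separation lives at scale $\frac{1}{n^{k}}$ versus $\frac{1}{n^{k+1}}$, not at $\frac{1}{n}$ versus $\frac{1}{n^{2}}$ as you assumed; insisting on $\OPT_E=\Theta(\frac{1}{n})$ is an unnecessary constraint that makes your version strictly harder to realize. To repair your attempt, the cleanest fix is to adopt a second low-$\MMS$ agent and an indivisible ``jackpot'' item so that the forcing comes from pigeonhole rather than from tiling rigidity.
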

\begin{proof}[Proof of Lemma \ref{lem:goods:e:mms:lb}]
    Consider the following instance $\cI = \langle N, E, \mathcal{V} \rangle$ with $n \ge 3$ agents and $kn+1$ items, where $k \gg n$ is an integer. 
    The valuations are shown in the following table. 
    	\begin{table}[h]
	    \centering
	    \begin{tabular}{c|c|c|c|c|c|c}
	         Items & $e_1$ & $e_2$ & $\cdots$ & $e_{kn-1}$ & $e_{kn}$ & $e_{kn+1}$  \\
	         \hline
	         $v_i(\cdot)$ for $i =1,\ldots, n-2$ & $\frac{1}{n\cdot k}$ & $\frac{1}{n\cdot k}$ & $\cdots$ & $\frac{1}{n\cdot k}$ & $\frac{1}{n\cdot k}$ & 0\\
	         $v_i(\cdot)$ for $i= n-1$ and $n$ & $\frac{1}{k\cdot n^{k+1}}$ & $\frac{1}{k\cdot n^{k+1}}$ & $\cdots$ & $\frac{1}{k\cdot n^{k+1}}$ & $\frac{1}{k\cdot n^{k+1}}$ & $ 1-\frac{1}{n^{k}}$   \\
	    \end{tabular}
	    \caption{The Lower Bound Instance in Lemma \ref{lem:goods:e:mms:lb}}
	\end{table}
    
    On one hand, it is not hard to verify that $\MMS_i = \frac{1}{n}$ for $ i = 1, \ldots, n-2$ and $\MMS_i < \frac{2}{n^{k+1}}$ for $i=n-1$ and $n$. 
    In an MMS allocation, each agent $ i \in [n-2]$ has to receive at least $ k $ items from $\{e_1,\ldots,e_{kn}\}$ and 
    hence one of agents $ i =n-1, n$ cannot obtain item $kn+1$ and thus has value at most $2k \cdot \frac{1}{k n^{k+1}}$.
    That is, an MMS allocation has egalitarian welfare at most $ \frac{2}{n ^ {k + 1}}$. 
    On the other hand, consider the allocation $\mathbf{O}$ with $ O _i = \{ e _ i \}$ for $i\in [n-2]$, $O_{n-1} = \{e_{n-1},\ldots,e_{kn}\}$, and $O_n = \{e_{kn+1}\}$.
    We have $v_i(O_i) = \frac{1}{n k}$ for $i\in [n-2]$, $v_{n-1}(O_{n-1}) = \frac{kn-n+2}{k n ^ {k+1}}$ and $v_n(O_n) = 1-\frac{1}{n^k}$, which means $\OPT_E \ge \EW(\mathbf{O}) = \frac{kn-n+2}{k \cdot n ^ {k+1}}$.
    Therefore, 
    \[
    \PoF \ge \frac{kn-n+2}{2k} \rightarrow \frac{n}{2} \text{ when $k \rightarrow +\infty$,}
    \]
    which completes the proof of the lemma.
\end{proof}

\subsection{Price of PROP1 for Indivisible Goods}

\label{sec:goods:PROP1}

\subsubsection{$\text{Goods}\mid\text{Utilitarian}\mid\text{PROP1}$}

In this section, we will prove the following result. 

\begin{theorem}\label{thm::PROP1-good-UW-n>3}
For $\text{Goods}\mid\text{Utilitarian}\mid\text{PROP1}$, $\PoF$ is 
at least $\Omega(\sqrt{n})$ and at most $O(n)$.
\end{theorem}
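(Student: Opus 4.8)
The plan is to prove the two bounds separately: an $O(n)$ upper bound via a polynomial-time algorithm that always returns a PROP1 allocation of constant utilitarian welfare, and an $\Omega(\sqrt{n})$ lower bound via a hard instance. Since valuations are normalized we have $\OPT_U \le n$, so any PROP1 allocation $\cA$ with $\UW(\cA) \ge c$ for an absolute constant $c$ immediately gives $\PoF \le n/c = O(n)$. Thus the entire upper-bound effort reduces to designing an algorithm that guarantees both PROP1 and constant welfare under the connectivity constraint.

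For the upper bound I would reuse the template of Algorithm \ref{ALG::MMS-good-utilitarian}, combining $\Matching$ and $\MovingKnife$, but re-tune the thresholds to the proportional share $\frac{1}{n}$, which is the relevant target for PROP1. Two elementary observations drive the fairness guarantee: (i) an agent whose connected bundle has value at least $\frac{1}{n}$ is proportional and hence PROP1; and (ii) an agent whose bundle is adjacent to a single item she values at least $\frac{1}{n}$ is automatically PROP1, because appending that one item already meets the proportional share. I would first run $\Matching$ with a threshold of order $\frac{1}{n}$ to hand out the ``large'' items, so that each matched agent is proportional by (i) and contributes a constant share to the welfare; then, exactly as in Algorithm \ref{ALG::MMS-good-utilitarian}, branch on whether the matched agents already accumulate constant welfare. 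If they do not — which forces all items to be comparatively small — I run $\MovingKnife$ with a per-agent threshold of the form $\max\{\tfrac{1}{n},\tfrac{1}{4n}\}$-type floor so that every agent peeled off reaches value at least $\Theta(\tfrac{1}{n})$ and the total is again constant. The welfare bound would then be packaged into two lemmas mirroring Lemmas \ref{lemma::MMS-good-U-add-1} and \ref{lemma::MMS-goods-U-add-2}.

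The hard part of the upper bound is not the welfare accounting but feasibility: one must show $\MovingKnife$ never returns ``Error'', i.e.\ that after peeling off left prefixes for some agents the remaining items still suffice to make every remaining agent PROP1 on a connected block. This is exactly where the MMS analysis invoked the reduction Claim \ref{claim:reduction} and Corollary \ref{lem:pre:reduction}, so I would prove the PROP1 analogue: removing the minimal left prefix that meets one agent's PROP1 target, together with that agent, does not harm the remaining agents' ability to be PROP1 in the reduced instance, and then induct. I expect PROP1's leniency (``up to one item'') to make this cleaner than the MMS case — in particular it neutralises the ``one huge item'' obstruction discussed just before Algorithm \ref{ALG::MMS-good-utilitarian}, since by observation (ii) any agent placed adjacent to a huge item is satisfied for free.

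For the $\Omega(\sqrt{n})$ lower bound I would construct a hard instance in the spirit of Lemma \ref{lem:goods:u:mms:lb}. The cleanest route is to use an instance built only from small items, so that $\MMS_i \approx \frac{1}{n}$ and the PROP1 requirement (value at least $\frac{1}{n}$ up to one small item) essentially coincides with the MMS requirement; the $\Omega(\sqrt{n})$ separation already established for MMS then transfers to PROP1 up to lower-order terms. Concretely I would arrange the line into $\Theta(\sqrt{n})$ blocks so that the welfare-optimal allocation can concentrate value on $\Theta(\sqrt{n})$ agents, while PROP1 forces value to be spread across all $n$ agents and connectivity then caps the best attainable PROP1 welfare at a $\Theta(1/\sqrt{n})$ fraction of $\OPT_U$. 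The main obstacle here is the quantifier: $\PoF$ is governed by the \emph{best} PROP1 allocation, so I must argue that \emph{every} PROP1 allocation — not merely one convenient choice — incurs the $\Theta(\sqrt{n})$ loss, and it is this universality step that requires the careful block design.
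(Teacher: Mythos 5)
Your reduction of the upper bound to ``PROP1 plus constant utilitarian welfare'' is right, and your observations (i) and (ii) are sound, but the core of your plan --- re-tuning $\Matching$/$\MovingKnife$ with thresholds of order $\frac{1}{n}$ and proving a PROP1 analogue of Claim \ref{claim:reduction} --- does not go through, and the paper does something genuinely different. If $\MovingKnife$ hands each agent the \emph{full} smallest prefix worth $\frac{1}{n}$ to her, feasibility fails: with $n=3$ and four items each worth $\frac{1}{4}$ to everyone, the first two agents consume all four items, and the third is left with nothing and no single item worth $\frac{1}{3}$, so she violates PROP1. The MMS reduction claim has no PROP1 analogue because $\MMS_i$ is defined adaptively through partitions of what remains, whereas the PROP1 target $\frac{1}{n}v_i(E)$ is an absolute threshold; each removed prefix can cost a later agent up to $\frac{2}{n}$, and $n-1$ removals can exhaust the whole line. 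If instead you truncate each prefix by its last item (which restores feasibility and PROP1), the welfare guarantee collapses: one can interleave the agents' valued items so that every truncated prefix is worth $o(1/n)$ to its recipient, leaving total welfare $\Theta(1/n)$ and only an $O(n^2)$ bound on $\PoF$. The paper's Lemma \ref{lemma::PROP1-good-UW-add-1} avoids both failures by passing to the fractional cake-cutting relaxation on $N_2$, taking a contiguous proportional allocation $\pi$, and rounding each boundary item both left and right to get two connected PROP1 allocations $\cA^L,\cA^R$ with $\UW(\cA^L)+\UW(\cA^R)\ge \UW(\pi)\ge 1$; this ``two complementary roundings'' step is the idea your plan is missing.

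The lower bound also has a concrete flaw: the claim that on small-item instances ``the PROP1 requirement essentially coincides with the MMS requirement,'' so that the MMS separation transfers. It does not. In the instance of Lemma \ref{lem:goods:u:mms:lb} there are $2n$ items and each uniform agent must take \emph{two} adjacent items to reach $\MMS_i=\frac{1}{n}$, which exhausts the supply; under PROP1 each such agent needs only \emph{one} item (a second can be appended virtually), leaving about $n+\sqrt{n}$ items for the concentrated agents, who can then retain $\Theta(\sqrt{n})$ welfare --- so that instance has PROP1 price of fairness $O(1)$. The paper's Lemma \ref{lem:goods:u:prop1:lb} instead uses only $n+1$ items, so that one item per uniform agent already starves the concentrated agents. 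Your closing point about having to rule out \emph{every} PROP1 allocation is the right concern, but the instance must first be recalibrated to the weaker, one-item-cheaper PROP1 constraint before that argument can be run.
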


Again, we split the proof of Theorem \ref{thm::PROP1-good-UW-n>3} into proving the upper and lower bounds as shown by the following Lemmas \ref{lemma::PROP1-good-UW-add-1} and \ref{lem:goods:u:prop1:lb}. 


\begin{lemma}\label{lemma::PROP1-good-UW-add-1}
	A \textnormal{PROP1} allocation with utilitarian welfare at least $\frac{1}{2}$ can be computed efficiently.
\end{lemma}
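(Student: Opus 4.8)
The plan is to mimic the structure of Algorithm~\ref{ALG::MMS-good-utilitarian}, replacing the MMS thresholds by the PROP1 threshold $\frac1n$. Concretely, I would first run $\Matching(\cI,\calpha)$ with $\alpha_i=\frac1n$ to hand out, via a maximum weight matching, the high-value single items: an agent matched to an item $e$ with $v_i(e)\ge\frac1n=\frac1n v_i(E)$ is already proportional, hence PROP1, and contributes at least $\frac1n$ to the utilitarian welfare. On the remaining agents $N'$ and items $E'$ I would then run $\MovingKnife(\cI,\cA',\calpha)$ with $\alpha_i=\frac1n$, so that each agent served by the knife receives the minimal connected left-prefix of value at least $\frac1n$ and is therefore proportional as well; leftover connected items are appended to an adjacent bundle at the end.

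Two things have to be established. First, feasibility together with PROP1. For this I would reuse the reduction idea behind Claim~\ref{claim:reduction} and Lemma~\ref{lemma::MMS-good-U-add-1}: treating each matched single item as an agent who ``takes away'' a one-item bundle in the left-to-right sweep, I would argue inductively that removing the minimal prefix satisfying the currently served agent never destroys the ability to serve the remaining agents at threshold $\frac1n$. The one subtlety is an agent who ends with an empty (or too small) bundle: here I would invoke the observation that, by the definition of PROP1, an agent with $A_i=\emptyset$ is PROP1 precisely when some single item $e$ satisfies $v_i(e)\ge\frac1n$ (since $A_i\cup\{e\}=\{e\}\in\cS$ trivially). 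Such large-item agents are exactly the ones the matching step is designed to match, so a correct matching guarantees that every unserved agent is nonetheless PROP1.

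Second, the welfare bound $\UW\ge\frac12$, which I would prove by a case analysis on the welfare collected by the matching, exactly as in the MMS–utilitarian proof but with the looser constant. If $\UW(\cA')\ge\frac12$ we are immediately done and only need the knife to complete the allocation into a PROP1 one. Otherwise the matching welfare is small, and I would argue that the knife then serves every remaining agent at threshold $\frac1n$; should the sweep instead get stuck after serving a set $P$, the remaining suffix $\bar E$ satisfies $v_j(\bar E)<\frac1n$ for every unserved $j$, whence each unserved agent values the already-allocated prefix at more than $1-\frac1n\ge\frac12$. Reassigning this whole connected prefix to a single unserved agent then secures $\UW\ge\frac12$, while the matching's prior control of the large items keeps all other agents PROP1 on the remaining suffix.

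The main obstacle, as in the MMS case, is the interplay between connectivity and the matching: carving out the matched single items leaves ``holes'' on the line, and one must verify that the $\MovingKnife$ sweep---together with the appending of leftover items and the possible prefix reassignment---always yields connected bundles for every agent. The welfare analysis in the low-matching-welfare case is the other delicate point, since the reassignment that guarantees $\frac12$ must be shown compatible with PROP1 for the agents displaced onto the small-valued suffix; this is precisely where securing each large-item agent's item in advance via $\Matching$ is essential.
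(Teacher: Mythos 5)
Your overall framework—settle the agents who own an item worth at least $\frac{1}{n}$ (they are PROP1 even with an empty bundle) and then sweep a knife at threshold $\frac{1}{n}$ for the rest—breaks at its central step. The inductive feasibility argument you want to import from Claim~\ref{claim:reduction} relies on the existence of an $n$-connected-partition in which \emph{every} part meets the threshold; for the threshold $\MMS_i$ such a partition exists by definition, but for the threshold $\frac{1}{n}$ it exists only when $\MMS_i=\frac{1}{n}$, which generically fails for indivisible items. Consequently the knife can genuinely get stuck: with $3$ agents having identical valuations over $4$ items each worth $\frac{1}{4}$, no item clears $\frac{1}{3}$, so the matching assigns nothing, the knife gives $\{e_1,e_2\}$ and then $\{e_3,e_4\}$ to the first two agents, and the third agent is left with nothing and no item worth $\frac{1}{3}$—she is not PROP1. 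Your proposed rescue, handing the entire allocated prefix to one starving agent, only relocates the failure: the displaced knife-served agents were, by construction, exactly the agents \emph{without} any item worth $\frac{1}{n}$, so once stripped of their bundles they cannot be PROP1 either (in the example above, reassigning all four items to agent $3$ leaves agents $1$ and $2$ empty-handed and unfixable by a single item). The matching step cannot protect them, because it only ever helps agents who do own a large item. A secondary inaccuracy: when the sweep stalls, the unallocated items form up to $n$ disjoint connected blocks each worth less than $\frac{1}{n}$, so you can only conclude $v_j(\bar E)<1$ rather than $v_j(\bar E)<\frac{1}{n}$; the ``value $>1-\frac1n$ on the prefix'' bound does not follow as stated.

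The paper takes a different route precisely to avoid insisting on full proportionality. After the same reduction to instances where every agent values every item below $\frac{1}{n}$, it builds the corresponding cake-cutting instance on $[0,m]$ with piecewise-constant densities, takes a \emph{connected fractional} proportional division (Dubins--Spanier), and rounds each cut point both ways, producing two integral connected allocations $\cA^L$ and $\cA^R$. Each agent's integral bundle plus the single boundary item covers her fractional piece, which is exactly the PROP1 guarantee (not PROP), so no agent is ever asked to reach $\frac{1}{n}$ outright; and since $\UW(\cA^L)+\UW(\cA^R)\ge \UW(\pi)\ge 1$, one of the two roundings has utilitarian welfare at least $\frac{1}{2}$. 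If you want to keep a purely combinatorial moving knife, you would need to stop each agent at the minimal prefix $S$ such that $S$ together with the \emph{next} item reaches $\frac{1}{n}$ (the discrete analogue of the left rounding), and you would still need the two-sided rounding or an equivalent averaging argument to obtain the $\frac{1}{2}$ welfare bound, since a single such sweep can leave individual agents with value arbitrarily close to $0$.
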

Since the valuations are normalized, the utilitarian welfare of any allocation is no larger than $n$, and thus Lemma \ref{lemma::PROP1-good-UW-add-1} implies the upper bound of $O(n)$.
We provide some intuitions below.
To show there is a PROP1 allocation with utilitarian welfare at least $\frac{1}{2}$, 
we first observe that if an agent has value no less than $\frac{1}{n}$ for some item, then by definition, we do not need to allocate any item to her and PROP1 is trivially satisfied. 
Let $N_1$ be the set of such agents.
If $|N_1| = n$, by assigning all items to an arbitrary agent, we have utilitarian welfare at least $1$.
If $|N_1| = n-1$, by assigning all items to the remaining agent, we have utilitarian welfare at least $1$.
Thus it suffices to consider the case when $N_2 = N \setminus N_1$ and $|N_2| \ge 2$.
Consider a relaxed problem restricted on $N_2$ when the items are divisible, i.e., cake cutting problem. 
It is shown in \cite{dubinsHowCutCake1961} that there is a contiguous fractional allocation $\cA$ that is PROP which means the utilitarian welfare is no smaller than 1.  
Note that in allocation $\cA$, no item can be allocated to more than two agents since all agents in $N_2$ have value smaller than $\frac{1}{n}$ for any single item.
To convert $\cA$ into an integral allocation, we consider two ways by allocating each fractional item to its left or right agent. 
Denote by $\cA^L$ and $\cA^R$ the corresponding integral allocations. 
It can be proved that both of them are contiguous PROP1, and $\UW(\cA^L) + \UW(\cA^R) \ge 1$, 
which means that one of $\cA^L$ and $\cA^R$ has utilitarian welfare at least~$\frac{1}{2}$.

\begin{lemma}\label{lem:goods:u:prop1:lb}
No algorithm can ensure better than $\Theta(\frac{1}{\sqrt{n}})$ fraction of the optimal utilitarian welfare by PROP1 allocations. 
\end{lemma}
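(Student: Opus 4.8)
The plan is to prove the lower bound by exhibiting, for each large $n$, a single instance on which \emph{every} PROP1 allocation has utilitarian welfare $O(1)$ while $\OPT_U = \Omega(\sqrt{n})$. Since $\PoF = \sup_{\cI}\,\OPT_U/\max_{\mathbf{A}\in F(\cI)}\UW(\mathbf{A})$ puts the welfare of the \emph{best} PROP1 allocation in the denominator, it suffices to upper bound the welfare of every PROP1 allocation; this will force the ratio to $\Omega(\sqrt{n})$ and show that no algorithm can guarantee more than a $\Theta(1/\sqrt{n})$ fraction of $\OPT_U$.

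Write $k=\lceil\sqrt{n}\rceil$ and split the agents into $k$ \emph{block agents} $\beta_1,\dots,\beta_k$ and $n-k$ \emph{uniform agents}. The line is built from $k$ consecutive contiguous blocks $B_1,\dots,B_k$, each containing $L=n^2$ tiny items, so there are $m=kL$ items in total. Block agent $\beta_j$ assigns value $1/L$ to every item of $B_j$ and $0$ elsewhere (hence $v_{\beta_j}(E)=1$), while each uniform agent assigns value $1/m$ to every item. The quantitative point is that no single item is worth as much as $1/n$ to anyone ($1/L=1/n^2<1/n$ and $1/m<1/n$), so PROP1 is binding: no agent is satisfied by the empty bundle, and in particular a uniform agent's PROP1 bundle must contain at least $m/n-1$ items (adding one adjacent item contributes only $1/m$).

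For the optimum, giving each $\beta_j$ the whole block $B_j$ yields value $1$ per block agent, so $\OPT_U\ge k=\Omega(\sqrt{n})$. For the upper bound I will argue purely by counting bundle sizes, which is legitimate because $v_i(A_i)$ depends only on $|A_i|$ for uniform agents and on $|A_i\cap B_i|\le|A_i|$ for block agents. In any PROP1 allocation the $n-k$ uniform agents each occupy at least $m/n-1$ items, so together they consume at least $(n-k)(m/n-1)=m-O(n^2)$ items; hence the block agents jointly receive at most $O(n^2)$ items and extract total value at most $O(n^2)/L=O(1)$, whereas the uniform agents contribute at most $\frac1m\sum|A_i|\le 1$. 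Thus every PROP1 allocation has $\UW=O(1)$, giving $\PoF\ge\OPT_U/O(1)=\Omega(\sqrt{n})$.

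The main obstacle is this upper bound, i.e.\ ruling out \emph{all} PROP1 allocations rather than a particular one, and the delicate part is choosing the thresholds $L=n^2$ and $m=kL$ so that three things hold at once: every item is worth less than $1/n$ to everyone (so PROP1 genuinely forces large uniform bundles); the forced uniform consumption $m-O(n^2)$ still leaves total demand strictly below $m$ (so a feasible PROP1 allocation exists, consistent with the guarantee proved earlier); and the residual $O(n^2)$ items available to the block agents translate into only $O(1)$ value. Once this bookkeeping is in place the connectivity constraint costs nothing, since the welfare bound uses only cardinalities of bundles and not their positions on the line. I also expect the very same instance to yield the MMS lower bound of Lemma~\ref{lem:goods:u:mms:lb}, since each block agent has $\MMS=0$ (any $n$-partition leaves some part disjoint from its block) while the binding constraint again comes from the uniform agents, whose $\MMS=\frac1n$ forces the identical counting argument.
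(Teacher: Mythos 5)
Your proof is correct and takes essentially the same route as the paper's: a hard instance with $\Theta(\sqrt n)$ ``block'' agents concentrated on disjoint contiguous blocks (so $\OPT_U=\Omega(\sqrt n)$) and $n-\Theta(\sqrt n)$ uniform agents whom PROP1 forces to absorb all but an $O(1)$-value residue of the line, leaving every PROP1 allocation with $\UW=O(1)$; the paper simply uses $n+1$ items (each uniform agent must take at least one item) rather than your $\lceil\sqrt n\rceil\cdot n^2$ tiny items, but the counting argument is identical. One small aside unrelated to this lemma: in your instance a block agent's maximin share is $\frac{1}{n}$, not $0$ (she can cut her own block of $n^2$ items into $n$ connected pieces of value $\frac{1}{n}$ each), so the closing remark about reusing the same instance verbatim for Lemma~\ref{lem:goods:u:mms:lb} would need adjustment.
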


\subsubsection{$\text{Goods}\mid\text{Egalitarian}\mid\text{PROP1}$}
\begin{theorem}\label{thm::PROP1-good-RW-n>3}
For $\text{Goods}\mid\text{Egalitarian}\mid\text{PROP1}$, $\PoF= \infty$.
\end{theorem}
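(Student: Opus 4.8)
The plan is to exhibit a single instance $\cI$ with $\OPT_E(\cI)>0$ in which \emph{every} PROP1 allocation has egalitarian welfare $0$; by the convention recorded earlier (when no fair allocation attains nonzero welfare, the price of fairness is infinite), this yields $\PoF(G\mid\EW\mid\text{PROP1})=\infty$. The mechanism I would exploit is special to PROP1: an agent who values some single item at least $\frac1n$ is PROP1-satisfied even with an empty bundle, since she may virtually add that item; conversely the ``$+1$ item'' is constrained by connectivity, so a \emph{nonempty} block can only borrow one of its (at most two) neighbours. I would engineer a three-agent gadget in which this tension forces some agent to receive value $0$.

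Concretely, I place a single large item $B$ at the right end of the line. Three ``gadget'' agents are introduced: a \emph{victim} $z$ with $v_z(B)=1$ and $v_z=0$ elsewhere; and two \emph{contenders} $x,y$ with $v_x(B)=v_y(B)=1-\epsilon$, with $v_x,v_y$ equal to $\frac{\epsilon}{2}$ on each of two tiny items $t_1,t_2$ placed immediately to the left of $B$ and $0$ elsewhere, where $0<\epsilon<\frac1n$. The remaining $n-3$ agents are \emph{fillers}, each valuing a private item $p_i$ (placed at the far left, in order) at $1$ and nothing else. First I would check $\OPT_E>0$: assigning each filler its $p_i$, one tiny to each contender, and $B$ to $z$ is feasible and contiguous, with egalitarian welfare $\frac{\epsilon}{2}>0$.

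The heart of the argument is the claim that no PROP1 allocation gives all three of $x,y,z$ positive value; since the obstruction already arises among $x,y,z$, this forces $\EW=0$ in every PROP1 allocation. For the proof I would argue: if $z$ has positive value she must own $B$ (her only valued item); if a contender, say $x$, has positive value she is nonempty and, as $B$ and the $p_i$ are owned by $z$ and the fillers, she owns only tiny items, so $v_x(A_x)\le\epsilon<\frac1n$. Hence to meet PROP1 each contender must virtually add $B$, which---being nonempty---requires her block to be adjacent to $B$. But $B$ sits at an endpoint and has a \emph{unique} neighbour, so at most one block can be adjacent to it; the other contender can neither borrow $B$ nor reach $\frac1n$ from tinies, so PROP1 fails for her. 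Therefore all of $x,y,z$ positive is impossible under PROP1, every PROP1 allocation starves one of them, and $\EW=0$ throughout while $\OPT_E\ge\frac{\epsilon}{2}$.

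The main obstacle, and the point I would be most careful about, is precisely the borrowing rule in the PROP1 definition: the virtually added item may belong to another agent and, for an \emph{empty} bundle, may be \emph{any} item. Placing $B$ at an endpoint neutralises the first loophole (one neighbour, so only one contender can borrow $B$ while keeping positive value), and the ``empty bundle borrows anything'' loophole is harmless here because an empty bundle has value $0$ and so cannot witness positivity for a contender. I would also double-check the normalisation ($v_x(E)=v_y(E)=1$ with the off-$B$ mass capped at $\epsilon<\frac1n$, so that contenders genuinely need $B$), verify that fillers cannot rearrange the line to let both contenders touch $B$, and confirm for well-definedness that a PROP1 allocation exists (e.g.\ fillers take their $p_i$, $y$ takes $B$, $x$ takes both tinies and borrows $B$, and $z$ is left empty and borrows $B$), which indeed has $\EW=0$.
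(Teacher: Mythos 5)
Your construction is correct and establishes the theorem. At a high level it follows the same strategy as the paper---exhibit an instance with $\OPT_E>0$ in which every PROP1 allocation leaves some agent with value $0$, then invoke the stated convention that the price of fairness is infinite when no fair allocation attains positive welfare---but the instance is genuinely different. The paper uses \emph{identical} valuations ($n$ items of value $\frac{1}{n^2}$ followed by one item of value $1-\frac{1}{n}$) and argues globally that almost all agents must receive empty bundles, whereas you use a heterogeneous three-agent gadget (a victim who only values the terminal item $B$, two contenders whose value is concentrated on $B$, plus fillers) and derive the contradiction locally. Both hinge on the same mechanism: $B$ sits at an endpoint of the line, so at most one nonempty bundle can be adjacent to it and hence ``borrow'' it under PROP1's connectivity requirement, and any further agent whose value is concentrated on $B$ is forced to be empty. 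Your version is in fact somewhat more robust: the paper's enumeration of PROP1-feasible nonempty bundles omits the bundle $\{e_1,\dots,e_{n-1}\}$, which satisfies PROP1 by borrowing $e_n$ (value exactly $\frac{1}{n}$), and for $n=3$ its instance consequently admits the PROP1 allocation $(\{e_1,e_2\},\{e_3\},\{e_4\})$ with egalitarian welfare $\frac{1}{9}=\OPT_E$; your gadget has no such degenerate case for any $n\ge 3$, since a contender cut off from $B$'s unique neighbour is capped at $\epsilon<\frac{1}{n}$ even after adding an adjacent item. The only imprecision in your write-up is the aside that a positive-value contender ``owns only tiny items'' because the $p_i$ belong to the fillers---nothing forces that in an arbitrary PROP1 allocation---but the bound you actually use, $v_x(A_x)\le\epsilon$ whenever $B\in A_z$, holds regardless because the $p_i$ are worthless to $x$.
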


\noindent\begin{proof}
Consider an instance with $n \ge 3$ agents and $n+1$ items $ E = \{ e _ 1, \ldots, e _ { n + 1 } \}$. 
Agents have identical valuation functions, where $ v _ i ( e _ j ) = \frac{1}{n^2}$ for $ j =1,\ldots, n$ and $ v _ i ( e _ { n + 1 }) = 1 - \frac{1}{n}$. 
It is straightforward that the optimal egalitarian welfare is at least $\frac{1}{n^2}$. 
There are three ways to satisfy PROP1 for each agent $i$:
(1) obtaining $e_{n+1}$ (possibly with some items on its left) so that PROP is satisfied, 
(2) obtaining $e_n$ (possibly with some items on its left ) so that $v_i(\{e_n, e_{n+1}\})\ge \frac{1}{n}$,
(3) obtaining nothing so that $v_i(\{e_{n+1}\}) \ge \frac{1}{n}$.
Thus, in any PROP1 allocation, at least $n-2$ agents receive nothing and achieve PROP1 via (3), which means the egalitarian welfare is 0 and the price of PROP1 is infinite.
\end{proof}

\paragraph{Remark}
Note that the hard instance designed in Theorem \ref{thm::PROP1-good-RW-n>3} shows a significant difference between MMS and PROP1.
If an agent cannot receive items $e_n$ or $e_{n+1}$, with MMS, she wants to have one item from $\{e_1,\ldots,e_{n-1}\}$;
however, with PROP1, she actually prefers to receive nothing due to the connectivity requirement in the definition of PROP1.

\section{Fair Division of Indivisible Chores}

\label{sec:chores}
In this section, we briefly discuss the results on indivisible chores.
Again, we focus on the general case of $n \ge 3$ and the case with two agents is discussed in Section \ref{sec:n=2}.

\subsection{Price of MMS for Indivisible chores}

We start with MMS fairness, and show that a single polynomial time algorithm (see Algorithm \ref{alg:chores:MMS}) gives the tight ratio of the price of MMS regarding both utilitarian and egalitarian welfare.   



\begin{algorithm} 
	\caption{\hspace{-2pt}{ Chores $\mid$ MMS}}
	\label{alg:chores:MMS}
	\begin{algorithmic}[1]
		\REQUIRE A fair-chores instance $I = \langle N, E, \mathcal{V} \rangle$.
		\ENSURE Allocation $\mathbf{A} = (A_1,\ldots,A_n)$.
		\STATE Initialize $N_0 \leftarrow N$ and $ E _ 0 \leftarrow E$.
		\WHILE{$|N _ 0 | > 1 \And E_ 0 \neq \emptyset$}
		\STATE Denote by $e _ L \in E _ 0 $ the left-most item in $E _ 0$.
		\IF {there exists $ i \in N _ 0$ such that $ v _ i (e _ L ) \geq \max\{\MMS_i, -\frac{2}{n}\}$}
		\label{step:chores:mms:1:con}
		\STATE Let $p$ be the largest index such that there exists an agent $i$ with $ v _ i ( L (p) \cap 
		E _ 0 ) \geq\max\{ \textnormal{MMS} _ i , -\frac{2}{n}\}$. 
		If there is a tie, select the agent $i$ with largest value on $L (p) \cap E _ 0$.
		\STATE $A _ i \leftarrow L(p)\cap E _ 0 $, $E _ 0 \leftarrow E _ 0 \setminus A _ i $ and $ N _ 0 \leftarrow N _ 0 \setminus \{ i \} $.\label{step:chores:mms:1}
		\ELSE
		\STATE Let $ i \in \arg\max_{j \in N_ 0} v _ j (e _ L)$, breaking tie arbitrarily.
		\label{step:chores:mms:2:con}
		\STATE $A _ i \leftarrow \{ e _ L \}$, $ E _0 \leftarrow E _ 0 \setminus \{ e _ L \} $ and $ N _ 0 \leftarrow N_ 0 \setminus \{ i \}$. \label{step:chores:mms:2}
		\ENDIF
		\ENDWHILE

		\IF{$E _ 0 \neq \emptyset$}
		\STATE Let $l$ be the remaining agent in $N_0$.
		\STATE $ A _ {l} \leftarrow E _ 0 $. \label{step:chores:mms:3}
		\ENDIF
		
		\RETURN $\mathbf{A}$
	\end{algorithmic}
\end{algorithm}

Algorithm \ref{alg:chores:MMS} also utilizes the idea of moving knife like Algorithm \ref{ALG::moving-knife}, where we stand at the left end of the items and repeatedly find the {\em farthest} item such that there is an agent whose value for the items from the left end to the current one is no larger than $\max\{\MMS_i, -\frac{2}{n}\}$.
The intuition of the parameter $\max\{\MMS_i, -\frac{2}{n}\}$ is to ensure that the agents with large MMS values do not receive too many items which may induce low welfare.
In the following lemma, we show that Algorithm \ref{alg:chores:MMS} can allocate all items in the way that all agents are happy regarding MMS.

\begin{lemma}\label{Lemma::MMS-UW-chore}
	Algorithm \ref{alg:chores:MMS} returns an MMS allocation in polynomial time.
\end{lemma}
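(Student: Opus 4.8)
The plan is to verify three things separately: that the algorithm halts in polynomial time, that every agent allocated \emph{inside} the while-loop gets at least her maximin share, and that the last agent $l$ (who receives the whole remaining suffix) does too. The running time is immediate: the values $\MMS_i$ are precomputed once via Lemma~\ref{lem:mms:computation}, the loop runs at most $n$ times, and each iteration only scans prefix sums to locate the largest admissible index and the corresponding agent. For an agent $i$ served in the \textbf{if}-branch (Step~\ref{step:chores:mms:1}) we have $v_i(A_i)\ge\max\{\MMS_i,-\tfrac{2}{n}\}\ge\MMS_i$ by construction; for an agent served in the \textbf{else}-branch (Step~\ref{step:chores:mms:2}) the bundle is a single item $e_L$, and since for chores $\MMS_i\le\min_{e\in E}v_i(e)\le v_i(e_L)$, she is again satisfied. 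So the whole difficulty is the last agent.

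Fix the last agent $l$ and an $\MMS_l$-defining connected partition $(Y_1,\dots,Y_n)$ ordered left to right, with $Y_j=\{e_{c_{j-1}+1},\dots,e_{c_j}\}$, $0=c_0<c_1<\dots<c_{n-1}<c_n=m$, and $v_l(Y_j)\ge\MMS_l$. Let $0=q_0<q_1<\dots<q_{n-1}$ be the cut points produced by the algorithm, so the $t$-th block is $B_t=\{e_{q_{t-1}+1},\dots,e_{q_t}\}$ and $l$ finally receives $E_0^{\mathrm{final}}=\{e_{q_{n-1}+1},\dots,e_m\}$ (if this is empty then $v_l(\emptyset)=0\ge\MMS_l$, so assume it is not, in which case exactly $n-1$ blocks precede it). The one fact I would extract and reuse is a uniform maximality property: in every round $t$ while $l$ is still present, writing $e^*_t=e_{q_t+1}$ for the item just right of $B_t$, it holds that $v_l(B_t\cup\{e^*_t\})<\max\{\MMS_l,-\tfrac2n\}$. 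In the \textbf{if}-branch this is exactly the statement that $p=q_t$ was the largest feasible index, so no remaining agent, in particular $l$, is satisfied by $B_t\cup\{e^*_t\}$; in the \textbf{else}-branch $B_t=\{e_L\}$ with $v_l(e_L)<\max\{\MMS_l,-\tfrac2n\}$ (the failure of the \textbf{if}-condition), and adding $e^*_t$ only lowers the value.

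I would then split according to whether the cap is active for $l$. If $\MMS_l\ge-\tfrac2n$, the threshold for $l$ is $\MMS_l$, and since $v_l(e_L)\ge\min_{e\in E}v_l(e)\ge\MMS_l$ the left-most item always satisfies $l$, so the \textbf{if}-branch is taken in every round while $l$ is present and the maximality fact reads $v_l(B_t\cup\{e^*_t\})<\MMS_l$. I would prove $q_t\ge c_t$ for all $t$ by induction with $q_0=c_0=0$: assuming $q_{t-1}\ge c_{t-1}$, if $q_t<c_t$ then $q_{t-1}+1\ge c_{t-1}+1$ and $q_t+1\le c_t$ give $B_t\cup\{e^*_t\}=\{e_{q_{t-1}+1},\dots,e_{q_t+1}\}\subseteq\{e_{c_{t-1}+1},\dots,e_{c_t}\}=Y_t$, whence $v_l(B_t\cup\{e^*_t\})\ge v_l(Y_t)\ge\MMS_l$, contradicting maximality. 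Thus $q_{n-1}\ge c_{n-1}$, so $E_0^{\mathrm{final}}\subseteq Y_n$ and $v_l(E_0^{\mathrm{final}})\ge v_l(Y_n)\ge\MMS_l$.

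The main obstacle is the remaining case $\MMS_l<-\tfrac2n$: here maximality only yields $v_l(B_t\cup\{e^*_t\})<-\tfrac2n$, which no longer contradicts containment in $Y_t$, and the \textbf{else}-branch may now fire while $l$ is present, so the cut-by-cut induction breaks; intuitively, the cap $-\tfrac2n$ forces the earlier agents to take \emph{fewer} chores and thus potentially leaves more load for $l$. I would resolve it by a global averaging bound instead. For every round $t$ we have $v_l(B_t)<-\tfrac2n+|v_l(e^*_t)|$; summing over the $n-1$ blocks and using that $e^*_1,\dots,e^*_{n-1}$ are distinct items with $\sum_{t}|v_l(e^*_t)|\le\sum_{e\in E}|v_l(e)|=1$ yields $\sum_{t=1}^{n-1}v_l(B_t)<-\tfrac{2(n-1)}{n}+1=-1+\tfrac2n$. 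Hence $v_l(E_0^{\mathrm{final}})=-1-\sum_{t}v_l(B_t)>-\tfrac2n>\MMS_l$, which is exactly where the slack $\MMS_l<-\tfrac2n$ is used. Edge cases in which the loop empties $E_0$ early and some agents (including possibly $l$) receive $\emptyset$ are harmless, since $v_i(\emptyset)=0\ge\MMS_i$ for chores.
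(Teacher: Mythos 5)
Your proof is correct. The treatment of the running time, of the agents served inside the while-loop, and of the last agent in the case $\MMS_l\ge-\tfrac2n$ (your cut-index induction $q_t\ge c_t$) coincides in substance with the paper's argument (Claim \ref{claim:chores:alg:1} there is exactly your prefix-domination statement). Where you genuinely diverge is the hard case $\MMS_l<-\tfrac2n$. The paper classifies the first $n-1$ bundles, pairs each bundle $A_i$ with $v_n(A_i)\ge-\tfrac2n$ with the bundle immediately to its right, and runs a parity-sensitive accounting over the maximal connected blocks formed by these pairs (the sets $\mathcal{P},\mathcal{Q}$ and Claim \ref{claim:chores:alg:2}) to conclude $v_n\bigl(\cup_{i<n}A_i\bigr)<-\tfrac{n-1}{n}$ and hence $v_n(A_n)\ge-\tfrac1n$. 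You instead extract the single per-round inequality $v_l(B_t)<-\tfrac2n+|v_l(e_t^*)|$ and sum it, charging each block's overshoot to the distinct item just past its right boundary and using $\sum_t|v_l(e_t^*)|\le 1$; this is shorter and avoids the connected-block bookkeeping entirely. The trade-off is a slightly weaker conclusion: you get $v_l(A_l)>-\tfrac2n$, which suffices for MMS since $\MMS_l<-\tfrac2n$ in this case, whereas the paper's pairing yields the sharper $v_n(A_n)\ge-\tfrac1n$, a fact it later reuses verbatim in the PoF upper-bound proofs (e.g., in the proof of Theorem \ref{thm::MMS-chore-UW}). So as a proof of Lemma \ref{Lemma::MMS-UW-chore} itself your argument is a clean simplification, but one should not substitute it for the paper's version without rechecking the downstream uses of the $-\tfrac1n$ bound.
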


The advantage of Algorithm \ref{alg:chores:MMS} is that it returns allocations that either
directly achieves the best possible ratio of price of MMS for both utilitarian and egalitarian welfare at the same time or leads us to MMS allocations with the desired welfare guarantee.

\begin{theorem}\label{thm::MMS-chore-UW}
	For the problem of Chores $\mid$ Utilitarian $\mid$ MMS, $\PoF = \Theta(n)$.
\end{theorem}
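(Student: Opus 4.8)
The plan is to prove the two bounds separately: the upper bound $\PoF(C\mid \UW\mid \MMS)=O(n)$ via the MMS allocation produced by Algorithm~\ref{alg:chores:MMS}, and a matching lower bound $\Omega(n)$ via a tailored instance. Throughout, recall that for chores every value is non-positive, so $\OPT_U<0$ and any feasible $\mathbf{A}$ has $\UW(\mathbf{A})\le \OPT_U<0$; thus $\UW(\mathbf{A})/\OPT_U=|\UW(\mathbf{A})|/|\OPT_U|\ge 1$, and the goal is to show this ratio is $\Theta(n)$.

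For the upper bound I would fix the allocation $\mathbf{A}$ returned by Algorithm~\ref{alg:chores:MMS}, which is MMS by Lemma~\ref{Lemma::MMS-UW-chore}, and bound $|\UW(\mathbf{A})|$ against $|\OPT_U|$. The first ingredient is a lower bound on the optimal cost that ignores connectivity: since dropping the contiguity constraint can only decrease cost, $|\OPT_U|\ge \sum_{j}\min_{i\in N}|v_i(e_j)|$, i.e. the optimum is at least the sum over items of their cheapest achievable cost. The second ingredient is to charge the cost of $\mathbf{A}$ to this quantity up to a factor $n$, using two structural features of the algorithm: (i) every agent taken in the moving-knife branch (Step~\ref{step:chores:mms:1}) satisfies $|v_i(A_i)|\le 2/n$, because the threshold $\max\{\MMS_i,-2/n\}$ prevents an agent with very negative $\MMS_i$ from absorbing a block of cost up to $|\MMS_i|$; and (ii) every item handled in the single-item branch (Step~\ref{step:chores:mms:2}) is given to the remaining agent of smallest cost, i.e. essentially optimally for that item. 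Combining these, I expect $|\UW(\mathbf{A})|\le O(n)\sum_j\min_i|v_i(e_j)|\le O(n)\,|\OPT_U|$, yielding $\PoF=O(n)$.

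For the lower bound I would construct, for every $n\ge 3$, the following instance. Place a left block $B=\{b_1,\dots,b_K\}$ followed by a right block $D=\{d_1,\dots,d_{K'}\}$ with $K,K'$ large. Agent $1$ regards both blocks as tiny uniform chores, $v_1(b_l)=-\tfrac{2}{nK}$ and $v_1(d_l)=-\tfrac{1-2/n}{K'}$, while every other agent $j\ge 2$ hates $B$ and is indifferent to $D$, namely $v_j(b_l)=-\tfrac1K$ and $v_j(d_l)=0$. All valuations are normalized, and since every item is tiny for every agent one checks $\MMS_i=-\tfrac1n$ for all $i$. The optimum places the whole block $B$ on agent $1$ (its cheapest taker, at cost $2/n$) and $D$ on the remaining, indifferent agents, giving $\OPT_U=-\tfrac2n$. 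However, in any MMS allocation agent $1$ must have $|v_1(A_1)|\le|\MMS_1|=\tfrac1n$, so agent $1$ can absorb at most half of $B$; the remaining $\ge K/2$ items of $B$ are then forced onto agents $j\ge 2$, each costing them $1/K$, for a total cost of at least $1/2$. Hence $|\UW(\mathbf{A})|\ge\tfrac12$ for every MMS allocation $\mathbf{A}$, and $\PoF\ge \tfrac{1/2}{2/n}=\tfrac n4=\Omega(n)$.

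The main obstacle is the upper bound's charging step: unlike the single-item branch, the moving-knife branch and the final leftover agent are assigned sequentially from left to right, so their costs are not obviously comparable item-by-item to $\sum_j\min_i|v_i(e_j)|$. The delicate part is to argue that whenever the algorithm loads an agent up to the $2/n$ threshold, or loads the last agent with the remainder, the items involved are ones that \emph{any} allocation---including the optimum---must also pay a comparable amount for, so that the seemingly additive cost of the fair allocation is in fact absorbed by $n\,|\OPT_U|$. The lower bound is comparatively routine once the instance is verified; the only point requiring care is that every MMS allocation (not merely a convenient one) inherits the cost $1/2$, which follows because the per-agent cap $|v_1(A_1)|\le 1/n$ is forced independently of the allocation.
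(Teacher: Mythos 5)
Your lower bound is sound and is essentially the paper's own construction in disguise (the paper uses $2n$ items of cost $\frac{1}{n^2}$ to agent $1$ and $\frac{1}{2n}$ to the others in the role of your block $B$, plus $n-2$ items in the role of $D$); in both cases the point is that agent $1$'s MMS cap of $-\frac{1}{n}$ forces about half of the block that is cheap for her onto agents who pay a constant total for it, yielding $\PoF\ge\Omega(n)$. The upper bound, however, has a genuine gap. Your plan is to prove $|\UW(\mathbf{A})|\le O(n)\sum_j\min_{i}|v_i(e_j)|$ and combine it with $|\OPT_U|\ge\sum_j\min_i|v_i(e_j)|$. The first inequality is false: the benchmark $\sum_j\min_i|v_i(e_j)|$ can be zero while every connected allocation, MMS or otherwise, has constant cost. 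For instance, with $n=3$ and items $e_1,e_2,e_3$ where $v_1=(0,-1,0)$ and $v_2=v_3=(-\frac12,0,-\frac12)$, each item is free for some agent, yet connectivity forces any allocation to pay at least $\frac12$. So no charging argument against that benchmark can succeed; the obstacle you flag at the end is not a technicality to be smoothed over but a dead end for this choice of benchmark. A second, smaller problem is your ingredient (ii): the single-item branch gives $e_L$ to the cheapest \emph{remaining} agent, not the cheapest agent overall, so even for those items the per-item charge to $\min_{i\in N}|v_i(e_j)|$ is unavailable.

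The paper's route avoids per-item charging entirely. It first disposes of an easy case: if $\OPT_U\ge-\frac1n$, then in the welfare-optimal allocation every agent's non-positive value is at least the total, hence at least $-\frac1n\ge\MMS_i$, so the optimum is itself MMS and the ratio is $1$. Otherwise $|\OPT_U|>\frac1n$, and it suffices to show that the output of Algorithm \ref{alg:chores:MMS} satisfies $\UW(\mathbf{A})\ge-3$, an absolute constant. This follows from three bounds: each moving-knife agent has value at least $-\frac2n$ (your ingredient (i)), so these agents contribute at least $-2$ in total; the single-item agents $i_1,\dots,i_p$ contribute at least $-1$ in total, proved not item-by-item but by comparing all of their bundles to a single fixed reference agent, namely the last one, since the tie-breaking rule gives $v_{i_p}(A_{i_k})\le v_{i_k}(A_{i_k})$ and additivity with normalization gives $\sum_k v_{i_p}(A_{i_k})\ge-1$; and the final agent has value at least $-\frac1n$ by the proof of Lemma \ref{Lemma::MMS-UW-chore}. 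Combining, $\PoF\le 3n$. You have ingredient (i) and part of (ii); what is missing is the dichotomy on $\OPT_U$ versus $-\frac{1}{n}$ and the fixed-reference-agent trick for the single-item branch.
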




\begin{theorem}\label{thm::MMS-chore-RW}
    For the problem of Chores $\mid$ Egalitarian $\mid$ MMS, $\PoF = \frac{n}{2}$.
\end{theorem}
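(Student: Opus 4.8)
The plan is to prove the two-sided bound separately, establishing both $\PoF \le \frac{n}{2}$ and $\PoF \ge \frac{n}{2}$ for Chores $\mid$ Egalitarian $\mid$ MMS. By Lemma~\ref{Lemma::MMS-UW-chore}, Algorithm~\ref{alg:chores:MMS} always returns a feasible MMS allocation, so the upper bound amounts to showing that the egalitarian welfare of this allocation (or of \emph{some} MMS allocation) is at least $\frac{2}{n} \cdot \OPT_E$. For the chores setting, recall that $\PoF = \sup_{\cI} \min_{\cA \in \MMS(\cI)} \EW(\cA)/\OPT_E$, where both quantities are negative, so a ratio of $\frac{n}{2}$ corresponds to the MMS allocation being at most $\frac{n}{2}$ times as costly (in the least-off agent) as the optimal.

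First I would establish the upper bound $\PoF \le \frac{n}{2}$. The key structural fact is the parameter $\max\{\MMS_i, -\frac{2}{n}\}$ used in Algorithm~\ref{alg:chores:MMS}: every agent who receives a bundle in Step~\ref{step:chores:mms:1} gets value at least $-\frac{2}{n}$ (since the moving knife stops as soon as the threshold is met, and items are chores so values only decrease as the block grows). The two cases to control are (i) the agents assigned single items in Step~\ref{step:chores:mms:2}, and (ii) the final leftover agent $l$ in Step~\ref{step:chores:mms:3}. For case (i), I would argue that when an agent is forced to take a single item $e_L$ whose value is worse than $-\frac{2}{n}$, that item is a ``big chore'' and relate its cost to $\OPT_E$: since $\OPT_E \le \MMS_l \le \min_e v_l(e)$ constrains how good any allocation can be, a single large-cost item cannot make the ratio worse than $\frac{n}{2}$. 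For case (ii), the leftover agent could accumulate all remaining chores, and I would bound $v_l(E_0)$ by noting that the items already removed carried away a bounded fraction of total disutility, combined with the fact that $\OPT_E \le -\frac{1}{n}$ always holds for chores; normalizing $v_l(E)=-1$, the leftover value is at least $-1$ while $\OPT_E \ge -1$, but the sharper $\frac{n}{2}$ bound will come from comparing against $\MMS_l \le -\frac{1}{n}$.

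The lower bound $\PoF \ge \frac{n}{2}$ requires an explicit hard instance, analogous in spirit to the construction in Lemma~\ref{lem:goods:e:mms:lb}. I would design an instance where most agents have tiny MMS values but are forced by the MMS constraint and connectivity to absorb a disproportionate share of a heavy chore, while an alternative (non-MMS) allocation spreads the cost evenly to achieve $\OPT_E$ close to $-\frac{2}{n}$. The construction should have one or two agents with a valuation profile that concentrates disutility so that any MMS-feasible connected partition leaves some least-off agent with value near $-1$, giving a ratio approaching $\frac{n}{2}$ as a parameter $k \to \infty$.

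The main obstacle I anticipate is the upper-bound analysis of the leftover agent $l$ in Step~\ref{step:chores:mms:3}, because this agent receives \emph{all} remaining items with no threshold controlling the accumulation, so her cost is not directly capped by $-\frac{2}{n}$. The crux is to show that by the time only agent $l$ remains, the previously allocated bundles have removed enough disutility that $v_l(E_0) \ge -\frac{2}{n} \cdot \OPT_E / \OPT_E$ in the right normalized sense — more precisely, that the worst-off agent overall still satisfies $\EW(\cA) \ge \frac{2}{n}\OPT_E$. I expect this to hinge on a careful accounting: either $l$'s bundle value is controlled because the total disutility is $-1$ and at least $n-1$ agents each absorbed a comparable share, or the tie-breaking rule (``largest value'') ensures the moving knife never strands a single agent with a pathologically bad block. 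Handling the interaction between the $\MMS_i$ term and the $-\frac{2}{n}$ cap in the $\max$, and verifying the stopping condition of Step~\ref{step:chores:mms:1:con} cleanly covers all cases, will be the delicate part.
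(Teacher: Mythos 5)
Your plan has the right skeleton (upper bound via Algorithm~\ref{alg:chores:MMS}, lower bound via a hard instance), but the upper bound as sketched does not go through, and the step you flag as the ``main obstacle'' is not where the real difficulty lies. First, your case~(i) rests on the inequality $\OPT_E \le \MMS_l$, which is false: take two agents and two items with $v_1(e_1)=v_2(e_2)=-1$ and $v_1(e_2)=v_2(e_1)=0$; then $\OPT_E=0$ but $\MMS_i=-1$. Second, the leftover agent of Step~\ref{step:chores:mms:3} is actually already controlled by the proof of Lemma~\ref{Lemma::MMS-UW-chore} (she gets value at least $\max\{\MMS_n,-\tfrac{2}{n}\}\ge -\tfrac{2}{n}$, or $\ge -\tfrac1n$ when $\MMS_n<-\tfrac2n$). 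The genuinely hard agents are those handed a single item in Step~\ref{step:chores:mms:2}, whose value can be arbitrarily close to $-1$; no local argument about that one item versus $\OPT_E$ can save the ratio, and the algorithm's raw output is \emph{not} always within $\tfrac{n}{2}$ of $\OPT_E$. The paper's proof needs three ingredients you do not supply: (a) a trichotomy on $\OPT_E$ --- if $\OPT_E\ge -\tfrac1n$ the optimal allocation is itself MMS, if $\OPT_E\le-\tfrac2n$ the claim is trivial since $\EW\ge-1$, so only $-\tfrac2n<\OPT_E<-\tfrac1n$ remains and there it suffices to exhibit \emph{some} MMS allocation with $\EW\ge-\tfrac12$; (b) a repair step for that middle regime: if the algorithm's worst-off agent $k$ has $v_k(A_k)<-\tfrac12$, one shows $A_k$ is a singleton, that at most $\lceil n/2\rceil$ agents got non-empty bundles, that some earlier agent $i'$ has $v_{i'}(A_k)\ge-\tfrac12$, and then one reassigns $A_k$ to $i'$ and re-runs the algorithm on $A_{i'}$ with the unserved agents; (c) a separate case analysis for $n=3$, where $-\tfrac2n=-\tfrac23<-\tfrac12$ and the generic bound fails.

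For the lower bound you only describe the intended shape of the instance; no construction is given, so nothing is proved. For reference, the paper's instance has $n+2$ items where agent $1$ values $\{e_1,e_2,e_3\}$ at about $-\tfrac1n$ but $\MMS_1=-\tfrac1n$ forbids her from taking both $e_1$ and $e_3$, while every other agent values each of $e_1$ and $e_3$ at $-\tfrac12$; hence $\OPT_E\approx-\tfrac1n$ yet every MMS allocation has $\EW\le-\tfrac12$, giving the ratio $\tfrac n2$. Note this targets $\OPT_E\approx-\tfrac1n$ against $\EW\approx-\tfrac12$, not (as you suggest) $\OPT_E\approx-\tfrac2n$ against $\EW\approx-1$; the latter sits exactly on the boundary of the trivial regime and is harder to realize. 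In short, the proposal identifies the theorem's structure but is missing the decisive ideas for both directions.
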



\subsection{Price of PROP1 for Indivisible chores}

For PROP1 fairness, we also have a single algorithm computing allocations that either guaranteeing the tight ratio of price of PROP1 regarding both utilitarian and egalitarian welfare or lead us to the PROP1 allocations with the desired welfare guarantee.
We provide the detailed algorithm and the proofs of the theorems in the appendix.

\begin{theorem}\label{thm::PROP1-chore-UW-n>3}
For $\text{Chores}\mid\text{Utilitarian}\mid\text{PROP1}$, $\PoF= \Theta(n)$.
\end{theorem}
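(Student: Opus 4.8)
The goal is to show $\PoF(C\mid\UW\mid\text{PROP1})=\Theta(n)$. For the upper bound, the plan is to exhibit a polynomial-time algorithm that produces a connected PROP1 allocation whose utilitarian welfare is within a factor $O(n)$ of $\OPT_U$. Since all valuations are normalized to $v_i(E)=-1$ for chores, every connected allocation satisfies $\UW(\cA)\ge -n$, so it suffices to find a PROP1 allocation $\cA$ with $\UW(\cA)\ge -c$ for some constant $c$, which together with $\OPT_U\le -\frac{1}{n}\cdot n\cdot\frac1n$... more carefully: $\OPT_U$ is negative and at most $-1$ in magnitude per agent is not automatic, so I would instead argue $\OPT_U\ge -1$ (some agent can take everything only if others take nothing, but connectivity forbids empty bundles in general). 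The cleaner route is to show the algorithm's welfare is $\ge \frac{1}{n}\cdot\OPT_U$ directly: since each $v_i(A_i)\ge -1$ trivially and there are $n$ agents, $\UW(\cA)\ge -n\ge n\cdot\OPT_U$ whenever $\OPT_U\ge -1$. I would first establish $\OPT_U\ge -1$ by noting that assigning all chores to the single agent minimizing $|v_i(E)|$ is feasible only when $n=1$; for general $n$ I would instead bound $\OPT_U$ by the fractional/cake-cutting optimum, paralleling Lemma~\ref{lemma::PROP1-good-UW-add-1}.

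The core construction mirrors the goods case: restrict attention to the moving-knife idea adapted to chores (as in Algorithm~\ref{alg:chores:MMS}), but tune the threshold to $-\frac1n$ rather than the MMS value, so that every agent receives a connected block of disutility at least $-\frac1n$ up to one chore, i.e.\ PROP1. Concretely I would sweep from the left, and whenever the accumulated block first reaches disutility worse than $-\frac1n$ for some remaining agent, hand that agent the block \emph{minus its last item}, which guarantees PROP1 by the chores definition ($v_i(A_i\setminus\{e\})\ge-\frac1n=\frac1n v_i(E)$). The key welfare observation is that each satisfied agent's bundle (after removing one item) has disutility at least $-\frac1n$, and there are at most $n$ such bundles, so the total disutility is bounded by a constant, giving $\UW(\cA)=\Omega(-1)$ and hence the $O(n)$ upper bound. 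I would also need to verify feasibility: that the sweep never runs out of items prematurely, which follows because the threshold $-\frac1n$ is loose enough (the total disutility is only $-1$ per agent) that $n$ agents can always be accommodated along the line.

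For the matching lower bound $\Omega(n)$, I would construct a hard fair-chores instance in the spirit of Lemma~\ref{lem:goods:e:mms:lb}, where the optimal (unconstrained-welfare) connected allocation has utilitarian welfare close to $-1$, but every PROP1 allocation is forced to spread the chores so that the total disutility is $\Theta(-n)$. A natural candidate places most of the disutility mass on one heavy chore at the right end together with many tiny chores; the PROP1 requirement forces $\Theta(n)$ agents to each absorb a non-negligible connected block, driving the welfare down by a factor of $n$, while an optimal allocation concentrates the chores on few agents. I would compute $\OPT_U$ and the best PROP1 welfare explicitly and take the ratio as the instance parameters grow.

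The main obstacle I anticipate is the feasibility and welfare analysis of the moving-knife sweep for chores simultaneously: unlike goods, where stopping early only costs welfare, with chores an agent who is handed too large a block (because the threshold was crossed by a single heavy chore) can blow up the total disutility, so the PROP1 ``remove one item'' step must be shown to cap each bundle's disutility by a constant \emph{and} leave enough items for the remaining agents. Reconciling these two demands—keeping every bundle's disutility $\Omega(-1)$ while guaranteeing the line can always be partitioned into $n$ connected pieces each meeting the PROP1 threshold—is the delicate part, and I expect it to require a careful induction analogous to Claim~\ref{claim:reduction}, showing that removing a left-end block and one agent preserves the proportional-threshold feasibility for the reduced instance.
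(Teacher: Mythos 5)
Your high-level plan coincides with the paper's: a left-to-right moving-knife sweep with threshold $-\frac{1}{n}$ for the upper bound (the paper's Algorithm~\ref{alg:chores:PROP1}), and a heavy-chore-plus-many-tiny-chores instance for the lower bound. However, the central step of your upper bound does not go through as written, and you yourself flag the problem without resolving it. Your welfare claim is that each assigned bundle, ``after removing one item,'' has disutility at least $-\frac{1}{n}$, hence the total is a constant. But the agent actually \emph{receives} the bundle including the crossing item, and her value for that single item can be as low as $-1$. Since valuations differ across agents, each of the $n$ agents can lose $\Theta(1)$ on her own crossing item, so the sweep as you describe it only yields $\UW(\cA)=-\Theta(n)$, which gives no bound better than the trivial one. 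The missing idea is the paper's three-way case split: an agent is given the extended block $L(p+1)$ only when its value is still at least $-\frac{2}{n}$ (Step~\ref{step:chores:prop1:p+1}); otherwise she gets only $L(p)$ (Step~\ref{step:chores:prop1:p}) and the heavy item is deferred; and when even the single leftmost item is worse than $-\frac{1}{n}$ for every remaining agent, that item is given alone to the agent who values it most (Step~\ref{step:chores:prop1:single}). The total disutility of all these single-item bundles is then bounded by $-1$ via a separate argument: ordering those recipients left to right, the last of them values every such item no more than its actual recipient does, and her value for all of $E$ is $-1$. Without this mechanism your ``cap each bundle's disutility by a constant'' step is false, and your ``careful induction analogous to Claim~\ref{claim:reduction}'' would not rescue it, because the obstruction is not feasibility of the reduced instance but the per-agent cost of the crossing items.

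Two further points. First, your reduction to ``constant welfare suffices'' is garbled: the inequality $\UW(\cA)\ge -n\ge n\cdot\OPT_U$ requires $\OPT_U\le -1$, not $\OPT_U\ge -1$, and establishing $\OPT_U\ge -1$ is neither needed nor the right move. The correct dichotomy is: if $\OPT_U\ge-\frac{1}{n}$ then the optimal allocation is itself PROP (each summand is nonpositive, so each is at least $-\frac{1}{n}$) and the ratio is $1$; otherwise $|\OPT_U|>\frac{1}{n}$ and a PROP1 allocation with welfare at least $-3$ gives ratio at most $3n$. Second, your lower-bound sketch asserts that PROP1 forces total disutility $\Theta(-n)$, which would contradict the upper bound you are proving (every instance admits a PROP1 allocation with welfare $\ge-3$). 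The paper's instance (Table~\ref{tab::chore-utilitarian-prop1-n>3}) instead achieves $\OPT_U=\Theta(-\frac{1}{n})$ against PROP1 welfare $\Theta(-1)$: agent $1$ has $n$ tiny chores and one heavy chore, PROP1 caps her at roughly $\frac{n}{2}$ tiny chores, and the remaining tiny chores cost the uniform-valuation agents a constant in total. You would need to carry out this calculation explicitly for the lower bound to stand.
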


\begin{theorem}\label{thm::PROP1-chore-RW-n>3}
For $\text{Chores}\mid\text{Egalitarian}\mid\text{PROP1}$, $\PoF = 2$ if $n=3$ and $\PoF=\frac{n}{2}$ if $n\geq 4$.
\end{theorem}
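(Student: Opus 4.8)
The plan is to sandwich $\PoF$ between matching bounds, handling $n\ge 4$ and $n=3$ separately because the target values ($\tfrac n2$ versus $2$) differ. Recall that for chores both $\EW(\cA)$ and $\OPT_E$ are non-positive with $\EW(\cA)\le\OPT_E$, so every ratio in the definition of $\PoF$ is at least $1$. Consequently an upper bound on $\PoF$ reduces to exhibiting, for each instance, a \emph{single} PROP1 allocation $\cA$ with $v_i(A_i)\ge \tfrac n2\cdot\OPT_E$ for every $i$ (respectively $v_i(A_i)\ge 2\cdot\OPT_E$ when $n=3$), whereas a lower bound requires an instance on which every PROP1 allocation is forced to be this bad.

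For the upper bound I would analyse a single moving-knife algorithm, in the spirit of Algorithm~\ref{alg:chores:MMS} and the one already promised for $\text{Chores}\mid\text{Utilitarian}\mid\text{PROP1}$, but with a threshold tuned to PROP1 rather than MMS. The subtlety is that, unlike MMS fairness, the bound $v_i(A_i)\ge -\tfrac2n$ does \emph{not} by itself certify PROP1 when both end items of $A_i$ are tiny, since removing a tiny end item barely improves the value. Hence the knife must also track, for each candidate agent, the end item she would discard: standing at the left end, it collects the farthest connected block $L(p)\cap E_0$ for which some remaining agent $i$ simultaneously (i)~finds it PROP1-acceptable, i.e.\ deleting her worst end item leaves value at least $-\tfrac1n$, and (ii)~values it no worse than the welfare threshold; it then assigns this block to the value-maximising agent and recurses. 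Feasibility (the knife never reports ``Error'') I would establish as in the MMS analysis, maintaining the invariant that the residual items still admit a PROP1 split among the residual agents. For the welfare guarantee I would reuse the reduction mechanism of Claim~\ref{claim:reduction}, adapted to chores: once an agent removes a left block and departs, $\OPT_E$ of the residual instance does not decrease, so charging each assigned block against $\OPT_E$ round by round yields the stated per-agent bounds.

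For the lower bounds I would build explicit instances. For $n\ge4$ the instance couples a bed of uniformly tiny chores with a constant number of medium items placed so that $\OPT_E=-t$ is attained by an allocation spreading the mass evenly, while connectivity together with PROP1 forces one agent onto a block she values about $\tfrac n2$ times more negatively; correctness is checked by exhibiting the spreading allocation for $\OPT_E$ and by enumerating the few structurally distinct PROP1 partitions to show each has an agent of value tending to $\tfrac n2\,\OPT_E$. The case $n=3$ needs a bespoke three-agent gadget witnessing ratio tending to $2$, strictly above the $\tfrac n2=\tfrac32$ one might naively expect: with only three positional blocks the middle block is flanked on both sides, which restricts which end item its owner may delete and breaks the charging argument that would otherwise give $\tfrac32$.

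The main obstacle is the feasibility-versus-welfare tension in the upper bound: pushing the knife farther improves the remaining agents' welfare but risks handing the current agent a block that cannot be trimmed to $-\tfrac1n$, so the threshold must depend on each agent's own end items rather than merely on her total value, and verifying that such a refined threshold never causes an ``Error'' is the crux. On the lower-bound side, the delicate part is proving that \emph{every} PROP1 allocation in the constructed instances is bad (not merely some natural one), and in particular pinning down the structural reason the $n=3$ case exceeds $\tfrac n2$; I expect this case analysis, rather than any single computation, to carry the weight of the proof.
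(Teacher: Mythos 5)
Your overall architecture (a moving-knife upper bound plus explicit worst-case instances, with $n=3$ treated separately) matches the paper's, and you correctly spot that a value threshold alone does not certify PROP1. However, the crux you yourself identify --- the knife being forced to hand some agent a block that is PROP1-acceptable only because it is a single chore, yet which she values far below the welfare target --- is not resolved by refining the threshold. When every remaining agent values the leftmost unassigned chore below $-\tfrac{1}{2}$, no acceptance rule for the residual agents avoids a bad assignment, and ``maintaining the invariant that the residual items still admit a PROP1 split'' says nothing about welfare. The paper's Algorithm~\ref{alg:chores:PROP1} simply gives such a chore to the least-unhappy remaining agent and then performs a \emph{post-hoc repair}: a counting argument (in the spirit of Claim~\ref{claim:chores:alg:2}) shows that at most $\lceil n/2\rceil$ agents receive non-empty bundles, so when $\OPT_E>-\tfrac{2}{n}$ some \emph{already-served} agent $i'$ values the offending single chore at $\ge -\tfrac{1}{2}$; the chore is swapped to $i'$ and the algorithm is re-run on $i'$'s freed bundle among the still-unserved agents. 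This reallocation step carries the $n\ge 4$ upper bound and is absent from your plan. A second gap is the welfare accounting: Claim~\ref{claim:reduction} concerns a single agent's own MMS partition, whereas $\OPT_E$ of a residual instance depends on all remaining agents' valuations and is not monotone under deleting a greedily-chosen prefix together with one agent. The paper never charges round-by-round against $\OPT_E$; it splits on whether $\OPT_E\ge-\tfrac{1}{n}$ (the optimum is already PROP1), $\OPT_E\le-\tfrac{2}{n}$ (trivial since $\EW\ge-1$), or in between, and in the last case proves the \emph{absolute} guarantee $\EW\ge-\tfrac{1}{2}$ (resp.\ $-\tfrac{2}{3}$ for $n=3$), which yields $\tfrac{n}{2}$ (resp.\ $2$) directly.

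Your lower-bound plan is in the right spirit: the paper's $n\ge4$ instance indeed pairs two chores valued $-\tfrac{1}{2}$ by agents $2,\dots,n$ with a PROP1-plus-connectivity obstruction preventing agent $1$ from absorbing both, and the $n=3$ bound uses a bespoke seven-item gadget. But your structural explanation for why $n=3$ exceeds $\tfrac{3}{2}$ (the middle block being flanked on both sides) is so far only a conjecture; in the paper's gadget the ratio $2$ arises because $\OPT_E$ can sit just below $-\tfrac{1}{3}$ while every PROP1 allocation forces some agent to roughly $-\tfrac{2}{3}$, and verifying that \emph{every} PROP1 allocation is that bad is exactly the case analysis you would still need to supply.
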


\section{Two agents}
\label{sec:n=2}

In this section, we focus on the allocation problem with two agents which already yields non-trivial and valuable results. 
We show that there exists a single allocation that is simultaneously MMS and PROP1 and achieves the corresponding optimal ratio of PoF.
Moreover, if the instance admits an EF or PROP allocation, this allocation is also EF or PROP and achieves the ratio of PoF proved  in \cite{suksompongFairlyAllocatingContiguous2019} and \cite{hohneAllocatingContiguousBlocks2021}.


\subsection{Goods}
\begin{lemma}\label{lem:n=2:goods}
    For any fair-goods instance with two agents, there exists an allocation that is simultaneously MMS and PROP1. Moreover, it maximizes the egalitarian welfare and achieves utilitarian welfare at least 1.
\end{lemma}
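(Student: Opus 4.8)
The plan is to exhibit a single explicit allocation based on a moving-knife cut and verify all three claimed properties for it. Since $n=2$, connectivity means that a feasible allocation is determined by a single cut point: agent~$1$ receives a prefix $L(k)=\{e_1,\ldots,e_k\}$ and agent~$2$ receives the complementary suffix $R(k+1)=\{e_{k+1},\ldots,e_m\}$, or symmetrically with the roles reversed. First I would define the candidate cut using agent~$1$'s knife: let $k^*$ be the smallest index such that $v_1(L(k^*))\ge \frac12=\frac12 v_1(E)$, and consider the two allocations obtained by giving agent~$1$ the prefix $L(k^*)$ versus the prefix $L(k^*-1)$. The key observation is that $v_1(L(k^*-1))<\frac12\le v_1(L(k^*))$, so agent~$1$'s value for the two candidate prefixes straddles the proportional threshold, with the gap being exactly $v_1(e_{k^*})$.

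\medskip

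\noindent The main steps I would carry out are as follows.

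\begin{enumerate}
\item \emph{Egalitarian optimality.} Observe that $\OPT_E$ for two agents is the max over all cut points $k$ of $\min\{v_1(L(k)),v_2(R(k+1))\}$, and argue that the egalitarian-optimal cut is itself MMS and PROP1 (or, that an egalitarian-optimal allocation can be chosen to have these properties). Here I would use $\MMS_i\le\frac1n=\frac12$: whenever an agent's value weakly exceeds its own $\MMS_i$ the MMS requirement for that agent is met for free.

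\item \emph{Constructing the simultaneous allocation.} Take the cut at $k^*$ defined above. By minimality of $k^*$, agent~$1$ has $v_1(L(k^*))\ge\frac12$, so agent~$1$ is PROP (hence PROP1 and, since $\MMS_1\le\frac12$, also MMS). For agent~$2$ I would show that $v_2(R(k^*+1))$ can be made to satisfy PROP1 and MMS by, if necessary, choosing between the cuts $k^*$ and $k^*-1$ according to which agent's proportional claim is in danger, using that moving the single item $e_{k^*}$ across the cut is exactly an ``up to one item'' operation.

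\item \emph{The utilitarian bound.} Show $\UW\ge 1$ for the chosen allocation. Since each agent's bundle meets (or nearly meets) the proportional share $\frac12$, and the two bundles partition $E$ with $v_i(E)=1$, I would add the two agents' values and argue the sum is at least~$1$; the straddling property of $k^*$ guarantees at least one side retains at least half its value.
\end{enumerate}

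\medskip

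\noindent The hard part will be \emph{simultaneously} satisfying PROP1 and MMS for the agent who receives the suffix while also certifying egalitarian optimality — these three demands pull the cut in different directions, and a single fixed cut need not meet all of them. I expect the resolution to be a short case analysis on whether $v_1(L(k^*))$ and $v_2(R(k^*+1))$ each already reach $\frac12$: if both do, the cut at $k^*$ is PROP for both and everything follows immediately; if agent~$2$ falls short, then $v_2(L(k^*))$ must be large, so shifting $e_{k^*}$ to agent~$2$ (cutting at $k^*-1$) restores agent~$2$'s proportional-up-to-one guarantee while the straddling property keeps agent~$1$ above threshold. Tying this chosen cut to the egalitarian optimum, rather than merely to a proportional cut, is where I anticipate needing the most care, and I would lean on the fact that for two agents the egalitarian-optimal cut necessarily balances the two prefix/suffix values as closely as a single indivisible cut permits.
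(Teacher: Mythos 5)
There is a genuine gap, and it sits exactly where you predicted: tying your construction to the egalitarian optimum. Your exhibited allocation --- cut at the smallest $k^*$ with $v_1(L(k^*))\ge\frac12$ --- simply does not maximize egalitarian welfare in general, so the ``moreover'' clause fails for it. Concretely, take $v_1=(0.5,0.25,0.25)$ and $v_2=(0.1,0.1,0.8)$: your cut gives $k^*=1$ with egalitarian welfare $0.5$, while cutting after $e_2$ gives $(0.75,0.8)$ and egalitarian welfare $0.75$; neither agent's minimal proportional prefix coincides with the optimal cut, and no choice of ``roles reversed'' repairs this. Your Step~1 --- that an egalitarian-optimal allocation is, or can be chosen to be, MMS and PROP1 --- is the actual content of the lemma, and you leave it as something to ``argue.'' Worse, it is false for an \emph{arbitrary} egalitarian-optimal allocation: with $v_1=(0.4,0,0.2,0,0.4)$ and $v_2$ uniform, the allocation giving agent~$1$ only $\{e_1\}$ attains the optimal egalitarian welfare $0.4$, yet $v_1(\{e_1,e_2\})=0.4<\frac12$, so PROP1 fails. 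Finally, your repair for agent~$2$ (``choose between the cuts $k^*$ and $k^*-1$'') can fail MMS at both cuts simultaneously: for $v_1=(0,0.5,0.5)$, $v_2=(0.5,0.4,0.1)$, cutting at $k^*=2$ violates $\MMS_2=0.5$ and cutting at $k^*-1=1$ violates $\MMS_1=0.5$; the only fix is to reverse which agent takes the prefix, and your plan gives no rule for that.

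The paper proceeds in the opposite direction: it \emph{starts} from an egalitarian-welfare-maximizing allocation $\mathbf{O}$, refined by a secondary tie-breaking rule (among all egalitarian maximizers, maximize the number of items held by the worse-off agent), and then proves fairness by exchange arguments. Writing $v_2(O_1)\le v_1(O_1)<\frac12\le v_2(O_2)$ for the only nontrivial case and letting $e^*\in O_2$ be the item adjacent to $O_1$, the tie-break forces $v_2(O_2\setminus\{e^*\})<\frac12$, whence $v_2(O_1\cup\{e^*\})>\frac12$; if agent~$1$ then still had $v_1(O_1\cup\{e^*\})\le\frac12$, swapping the two modified bundles would give both agents at least $\frac12$ and strictly improve the egalitarian welfare --- a contradiction. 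A parallel swap argument against an $\MMS_1$-defining partition handles MMS, and $\UW\ge1$ follows in one line from $v_2(O_1)\le v_1(O_1)$. If you want to salvage your plan, you should abandon the $k^*$-based construction and instead supply the missing selection rule among egalitarian maximizers together with these exchange arguments.
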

	
\noindent\begin{proof}
    We explicitly construct such an allocation $\mathbf{O} = (O_1,\ldots, O_n)$:
    \begin{itemize}
        \item $\mathbf{O}$ first maximizes the egalitarian welfare among all connected allocations; If there is a tie, $\mathbf{O}$ maximizes the number of items allocated to the agent with smaller value.
    \end{itemize}
    By the construction, it is straightforward that $\mathbf{O}$ maximizes the egalitarian welfare. 
    Without loss of generality, we assume $ v _ 1 ( O _ 1) \leq v _ 2 ( O _ 2 )$ and $ O _1$ is on the left of $ O _ 2 $. 
    If $ v _ 1 ( O _ 1 ) \geq \frac{1}{2}$, allocation $\mathbf{O}$ satisfies the conditions described in the statement. 
    If $ v _ 2 ( O _ 2 ) < \frac{1}{2}$ or $v_2(O_1) > v_1(O_1)$, swapping the bundles would increase the  egalitarian welfare. 
    In the following, it suffices to consider the case where $v_2(O_1) \le v _ 1 ( O _ 1 ) < \frac{1}{2} \leq v _ 2 ( O _ 2 )$
    and show $O_1$ satisfies MMS and PROP1 for agent 1.

	
	
	First, we show $\mathbf{O}$ is PROP1 for agent 1.
	Let $e^* \in O _ 2$ be the item such that $O _ 1 \cup\{ e ^* \}$ is connected. 
	Consider another connected allocation $\mathbf{O}^{\prime} = (O ^ {\prime} _ 1, O ^ {\prime} _ 2 )$ with $ O^{\prime} _ 1 = O _ 1\cup \{ e ^* \}$ and $ O^{\prime} _ 2 = O _ 2 \setminus \{e^* \}$. 
	Note that $v_2(O'_2) < \frac{1}{2}$ and $v_2(O'_1) > \frac{1}{2}$; otherwise, since $\mathbf{O}$ maximizes $|O_1|$, $e^*$ should be allocated to agent 1. 
	If $v_1(O'_1) \le \frac{1}{2}$, then $v_1(O'_2) \ge \frac{1}{2}$, which means by allocating $O'_2$ to agent 1 and $O'_1$ to agent 2 induces egalitarian welfare at least $\frac{1}{2}$, which is a contradiction. 

    Next, we show $\mathbf{O}$ is MMS fair for agent 1. 
	Let $\mathbf{T} = \{ T _ 1, T _ 2 \}$ be an $\MMS_1$-defining partition and $ T _ 1$ is on the left of $ T _ 2$. 
	Again we consider allocation $\mathbf{O}^{\prime} = (O'_1, O'_2)$ constructed in the previous paragraph. 
	If $ v _ 1 ( O _ 1 ) < \MMS_1$, then we have $ O _ 1 \subsetneq T _ 1 $ and thus $ O ^ {\prime} _ 1 \subseteq T _ 1  $. 
	Since $v_1(O'_1) > \frac{1}{2}$, we have $ v _ 1 ( T _ 1 ) \geq v _ 1 ( O ^ {\prime} _ 1 ) > \frac{1}{2}$ and $\MMS_1 = v_1(T_2) \leq v _ 1 ( O ^ {\prime} _ 2 )$. 
	Thus $v _ 1 ( O ^ {\prime} _ 2 ) > v_1(O_1)$.
	That is by allocating $O'_2$ to agent 1 and $O'_1$ to agent 2, we have $v_2(O'_1) > \frac{1}{2}> v_1(O_1)$ and $v _ 1 ( O ^ {\prime} _ 2 ) > v_1(O_1)$, 
	which induces higher egalitarian welfare than $\mathbf{O}$ and leads to a contradiction. 


	

	Finally, we show $\UW(\mathbf{O}) \ge 1$.
	Since $v_2(O_1) \le v_1(O_1)$, then $v_2(O_2) = 1 - v_2(O_1) \ge 1 - v_1(O_1)$ and thus $v_1(O_1) + v_2(O_2) \ge 1$, which completes the proof of the lemma.
\end{proof}

By Lemma \ref{lem:n=2:goods}, we have the following theorem whose proof is deferred to the appendix. 
\begin{theorem}
\label{thm:n=2:goods}
For allocating indivisible goods, $\PoF=1$ regarding the egalitarian welfare for both MMS and PROP1;
$\PoF=\frac{3}{2}$ regarding the utilitarian welfare for both MMS and PROP1.
\end{theorem}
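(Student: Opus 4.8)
The plan is to leverage Lemma~\ref{lem:n=2:goods}, which already hands us a single allocation $\mathbf{O}$ that is simultaneously MMS and PROP1, maximizes egalitarian welfare, and has $\UW(\mathbf{O}) \ge 1$. This makes the egalitarian claim almost immediate: since $\mathbf{O}$ maximizes egalitarian welfare over all connected allocations and is itself MMS and PROP1, we have $\EW(\mathbf{O}) = \OPT_E$, so the PoF ratio is exactly $1$ for both fairness notions (the minimum over fair allocations of $\OPT_E / \EW(\mathbf{A})$ is at most $\OPT_E / \EW(\mathbf{O}) = 1$, and it is trivially at least $1$). So the egalitarian half requires essentially no further work beyond citing the lemma.

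For the utilitarian half I would split into the upper bound and the lower bound. For the \emph{upper bound} $\PoF \le \frac{3}{2}$: the allocation $\mathbf{O}$ gives $\UW(\mathbf{O}) \ge 1$, but we need to compare this against $\OPT_U$ rather than a constant. First I would observe that $\OPT_U \le \frac{3}{2}$ for any two-agent normalized goods instance under connectivity — this is the known bound from \cite{suksompongFairlyAllocatingContiguous2019}, and it holds because with two agents splitting at one cut point, one agent can get at most its full value $1$ while the other gets at most $\frac{1}{2}$ of what remains in the worst balanced case; more carefully, $\OPT_U = \max_k \max\{v_1(L(k)) + v_2(R(k+1)),\, v_2(L(k)) + v_1(R(k+1))\}$, and a short argument bounds this by $\frac{3}{2}$. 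Combining $\UW(\mathbf{O}) \ge 1$ with $\OPT_U \le \frac{3}{2}$ gives $\OPT_U / \UW(\mathbf{O}) \le \frac{3}{2}$, so the PoF is at most $\frac{3}{2}$ for both MMS and PROP1.

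For the \emph{lower bound} $\PoF \ge \frac{3}{2}$ I would exhibit a hard instance. The natural candidate is two items $e_1, e_2$ with $v_1(e_1) = 1, v_1(e_2) = 0$ and $v_2(e_1) = 0, v_2(e_2) = 1$ (or a symmetric near-variant with small $\epsilon$ perturbations to force genuine two-item bundles), where the welfare-optimal allocation gives each agent its valued item for $\OPT_U = 2$, but any MMS/PROP1 allocation is forced to hand both items to one agent or otherwise cap the utilitarian welfare at $\frac{4}{3}$ — I would need to tune the values so that $\MMS_i$ and the PROP1 threshold together force the welfare down to exactly $\frac{2}{3}\OPT_U$. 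The main obstacle is constructing this instance so that the $\frac{3}{2}$ ratio is achieved \emph{simultaneously} for both MMS and PROP1 and matches the $n=2$ bound from \cite{suksompongFairlyAllocatingContiguous2019}; I expect the correct instance uses three items (e.g. values clustered so that the unique fair split sacrifices a $\frac{1}{3}$ fraction), and verifying that no fair allocation escapes the $\frac{3}{2}$ gap — checking every connected two-agent split against both fairness definitions — is the delicate bookkeeping step.
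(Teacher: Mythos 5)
There is a genuine gap in your utilitarian upper bound. The claim that $\OPT_U \le \frac{3}{2}$ for every normalized two-agent instance under connectivity is false: take two items with $v_1(e_1)=1$, $v_1(e_2)=0$, $v_2(e_1)=0$, $v_2(e_2)=1$; the connected allocation giving each agent her valued item has utilitarian welfare $2$. (This is not what \cite{suksompongFairlyAllocatingContiguous2019} proves -- the $\frac{3}{2}$ there is a bound on the \emph{ratio}, not on $\OPT_U$.) So pairing ``$\OPT_U \le \frac{3}{2}$'' with ``$\UW(\mathbf{O}) \ge 1$'' does not close the argument. The paper's fix is a short case analysis on $\OPT_U = 1+x$: if $x \le \frac{1}{2}$ then indeed $\OPT_U \le \frac{3}{2}$ and $\UW(\mathbf{O}) \ge 1$ suffices; but if $x > \frac{1}{2}$, one observes that in the utilitarian-optimal allocation each agent's value is at least $x$ (since each is at most $1$ and the two sum to $1+x$), hence $\OPT_E \ge x$, hence $\EW(\mathbf{O}) \ge x$ because $\mathbf{O}$ maximizes egalitarian welfare, hence $\UW(\mathbf{O}) \ge 2x$ and the ratio is at most $\frac{1+x}{2x} < \frac{3}{2}$. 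This second case is exactly what rescues instances like the one above, and it is the step your proposal is missing.

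Your lower bound is also not yet a proof: the two-item instance you first suggest is already fair at the optimum (PoF $=1$ there), and you correctly suspect a three-item construction is needed but do not supply it. The paper gives two concrete instances: for MMS, three items with $v_1 = (\frac{1}{2}, \frac{1}{2}-\epsilon, \epsilon)$ and $v_2 = (\frac{1}{2}, \epsilon, \frac{1}{2}-\epsilon)$, where the welfare-optimal split $(\{e_1,e_2\},\{e_3\})$ gives agent $2$ less than $\MMS_2 = \frac{1}{2}$, and every MMS allocation must give one agent just $e_1$, capping welfare at $1$ against an optimum of $\frac{3}{2}-2\epsilon$; for PROP1, a separate four-item instance is used. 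Your egalitarian argument is correct and matches the paper.
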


\subsection{Chores}

Similarly, we have the following results for chores. 

\begin{lemma}\label{lem:n=2:chores}
	For any fair-chores instance with two agents, there exists an allocation that is simultaneously MMS and PROP1. Moreover, it maximizes the egalitarian welfare and achieves utilitarian welfare at least -1.
\end{lemma}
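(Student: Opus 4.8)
The plan is to mirror the structure of the proof of Lemma~\ref{lem:n=2:goods} for goods, adapting the inequalities to the chores setting where all item values are non-positive and the normalization is $v_i(E) = -1$. First I would construct the candidate allocation $\mathbf{O} = (O_1, O_2)$ by the same two-stage rule: $\mathbf{O}$ maximizes the egalitarian welfare among all connected allocations, breaking ties by maximizing the number of items allocated to the agent with the \emph{larger} value (equivalently, with the smaller burden of chores). Without loss of generality I would assume $v_1(O_1) \le v_2(O_2)$ and that $O_1$ lies to the left of $O_2$. Since for chores each agent's value is negative, the least-off agent is agent~$1$, and I would aim to show $O_1$ satisfies both MMS and PROP1 for agent~$1$; the analogous conditions for agent~$2$ hold more easily because $v_2(O_2) \ge v_1(O_1)$ and, by the earlier observation that $\MMS_i \le -\frac{1}{n} = -\frac12$, agent~$2$ will typically clear the threshold.

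Next I would establish the PROP1 guarantee for agent~$1$. The PROP1 definition for chores asks for some $e \in O_1$ with $O_1 \setminus \{e\} \in \cS$ and $v_1(O_1 \setminus \{e\}) \ge -\frac12$. The natural candidate is the boundary item $e^*$ of $O_1$ adjacent to $O_2$, so that removing it keeps $O_1 \setminus \{e^*\}$ connected. I would argue by contradiction: if $v_1(O_1 \setminus \{e^*\}) < -\frac12$, I would consider the alternative allocation that moves $e^*$ across the boundary and show, using the tie-breaking rule and the fact that $\mathbf{O}$ is egalitarian-optimal, that either the egalitarian welfare strictly increases or the tie-breaking count should have placed $e^*$ differently, contradicting the choice of $\mathbf{O}$. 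For MMS I would take an $\MMS_1$-defining partition $\mathbf{T} = (T_1, T_2)$ with $T_1$ left of $T_2$ and show $v_1(O_1) \ge \MMS_1$; if not, then $O_1 \subsetneq T_1$ forces $O_1$ to contain strictly fewer items than the MMS-optimal left block, and transferring the boundary item (or reallocating $T_2$ to agent~$1$) yields an allocation of strictly higher egalitarian welfare, again contradicting optimality of $\mathbf{O}$.

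Finally I would verify the welfare bounds. For utilitarian welfare, from $v_2(O_1) \ge v_1(O_1)$ (which should hold after the tie-breaking, else swapping improves things) I get $v_2(O_2) = -1 - v_2(O_1) \le -1 - v_1(O_1)$, hence $v_1(O_1) + v_2(O_2) \ge -1$, establishing $\UW(\mathbf{O}) \ge -1$. The egalitarian-optimality is immediate from the construction.

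I expect the MMS step to be the main obstacle. In the goods case the key leverage was that a transferred bundle could be shown to have value exceeding $\tfrac12$, and hence exceeding the loser's current value; for chores the signs reverse and the bounds must be rephrased in terms of $-\frac12$ and the cap $\MMS_1 \le -\frac12$, so I would need to check carefully that the swap-and-compare argument still produces a \emph{strict} improvement in the $\min$ of the two agents' values rather than merely rebalancing them. The subtle point is ensuring the boundary-item transfer simultaneously keeps both resulting bundles connected and strictly raises the minimum value, which is exactly where the tie-breaking rule (favoring more items to the less-burdened agent) must be invoked to rule out the degenerate case.
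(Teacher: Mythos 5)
Your construction and overall strategy coincide with the paper's own proof: the same egalitarian-welfare-maximizing allocation with an equivalent tie-breaking rule (for $n=2$, maximizing the items of the better-off agent is the same as minimizing those of the worse-off agent), the same reduction to the case $v_2(O_1)\le v_1(O_1)<-\tfrac12\le v_2(O_2)$, the same boundary-item transfer $\mathbf{O}'=(O_1\setminus\{e^*\},O_2\cup\{e^*\})$ with swap-and-compare contradictions, and the same utilitarian computation. However, two of your inequalities point the wrong way, and both are load-bearing.

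First, in the MMS step you assert that $v_1(O_1)<\MMS_1$ forces $O_1\subsetneq T_1$, i.e.\ that $O_1$ contains \emph{fewer} items than the left block of the $\MMS_1$-defining partition. For chores this is backwards: both $O_1$ and $T_1$ are prefixes, so one contains the other, and if $O_1\subseteq T_1$ then removing chores can only help, giving $v_1(O_1)\ge v_1(T_1)\ge\MMS_1$ and contradicting the assumption. The correct deduction is therefore $T_1\subsetneq O_1$, hence $T_1\subseteq O'_1=O_1\setminus\{e^*\}$. The argument then runs: from the PROP1 step one already has $v_1(O'_1)>-\tfrac12$ and $v_2(O'_2)\le v_1(O_1)$, so $v_1(T_1)\ge v_1(O'_1)>-\tfrac12$, forcing $\MMS_1=v_1(T_2)\le v_1(O'_2)$ and thus $v_1(O'_2)\ge\MMS_1>v_1(O_1)$; together with $v_2(O'_1)=-1-v_2(O'_2)>-\tfrac12>v_1(O_1)$, giving $O'_2$ to agent~$1$ and $O'_1$ to agent~$2$ strictly raises the egalitarian welfare, a contradiction. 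Your sketch, built on the reversed containment and the ``fewer items'' intuition from the goods case, does not produce this chain.

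Second, the utilitarian bound as written is internally inconsistent: from your premise $v_2(O_1)\ge v_1(O_1)$ you derive $v_2(O_2)\le -1-v_1(O_1)$, which yields $v_1(O_1)+v_2(O_2)\le -1$, the opposite of the claim. The correct premise is $v_2(O_1)\le v_1(O_1)$, which is exactly what egalitarian optimality gives you (if $v_2(O_1)>v_1(O_1)$, exchanging the two bundles strictly raises the minimum, since $v_1(O_2)=-1-v_1(O_1)>-\tfrac12>v_1(O_1)$). From that, $v_2(O_2)=-1-v_2(O_1)\ge -1-v_1(O_1)$ and $\UW(\mathbf{O})\ge -1$ follows. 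Both slips are precisely the sign reversals you flagged as the danger of porting the goods argument, so the plan is sound once these directions are corrected.
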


\begin{theorem}
\label{thm:n=2:chores}
For allocating indivisible chores, $\PoF=1$ regarding the egalitarian welfare for both MMS and PROP1;
$\PoF=2$ regarding the utilitarian welfare for both MMS and PROP1.
\end{theorem}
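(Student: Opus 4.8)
The plan is to fix $\mathbf{O}=(O_1,O_2)$ to be the allocation furnished by Lemma~\ref{lem:n=2:chores}, which is simultaneously MMS and PROP1, maximizes egalitarian welfare, and satisfies $\UW(\mathbf{O})\ge -1$, and to argue that $\mathbf{O}$ already witnesses all the upper bounds while a single parametrized family of instances witnesses the utilitarian lower bound. For the egalitarian statement this is immediate: since $\mathbf{O}$ maximizes egalitarian welfare we have $\EW(\mathbf{O})=\OPT_E$, and because $\mathbf{O}$ belongs to $F(\cI)$ for each $F\in\{\MMS,\mathrm{PROP1}\}$, the inner minimum $\min_{\mathbf{A}\in F(\cI)}\EW(\mathbf{A})/\OPT_E$ is attained at ratio $1$; as every feasible allocation satisfies $\EW(\mathbf{A})\le\OPT_E\le 0$, no fair allocation can beat ratio $1$, so $\PoF(C\mid\EW\mid F)=1$ for both notions.

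For the utilitarian upper bound I would deliberately not use $\UW(\mathbf{O})\ge -1$ (which alone is too weak when $\OPT_U$ is close to $0$) and instead exploit that $n=2$ together with two elementary facts about non-positive values. First, since $\mathbf{O}$ maximizes egalitarian welfare, $\UW(\mathbf{O})=v_1(O_1)+v_2(O_2)\ge 2\min\{v_1(O_1),v_2(O_2)\}=2\,\EW(\mathbf{O})=2\,\OPT_E$. Second, for a utilitarian optimum $\mathbf{A}^\star$ the discarded term is non-positive, so $\OPT_U=v_1(A^\star_1)+v_2(A^\star_2)\le\min\{v_1(A^\star_1),v_2(A^\star_2)\}\le\OPT_E$. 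Chaining these gives $\UW(\mathbf{O})\ge 2\,\OPT_E\ge 2\,\OPT_U$, and since $\OPT_U<0$ (the degenerate case $\OPT_U=0$ forces $\UW(\mathbf{O})=0$ and ratio $1$), we obtain $\UW(\mathbf{O})/\OPT_U\le 2$. Because $\mathbf{O}$ is both MMS and PROP1, this bounds the utilitarian price by $2$ for both notions at once.

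For the matching utilitarian lower bound I would construct a family of instances on $2m$ items in which enforcing fairness forces a balanced split and hence doubles the incurred disutility. Let agent $2$ value every item at $-\tfrac{1}{2m}$, so that $\MMS_2=-\tfrac12$ and any fair bundle for agent $2$ holds at most $m$ items; let agent $1$ value the two central items $e_m,e_{m+1}$ at $-\tfrac12$ each and all remaining items at $0$, so that $\MMS_1=-\tfrac12$. The welfare optimum hands agent $1$ the untouched end-block $\{e_1,\dots,e_{m-1}\}$ (value $0$) and agent $2$ the remaining $m+1$ items, giving $\OPT_U=-\tfrac{m+1}{2m}\to-\tfrac12$; but there agent $2$ gets $-\tfrac{m+1}{2m}<-\tfrac12=\MMS_2$, so the optimum is not MMS. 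I would then check that the only connected allocations meeting both $v_1(A_1)\ge-\tfrac12$ and $v_2(A_2)\ge-\tfrac12$ are the two even halvings (the single feasible cut is at index $m$), each giving $v_1=v_2=-\tfrac12$ and thus $\UW=-1$. Hence every MMS allocation has welfare $-1$, so $\PoF\ge\tfrac{-1}{-(m+1)/(2m)}=\tfrac{2m}{m+1}\to 2$.

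The hard part will be the PROP1 lower bound. Because PROP1 is strictly more permissive than MMS—agent $2$ may retain a bundle whose disutility drops below $\tfrac12$ only after deleting a single boundary item—the clean uniform instance above lets agent $2$ keep $m+1$ items while remaining PROP1, and a direct calculation shows it then yields only ratio $\tfrac32$. I would therefore modify the instance, widening agent $1$'s disliked central region and perturbing agent $2$'s values so that no single-item deletion can rescue an over-large bundle, while keeping the welfare optimum near $-\tfrac12$, so that every PROP1 allocation is again driven to a balanced split with $\UW\to-1$. Verifying that \emph{no} PROP1 allocation can use the extra permitted item to escape the factor-$2$ loss—i.e., simultaneously ruling out every prefix/suffix cut—is the most delicate step and the crux of the whole argument.
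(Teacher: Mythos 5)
Your egalitarian argument, your utilitarian upper bound, and your MMS utilitarian lower bound are all correct. The upper bound is in fact a slightly cleaner, unified version of the paper's argument: the paper writes $\OPT_U=-1+x$ and splits into the cases $x\le\frac12$ (where $\UW(\mathbf{O})\ge-1$ suffices) and $x>\frac12$ (where it uses exactly your observation that $\OPT_U\le\OPT_E$ and $\UW(\mathbf{O})\ge 2\,\OPT_E$); your chained inequality $\UW(\mathbf{O})\ge 2\,\OPT_E\ge 2\,\OPT_U$ handles both cases at once. Your $2m$-item MMS instance is different from the paper's three-item instance ($v_1=(-\frac12,-\frac12+\epsilon,-\epsilon)$, $v_2=(-\frac12,-\epsilon,-\frac12+\epsilon)$) but achieves the same asymptotic ratio and your verification that the only MMS allocations are the two balanced cuts at index $m$ is sound.

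The genuine gap is the PROP1 utilitarian lower bound, which you explicitly leave unfinished. You correctly diagnose why your uniform instance fails --- agent $2$ may keep $m+1$ items and discharge PROP1 by deleting one boundary item, so that instance actually admits a PROP1 allocation attaining $\OPT_U$ --- but ``I would modify the instance so that no single-item deletion rescues an over-large bundle'' is a statement of intent, not a construction, and you yourself call the verification the crux of the argument. The paper closes this with an explicit seven-item instance: $v_1=(-\epsilon,-\epsilon,-\frac12,-\epsilon,-\frac12+5\epsilon,-\epsilon,-\epsilon)$ and $v_2=(-\frac14+\epsilon,-\frac14+\epsilon,0,-\frac18-\epsilon,-\frac18-\epsilon,-\frac14+\epsilon,-\epsilon)$, where the utilitarian optimum gives agent $1$ the bundle $\{e_6,e_7\}$ and agent $2$ the rest for welfare $-\frac12-2\epsilon$, agent $2$ then violates PROP1 because removing either boundary item still leaves her below $-\frac12$, and the best PROP1 allocation (agent $1$ takes $\{e_1,e_2,e_3\}$) has welfare $-1-4\epsilon$, yielding the ratio $2$. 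The non-uniform, carefully perturbed values for agent $2$ are precisely what rule out every prefix/suffix cut other than the intended one; without exhibiting such an instance and checking all cuts, the claim $\PoF=2$ for PROP1 (as opposed to $\PoF\le 2$) is not established.
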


\section{Conclusion}
In the work, we investigated the tight ratios of price of fairness regarding both indivisible goods and chores, utilitarian and egalitarian welfare, and MMS and PROP1 fairness. 
Our work is motivated by the research of price of fairness regarding EF and PROP, but such fair allocations rarely exist.
We gave the tight ratios for all settings except the case of goods with utilitarian welfare, where the upper and lower bounds for both MMS and PROP1 are $O(n)$ and $\Omega(\sqrt{n})$ respectively.
An immediate open problem is to explore the tight ratio for this setting.
Some interesting future directions include considering alternative fairness notions like equability up to one item \cite{sunEquitabilityWelfareMaximization2022}, extending the line structure to other graphs, and studying the case when the items are mixture of goods and chores \cite{DBLP:journals/aamas/AzizCIW22}.

\bibliography{mylibrary}

\newpage

\appendix

\section*{Appendix}


\section{Missing Proofs in Section \ref{sec::pre}}
\begin{proof}[Proof of Lemma~\ref{lem:mms:computation}]
    Given a \emph{fair-chores} instance $\mathcal{I} = \langle N, E, \mathcal{V} \rangle$, it suffices to prove that one can answer whether $\MMS_i(\mathcal{I}) \geq q $ for $q \in \mathbb{Q}$ in polynomial time. If so, $\MMS_i(\mathcal{I})$ can be computed efficiently through binary search. We process the item from left to right using the following algorithm: 
    \begin{enumerate}
        \item If $n-1$ bundles have been removed, then terminate. Else, find the largest index $p$ with $v _ i (L(p) \cap E ) \ge q$;
        \item Remove $ L (p) \cap E $ and update $E \leftarrow E \setminus L(p)$. Return to Step 1.
    \end{enumerate}
    
This algorithm removes $n-1$ bundles $S_1, \ldots, S _ {n-1} $ from left to right in Step 2 and also leaves $S _ n $ (right-most) the set of remaining items at termination. In the following, we show that $ \min_{ j \in [n]} v _ i ( S _ j ) \ge q $ if and only if $  \MMS_i \ge q$. The ``only if'' direction, if $\min_{ j\in [n]} v _ i ( S _ j ) \geq q$, then due to the definition of MMS, we have $\MMS_i \geq \min_{ j\in [n]} v _ i ( S_ j ) \geq q$. For the ``if'' direction, it suffices to show that $\min_{ j\in [n]} v _ i ( S _ j ) < q $ implies that there does not exist a connected $n$-partition in which agent $i$'s value for every bundle is at least $q$. For the sake of contradiction, assume a connected $n$-partition $\mathbf{S} ^ {\prime}$ with $ \min_{ j \in [n]} v _ i ( S ^ {\prime} _ j ) \geq q$, and without loss of generality, $ S ^ {\prime} _ j $ is on the left of $S ^ {\prime} _ { j + 1 }$. According to Step 1, we have $ S _ 1 ^ {\prime}  \subseteq S _ 1$, which then implies $ S ^ { \prime} _ 1 \cup S ^ {\prime} _ 2 \subseteq S _ 1 \cup S _ 2 $; otherwise, contradicts to the rule of Step 1 when removing bundle $S _ 2$. By induction, we have $ \cup _{ j = 1} ^ {n-1} S _ j ^ {\prime} \subseteq \cup _{ j = 1} ^ {n-1} S _ j $ and accordingly, $S _ n \subseteq S ^ {\prime} _ n $. Thus, $ v _ i ( S _ j ) \geq q $ holds for all $ j \in [n]$, contradiction. Therefore, it holds that $ \min_{ j \in [n]} v _ i ( S _ j ) \ge q $ if and only if $  \MMS_i \ge q$. The running time of the aforementioned algorithm is $O(m)$.
\end{proof}

\section{Missing Materials in Section \ref{sec:goods}}

\subsection{Missing Algorithms}

	\begin{algorithm}[H]
	\caption{\hspace{-2pt}{$\Matching'(\cI,\alpha)$}}
	\label{ALG:matching:cardinality}
	\begin{algorithmic}[1]
		\REQUIRE Instance $\mathcal{I} = \langle N, E, \mathcal{V} \rangle$, a parameter vector $\calpha = (\alpha_i)_{i\in N}$ where $\alpha_i \ge 0$.
		\ENSURE A partial allocation $\mathbf{A}'$ where each agent is allocated at most one item.
		\STATE Construct a weighted bipartite graph $G = (N\cup E, N \times E )$ where agents are vertices on one side and items are vertices on the other side. 
		For each $i\in N$ and $j\in E$, there is an edge $(i,j)$ with weight $v_i(e_j)$ if $v_i(e_j) \ge \alpha_i$. 
		
		\STATE Compute a maximum {\em cardinality} matching $\mu$ of $G$ and denote by $\mu(i)$ the item matched to agent $ i $. If agent $ i $ is unmatched, let $\mu ( i ) =\emptyset$. Construct the partial allocation $\mathbf{A}'$ with $A' _ i = \{\mu (i)\}$ for every $ i \in N $ such that $A'_i\neq \emptyset$.
	
		\RETURN $\mathbf{A}'$
	\end{algorithmic}
\end{algorithm}

\subsection{Price of MMS for Indivisible Goods}
\medskip
\noindent\begin{proof}[Proof of Lemma \ref{lemma::MMS-goods-U-add-2}]
    According to the Subroutine $\Matching(\mathcal{I}, \calpha)$ in Step 1 of Algorithm~\ref{ALG::MMS-good-utilitarian}, it holds that $ v_ i ( A _ i ) \geq v _ i ( A ^ {\prime} _ i ) \geq \max \{ \MMS_i(\mathcal{I}), \frac{1}{4n} \}$ for each $ i \in N _ 0 $. Then, the remaining is to prove the statement holds for every $ i \in N\setminus N _ 0 $, which is equivalent to prove that each agent $ i \in N\setminus N _ 0 $ can receive a bundle from $\MovingKnife(\mathcal{I}, \mathbf{A}^ {\prime}, \calpha)$ in Step 5 of Algorithm~\ref{ALG::MMS-good-utilitarian}.
    
    We now fix $ i \in N\setminus N _ 0$. If $\MMS_i \geq \frac{1}{4n}$, then by arguments similar to the proof of Lemma~\ref{lemma::MMS-good-U-add-1}, one can verify that agent $ i $ must receive a connected bundle from Step 5 of Algorithm~\ref{ALG::MMS-good-utilitarian} so that her value is $ v _ i ( A _ i ) \geq \max \{ \MMS_i, \frac{1}{4n} \}$. We can further focus on the case of $\MMS_i < \frac{1}{4n}$ and assume for the sake of contradiction, agent $ i $ does not receive a bundle in Step 5. Since agent $ i $ is not matched in $\Matching(\mathcal{I}, \calpha)$, we can claim $ v_ i (e) < \frac{1}{4n}$ for every $ e \in E ^ {\prime}$. Denote by $\tilde{N}$ the set of agents who receive bundles in Step 5. We can assume $ |\tilde{N}| \leq n - | N _ 0 | - 1 $; otherwise, the statement already holds. For each $ j\in \tilde{N}$, let $ B _ j $ be the bundle assigned to agent $ j $ in Step 5 and $ e _ {j _ R} \in B _ j $ the right-most item of $ B _ j $. Due to $ v_ i (e_{j_R}) < \frac{1}{4n}$ and Subroutine $\MovingKnife$ in Step 5, we have $ v _ i ( B _ j ) = v _ i ( B _ j \setminus \{ e _ {j _ R} \}) + v _ i ( e _ {j_R}) < \frac{1}{2n}$. We now upper-bound the value of agent $ i $ on assigned items. As in Step 1, $\Matching(\mathcal{I}, \calpha)$ computes the maximum weighted matching, we have $ v _ i ( E _ 0 ) \leq \UW(\mathbf{A} ^ {\prime}) < \frac{1}{4}$. Then, agent $ i $'s value is at most
    $$
    v _ i ( E _ 0 ) +  v _ i ( \cup _ { j \in \tilde{N}} B _ j )< \frac{1}{4} + \frac{n - | N _ 0 | - 1}{ 2n }.
    $$
    As a consequence, agent $ i $ has value at least $\frac{n + 2 | N _ 0 | + 2 }{ 4n}$ on the unassigned items due to normalized valuations. Since at most $ n -1 $ agents receiving a bundle, the set of unassigned items is then composed by at most $ n $ contiguous blocks. Thus, due to the pigeonhole principle, agent $ i $ has value at least $\frac{n + 2 | N _ 0 | + 2 }{ 4n ^ 2} > \frac{1}{4n}$ on one of the unassigned contiguous blocks, based on which agent $ i $ should receive a bundle in Step 5, contradiction.
\end{proof}

\medskip
\noindent\begin{proof}[Proof of Lemma \ref{lem:goods:u:mms:lb}]
	Consider an instance with $n$ agents and a set $ E = \{ e_1, \ldots, e _{ 2n }\}$ goods. For $i = 1, \ldots, \sqrt{n}$, the valuation function $ v _ i ( \cdot)$ is: $ v _ i ( e _ j ) = \frac{1}{2\sqrt{n}}$ for $ 2(i-1)\sqrt{n} + 1 \leq j \leq 2i \sqrt{n}$ and $ v _ i ( e _ j ) = 0 $ for other $j$. For $ i \geq \sqrt{n} + 1$, the valuation functions is: $ v _ i ( e _ j ) = \frac{1}{2n}$ for any $ j \in [2n]$. One can compute $\MMS_i = 0$ for $ i \leq \sqrt{n}$ and $\MMS _ i = \frac{1}{n}$ for $ i \geq \sqrt{n} + 1$. In a utilitarian welfare-maximizing allocation $\mathbf{O}$, each agent $ i \leq \sqrt{n}$ receives all goods on which she has positive value and agent $ i > \sqrt{n}$ receives none. We can compute $\OPT_U = \UW(\mathbf{O}) = \sqrt{n}$. But for agent $i \geq \sqrt{n} + 1 $, since $ v _ i(O _ i ) = 0 < \MMS _ i $, she violates MMS under allocation $\mathbf{O}$. To make such an agent satisfy MMS fairness, two adjacent goods must be assigned to her. Thus, in total $2(n - \sqrt{n})$ goods need to be assigned to the latter $ n - \sqrt{n}$ agents, which makes only $2\sqrt{n}$ goods can be assigned to the first $\sqrt{n}$ agents in an MMS allocation. Consequently, for an arbitrary MMS allocation $\mathbf{A}$, we have $\UW(\mathbf{A}) \leq 2 - \frac{1}{\sqrt{n}}$ and therefore,
	$$
	\PoF \geq \frac{\sqrt{n}}{ 2 - \frac{1}{\sqrt{n}}}=\Omega(\sqrt{n}),
	$$ which completes the proof of the lemma.
\end{proof}

\medskip
\noindent\begin{proof}[Proof of Lemma \ref{lem:goods:e:mms:ub}]
We first prove the following claim. 

\begin{claim}\label{lemma::goods-RW-MMS-add1}
	In Subroutine $\Matching^{\prime}(\mathcal{I}, \calpha)$, if agent $ i $'s vertex has degree at least one in graph $G$, then she is matched by $\mu$.
\end{claim}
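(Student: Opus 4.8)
The plan is to prove the claim via Hall's marriage theorem together with the maximality of the cardinality matching. First I would pin down which agents can be matched at all. Writing $t=\frac{1}{2n}\OPT_E$, an agent with $\alpha_i=+\infty$ has no incident edge in $G$ (no item attains infinite value) and hence degree $0$; consequently every agent of degree at least one satisfies $\alpha_i=t$, i.e.\ $\MMS_i(\cI)<t$, and possesses at least one item $e$ with $v_i(e)\ge t$ (a ``large'' item). Let $A$ be the set of such agents. Since every matched agent has an incident edge and therefore lies in $A$, it suffices to show that the maximum cardinality matching $\mu$ saturates $A$.

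The core step is to verify \textbf{Hall's condition} on the agent side: for every $S\subseteq A$ the neighbourhood $N(S)=\{e\in E: v_i(e)\ge t \text{ for some } i\in S\}$ satisfies $|N(S)|\ge|S|$. I would argue by contradiction, assuming $|N(S)|<|S|$, and exploit an egalitarian-optimal allocation $\mathbf{O}$, in which every agent $i$ receives a connected bundle with $v_i(O_i)\ge\OPT_E=2nt$. The bundles $\{O_i\}_{i\in S}$ are pairwise disjoint, and each item of $N(S)$ lies in at most one of them, so if $|N(S)|<|S|$ then by pigeonhole some agent $i\in S$ has a bundle $O_i$ disjoint from $N(S)$. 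By the definition of $N(S)$, every item in such an $O_i$ has value strictly less than $t$ to $i$, while still $v_i(O_i)\ge 2nt$.

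From here I would convert $O_i$ into an $\MMS_i$-defining partition witnessing $\MMS_i\ge t$, the desired contradiction. Since $O_i$ is a connected block of total value at least $2nt$ made up of items each worth less than $t$, a greedy left-to-right cut (accumulate items until the running value reaches $t$, then cut) produces pieces of value in $[t,2t)$; as at most a final leftover of value $<t$ remains, the number of complete pieces exceeds $(2nt-t)/(2t)=n-\tfrac12$ and is therefore at least $n$. Merging the surplus pieces into the last one gives exactly $n$ connected sub-bundles of $O_i$, each of value at least $t$, and attaching the items lying to the left (resp.\ right) of $O_i$ to the leftmost (resp.\ rightmost) sub-bundle keeps each bundle connected and only increases values, since the items are goods. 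This yields a connected $n$-partition of $E$ with every bundle worth at least $t$ to $i$, so $\MMS_i\ge t$, contradicting $i\in A$. Hence Hall's condition holds for $A$.

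Finally I would close the argument: Hall's theorem supplies a matching that saturates all of $A$, which has size $|A|$; on the other hand every matching matches only agents of $A$, so no matching can exceed size $|A|$. Therefore the maximum cardinality matching $\mu$ has size exactly $|A|$ and saturates $A$, which is precisely the claim. The main obstacle is the Hall-condition step, and specifically the clean conversion of a value-rich connected block of small items into a connected $n$-partition certifying a large $\MMS_i$; the self-referential choice $t=\frac{1}{2n}\OPT_E$ is exactly what guarantees $v_i(O_i)\ge 2nt$, leaving enough value to carve out the required $n$ pieces.
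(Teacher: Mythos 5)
Your proposal is correct and follows essentially the same route as the paper's proof: both invoke Hall's theorem, use the egalitarian-optimal allocation together with a pigeonhole argument on a deficient set, and derive the contradiction $\MMS_i \ge \frac{1}{2n}\OPT_E$ by greedily cutting a value-rich connected block of small items into $n$ connected pieces. Your write-up is in fact slightly more explicit than the paper's on two details it glosses over: the piece-counting bound $k > n - \tfrac12$ and the extension of the partition of $O_i$ to a connected $n$-partition of all of $E$ by absorbing the flanking items.
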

\begin{proof}
	Denote by $\bar{N}$ the set of agents with degree at least one in $G$. Clearly, $\MMS_i < \frac{1}{2n}\OPT_E$ holds for every $i \in \bar{N}$. For the sake of contradiction, assume the matching $\mu$ is not $\bar{N}$-perfect. Then according to the Hall's theorem, there exists a subset $ N ^* \subseteq \bar{N}$ satisfying $|N^*| > | D _{{G}} (N^*)|$ where $D _{{G}} (N^*)$ is the neighbourhood of $N^*$ in ${G}$. 
	
	We then focus on the set $E \setminus D _{{G}} (N^*)$ and claim that no connected subset $ P \subseteq E \setminus D _{\bar{G}} (N^*)$ is able to bring value $ v _ t (P) \geq \OPT_E$ for some agent $ t \in N^*$. Suppose not, and assume that agent $ j \in N^*$ has value $ v _ j (P^*) \geq \OPT_E$ where $ P ^* \subseteq E \setminus D _{{G}} (N^*)$ is connected. Due to the choice of $\calpha$, we have $ v _ j ( e ) < \frac{1}{2n}\OPT_E$ for each $e \in P^*$. Thus, set $P^*$ is able to be partitioned into $n$ connected subsets $\{ P _ l ^*\}_{l=1} ^ n$ such that $ v _ j (P_l^*) \geq \frac{1}{2n}\OPT_E$ for all $ l $, which then leads to $\MMS_j \geq \frac{1}{2n}\OPT_E$, contradiction. Thus, given an agent $ i \in N^*$ and a connected bundle $S\in \mathcal{S}$, $ v _ i (S) \geq \OPT_E$ if and only if $S \cap D _{{G}} (N^*) \neq \emptyset$. Notice that $ |N^*| > |D _{{G}} (N^*) | $, then it is impossible to make every agent in $N^*$ receive value at least $\OPT_E$, contradiction. Therefore, matching $\mu$ must be $\bar{N}$-perfect.
\end{proof}

Claim~\ref{lemma::goods-RW-MMS-add1} implies that if agent $i$ with $\MMS_i<\frac{1}{2n}\OPT_E$ values a single item $v _i(e) \ge \frac{1}{2n}\OPT_E$, then she will be matched to a single item in Step 1 of Algorithm~\ref{ALG::MMS-good-egalitarian} and has value 
$$v _ i ( A_i) \geq v _ i (\mu(i)) \ge \frac{1}{2n}\OPT_E \geq \MMS_i, $$ where $\mu$ is the maximum cardinality matching in Subroutine $\Matching^{\prime}(\mathcal{I}, \calpha)$. Thus, if $N_0 = N$, the statement is proved, and we can further assume $N _ 0 \subsetneq
	 N $. Fix agent $ i \in N \setminus N _ 0$ and split the proof into two cases.


	\emph{Case 1}: $\MMS _ i \geq \frac{1}{2n}\OPT_E$. The proof of this case is similar to the proof of Lemma~\ref{lemma::MMS-good-U-add-1} and we omit it.

	\emph{Case 2}: $\MMS _ i < \frac{1}{2n}\OPT_E$. It suffices to show that agent $ i $ receives a bundle in Step 2 of Algorithm~\ref{ALG::MMS-good-egalitarian}. Based on Claim~\ref{lemma::goods-RW-MMS-add1}, agent $ i $ has value $ v _ i ( e ) < \frac{1}{2n}\OPT_E$ for each $ e \in E$, and thus, $ v _ i ( \mu(j) ) < \frac{1}{2n}\OPT_E$ holds for $j\in N_0$ where $\mu(j)$ is the matching computed in Subroutine $\Matching^{\prime}$. For each $j \in N\setminus N _ 0 $ and $ j \neq i $, denote by $B _ j $ (if exists) the bundle received by agent $j$ in Step 2 of Algorithm~\ref{ALG::MMS-good-egalitarian}, and accordingly, $ v _ i ( B _ j ) < \frac{1}{n}\OPT_E$ holds; otherwise, agent $ i $ will receive a bundle at the time when agent $ j $ is picked in Subroutine $\MovingKnife$ in Step 2 of Algorithm~\ref{ALG::MMS-good-egalitarian}. Therefore, even after all other $n-1$ agents receiving connected bundles in either Step 1 or Step 2, agent $ i $ still has value no less than $ 1- \OPT_E>1- \frac{1}{n}$ on the unassigned items that are composed by at most $n$ connected subsets. By pigeonhole principle, there exists a subset with value at least $\frac{n-1}{n^2}>\MMS_i$ for agent $i$. Hence, agent $ i $ receives a connected bundle $B _ i $ in Subroutine $\MovingKnife$ in Step 2 of Algorithm~\ref{ALG::MMS-good-egalitarian} and has value $$ v _ i ( A _ i) \geq v _ i ( B _ i ) \geq \max\{ \frac{1}{2n}\OPT_E, \MMS_i \} .$$
	
	Therefore, we can conclude that Algorithm~\ref{ALG::MMS-good-egalitarian} can output the connected allocation $\mathbf{A}$ in which each agent $i$ receives value $v _ i (A _ i ) \geq \max\{ \frac{1}{2n}\OPT_E, \MMS_i \}$.
\end{proof}

\subsection{Price of PROP1 for Indivisible Goods}

\begin{proof}[Proof of Lemma \ref{lemma::PROP1-good-UW-add-1}]
	Notice that if an agent $ i $ has valuation $ v _ i ( e _ j ) \geq \frac{1}{n}$ on good $ e _ j $, then she satisfies PROP1 when she receives empty bundle. Accordingly, we can further focus on the instance $ \mathcal{I}$ in which $ v _ i (e _ j ) < \frac{1}{n}$ holds for any $ i, j $. We then construct a corresponding cake-cutting instance $ \mathcal{I} ^ {\prime}$ with $n$ agents and the cake being the interval $[0,m]$ where $m = | E |$. In $\mathcal{I}^{\prime}$, the value of interval $[a,b]$ for each agent $i$ is equal to $\int_{a}^b f_i(x)dx$ where $f_i$ is agent $i$'s density function. Each agent $ i $ has a piecewise constant density function $f _ i ( x ) = v _ i ( e _ j )$ on interval $ [j-1,j ]$, and thus, agent $i$ has value $ v _ i ( e _ j )$ on piece $[j-1, j ]$. 
	According to \cite{dubinsHowCutCake1961}, instance $ \mathcal{I} ^ {\prime}$ admits a connected proportional allocation $\pi$, in which w.l.o.g, agents $1, \ldots, n$ receive the 1st, 2nd, ..., $n$-th piece of cake from left to right, and each agent $ i $ receives interval $[\pi _ { i - 1 }, \pi _ i ]$. We then transfer $\pi$ into allocations of $ \mathcal{I} $ that satisfy PROP1 and have an absolute welfare guarantee.
	
	Since each agent $ i $ has value at least $\frac{1}{n}$ on her piece $[\pi_{i-1}, \pi_i]$, we claim that $[\pi_{i-1}, \pi _ i ] \notin [ j-1, j]$ for any pair of $ i ,j $ due to the property of $\mathcal{I}$. As a result, for each $ j \in [m]$, interval $[j-1, j]$ is either covered by a interval $[\pi_{ p - 1}, \pi _ p ]$ or not covered by a single interval but intersects with two connected pieces received by agents in $\pi$. 
	We then construct two connected allocations of $\mathcal{I}$. In allocation $\mathbf{A} ^ L $ (resp. $\mathbf{A} ^ R $), each good $e _ j $ is assigned to agent $ i $ if $[j-1, j] \in [\pi _ {i-1}, \pi _ i ]$, and $e _ j$ is assigned to agent $ p $ (resp. $q$) if $[j-1, j]$ intersects\footnote{If the intersection of two intervals is a single point, then we regard their intersection as an empty set.} with two connected pieces $[\pi_{p-1}, \pi _ p]$, $[\pi_{q-1}, \pi _ q ]$ with $p< q$. In the following, we first show both $\mathbf{A} ^L$ and $\mathbf{A} ^ R$ are connected PROP1 allocations and then prove that one of them has utilitarian welfare at least $\frac{1}{2}$.
	
	The connectivity of $\mathbf{A} ^ L$ and $\mathbf{A} ^ R $ comes from the connectivity of $\pi$. We then prove the PROP1 of allocation $\mathbf{A} ^ L$ and fix an agent $ i $ who receives the piece of cake $[\pi _ { i -1}, \pi _ i ]$. If $\pi _ { i - 1 } \in \mathbb{N}^+$, then the bundle received by agent $ i $ is $ A _ i ^ L = \{ e _ j \mid \pi_{ i - 1} \leq j \leq\lceil \pi_i \rceil \}$. Accordingly, we have the following
	$$
	v _ i ( A _ i ^ L ) = \int_{\pi_ { i - 1 }}^{ \lceil \pi_i  \rceil} f _ i ( x ) dx \geq \int_{\pi_{ i - 1}}^{ \pi _ i } f _ i ( x )dx \geq \frac{1}{n},
	$$ where the last inequality is due to the proportionality of $\pi$. If $\pi _ { i - 1 } \notin \mathbb{N}^+$, agent $ i $ receives $ A ^ L _ i = \{ e _j \mid \lceil \pi_{ i - 1} \rceil \leq j \leq \lceil \pi_{ i} \rceil \}$. Notice that item $ A ^L _i \cup  \{e _ {\lfloor \pi_{ i - 1} \rfloor}\}\in \mathcal{S}$ and we have
	$$
	v _ i ( A ^ L _ i \cup \{ e _ {\lfloor \pi_{ i - 1} \rfloor} \}) = \int_{ \lfloor \pi_{i-1} \rfloor} ^ {\lceil \pi _ i  \rceil } f _ i ( x )dx \geq \int_{\pi_{ i - 1}}^{ \pi _ i } f _ i ( x )dx \geq \frac{1}{n},
	$$ which then implies that agent $ i $ also satisfies PROP1. Therefore, we can conclude that $\mathbf{A} ^ L$ satisfies PROP1. By a similar argument, one can prove that $\mathbf{A} ^ R $ is also a connected PROP1 allocation.
	
	As each item $ e _ j $ of $ \mathcal{I} $ corresponds to an interval $ [j-1, j]$ of $ \mathcal{I} ^ {\prime}$, then any connected subsets $ S \subseteq E $ also corresponds to a subinterval of $[0,m]$. For each agent $ i $, denote by $[x _{ i -1} ^ L, x _ i ^ L ]$ and $ [x _ { i - 1} ^ R, x _ i ^ R ]$ the corresponding intervals of $ A ^ L_ i $ and $ A ^ R _ i $, respectively. By the construction $\mathbf{A} ^ L$ and $\mathbf{A} ^ R $, for each $ i \in [n]$, we have $\min\{ x ^ R _ { i -1}, x ^ L_{ i -1 } \} \leq \pi_{ i - 1 }$ and $ \max\{ x ^ R _ i, x ^ L _ i  \} \geq \pi _ i $. Then, for each $i\in [n]$, we have the following inequality
	$$
	\int_{x ^ L _{ i - 1 }}^{ x ^ L _ i } f _ i ( x )d x + \int_{ x _ { i -1 } ^ R }^{ x _ i ^ R }f _ i  ( x ) dx \geq \int_{\pi_{ i - 1}}^{ \pi_i } f _ i ( x ) dx,
	$$ which then implies
	$$
	\sum_{t=1} ^ {n} \int_{x ^ L _ {t-1}}^{ x ^ L _{ t}} f _ t (x)dx + \sum_{t=1} ^ {n} \int_{x ^ R _ {t-1}}^{ x ^ R _{ t}} f _ t (x)dx \geq \sum_{t=1} ^ {n}\int_{\pi_{t-1}}^{\pi_t} f _ t (x)dx.
	$$ The right hand side of the last inequality is the utilitarian welfare of allocation $\pi$ and should be at least one because $\pi$ is a proportional allocation. Furthermore, the left hand side is actually equals to $\UW(\mathbf{A} ^ L) + \UW(\mathbf{A} ^ R)$ due to the construction of the density function. Consequently, we can conclude one of $\mathbf{A} ^ L$ and $ \mathbf{A} ^ R$ has utilitarian welfare at least $\frac{1}{2}$.
	
	The above-mentioned contiguous proportional cake-cutting solution $\pi$ can be found efficiently. Then, this construction proof can be easily transferred to an efficient algorithm on computing the connected PROP1 allocation with utilitarian welfare at least $ \frac{1}{2}$.
\end{proof}

\medskip
\noindent\begin{proof}[Proof of Lemma \ref{lem:goods:u:prop1:lb}]
Consider an instance with $n$ agents and a set $ E = \{ e _ 1, \ldots, e _ {n+1}\}$ goods. For $ i =1, \ldots, \sqrt{n}$, the valuation function $ v _ i (\cdot)$ is: $ v _ i ( e _ j ) = \frac{1}{\sqrt{n}}$ for $( i -1)\sqrt{n} + 1 \leq j \leq i\sqrt{n}$ and $ v _ i ( e _ j ) = 0 $ for other $j$. For $i\geq \sqrt{n} + 1$, the function $ v _ i (\cdot)$ is: $ v _ i ( e _ j ) = \frac{1}{n + 1 }$ for $ j \in [n+1]$. Consider allocation $\mathbf{O}$ with $O_i = \{ e_{(i-1)\sqrt{n}+1},\ldots, e_{i\sqrt{n}}\} $ for each $i \leq \sqrt{n}$, $O_{n} = \{e_{n+1}\}$ and $O_j = \emptyset$ for other $j$. One can verify that $\OPT_U \geq \UW(\mathbf{O}) = \sqrt{n} + \frac{1}{n + 1 }$. However, agent $ j = n -1 $ violates PROP1 under allocation $\mathbf{O}$. In a connected PROP1 allocation $\mathbf{A}$, each agent $j > \sqrt{n}$ receives at least one item. As a consequence, we have $ \UW(\mathbf{A}) \leq 2 - \frac{\sqrt{n}-1}{\sqrt{n}(n+1)}$, which then implies
$$
\PoF  \geq \frac{\sqrt{n} + \frac{1}{n + 1 }}{2 - \frac{\sqrt{n}-1}{\sqrt{n}(n+1)}}  \geq \frac{\sqrt{n} + \frac{1}{n + 1 }}{2} = \Omega(\sqrt{n}),
$$ which completes the proof.
\end{proof}

\section{Missing Proofs in Section \ref{sec:chores}}

\subsection{Price of MMS for Indivisible chores}

\begin{proof}[Proof of Lemma \ref{Lemma::MMS-UW-chore}]
	With known $\MMS$ values, Algorithm \ref{alg:chores:MMS} allocates all items in $O(mn)$ time since the number of agents is reduced by one in each iteration of while-loop, and all remaining items are assigned to the last agent.
	By Lemma \ref{lem:mms:computation}, the computation of $\MMS_i$'s can be done in polynomial time. 
	
    Next we prove the returned allocation $\cA$ is MMS fair.
    Renumber the agents from 1 to $n$ by the order they receive bundles in the algorithm, where agent 1 is the first to receive a bundle and agent $n$ is the last. 
	For $i \in [n-1]$, agent $i$ receives chores in either Step \ref{step:chores:mms:1} or \ref{step:chores:mms:2}. 
	If $A _ i $ is assigned in Step \ref{step:chores:mms:1}, we have $ v _ i ( A _ i ) \geq \max\{ \MMS _ i , \frac{2}{n}\} \geq \MMS _ i $. For the latter, since $ |A _ i | = 1 $, we clearly have $ v _ i ( A _ i ) \geq \MMS _ i $. 
	Thus, MMS fairness is satisfied by agents $[n-1]$.
	
	It remains to show $v_n(A_n) \ge \MMS_n$, and we consider two cases.
	
	\medskip
	
	\emph{Case 1:} $\MMS _ n \ge -\frac{2}{n} $. 
	Let $\mathbf{S} = ( S _ 1, \ldots, S _ n )$ be a connected $\MMS _ n $-defining partition and bundle $ S _ i $ is on the left of $ S _ j $ for any $ i < j $. 
	For allocation $\mathbf{A}$, we have also $ A _ i $ is on the left of $ A _ j $ for any $ i < j $. 
	Then we have the following claim.
	\begin{claim}\label{claim:chores:alg:1}
	For any $1\le k\le n$, $\cup_{j=1}^k S_j \subseteq \cup_{j=1}^k A_j$.
	\end{claim}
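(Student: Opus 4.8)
The plan is to prove the claim by induction on $k$, exploiting the fact that both $\mathbf{S}$ and $\mathbf{A}$ are partitions of $E$ into connected blocks listed from left to right. Under this ordering $\cup_{j=1}^k S_j$ and $\cup_{j=1}^k A_j$ are both prefixes of the line, say $L(s_k)$ and $L(a_k)$ where $s_k,a_k$ are the indices of their right-most items. Hence the claim is equivalent to the single inequality $s_k \le a_k$ for every $k$: the allocation prefix always reaches at least as far right as the $\MMS_n$-defining prefix. The base case $k=0$ is vacuous, since $\cup_{j=1}^0 S_j = \emptyset$.

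For the inductive step I would assume $\cup_{j=1}^{k-1} S_j \subseteq \cup_{j=1}^{k-1} A_j$ and let $E_0 = E \setminus \cup_{j=1}^{k-1} A_j$ be the pool of items available when the $k$-th bundle $A_k$ is assigned. The inductive hypothesis gives the crucial structural consequence $E_0 \subseteq \cup_{j=k}^{n} S_j$, which implies $(\cup_{j=1}^{k-1} S_j)\cap E_0 = \emptyset$ and therefore $L(s_k) \cap E_0 = S_k \cap E_0$. I would then split on whether $S_k \cap E_0$ is empty: if it is, then $S_k \subseteq \cup_{j=1}^{k-1} A_j$ and the conclusion is immediate. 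Otherwise the argument tracks the last (right-most) agent $n$, whose partition we compare against.

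The key estimates use that the items are chores, so dropping items only raises the value. Since $S_k \cap E_0 \subseteq S_k$ we get $v_n(S_k \cap E_0) \ge v_n(S_k) \ge \MMS_n = \max\{\MMS_n, -\tfrac{2}{n}\}$, where the last equality is the Case~1 hypothesis $\MMS_n \ge -\tfrac{2}{n}$. The main obstacle I anticipate is ruling out the else-branch (Step~\ref{step:chores:mms:2}) and showing the algorithm enters Step~\ref{step:chores:mms:1} in this round. For this I would note that the left-most item $e_L$ of $E_0$ lies in $S_k \cap E_0$ (as $E_0 \subseteq \cup_{j\ge k} S_j$ and $S_k$ is the left-most of these blocks meeting $E_0$), so again by the chores property $v_n(e_L) \ge v_n(S_k \cap E_0) \ge \max\{\MMS_n, -\tfrac{2}{n}\}$; this makes the condition of Step~\ref{step:chores:mms:1:con} hold for agent $n$, forcing $A_k$ to be assigned in Step~\ref{step:chores:mms:1}.

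Finally I would invoke the selection rule of Step~\ref{step:chores:mms:1}. Since $p$ is chosen as the largest index for which some remaining agent meets her threshold on $L(p)\cap E_0$, and agent $n$ already meets it at index $s_k$ because $v_n(L(s_k)\cap E_0)=v_n(S_k\cap E_0)\ge\max\{\MMS_n,-\tfrac{2}{n}\}$, we obtain $p \ge s_k$. Hence $A_k = L(p)\cap E_0 \supseteq L(s_k)\cap E_0 = S_k\cap E_0$, and since $S_k\setminus E_0 \subseteq \cup_{j=1}^{k-1} A_j$, it follows that $S_k \subseteq \cup_{j=1}^{k} A_j$; combined with the inductive hypothesis this gives $\cup_{j=1}^{k} S_j \subseteq \cup_{j=1}^{k} A_j$, closing the induction.
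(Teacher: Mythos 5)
Your proof is correct and follows essentially the same route as the paper's: an induction showing that the prefix $\cup_{j=1}^k A_j$ always extends at least as far right as $\cup_{j=1}^k S_j$, using that agent $n$ already meets her threshold $\max\{\MMS_n,-\tfrac{2}{n}\}$ on $S_k\cap E_0$ so the maximality of $p$ in Step~\ref{step:chores:mms:1} forces $p\ge s_k$. Your write-up is in fact slightly more careful than the paper's, since you explicitly verify that the round cannot fall into the else-branch (Step~\ref{step:chores:mms:2}) by checking $v_n(e_L)\ge\max\{\MMS_n,-\tfrac{2}{n}\}$, a point the paper leaves implicit.
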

	\begin{proof}[Proof of Claim \ref{claim:chores:alg:1}]
	When $k=1$, we have $ S _ 1 \subseteq A _ 1 $; otherwise $ | S _ 1 | > | A _ 1 |$ and $ v _ n ( S _ 1 ) \geq \MMS _ n$, which means agent $ n $  should be allocated $S_1$ at the time when $A_1$ is assigned to agent 1 and thus leads to a contradiction. 
	Similarly, $ S _ 1 \cup S _ 2 \subseteq A _ 1 \cup A _ 2 $; otherwise, since $ S _ 1 \subseteq A _ 1 $, we have $ | S _ 2 \setminus A _ 1 | > | A _ 2 |$ and $v _ n ( S _ 2 \setminus A _ 1 ) \geq v _ n ( S _ 2 ) \geq \MMS_ n $, which contradicts to the choice of agent 2 in while-loop. 
	By induction, we have the claim.
\end{proof}
	
	Claim \ref{claim:chores:alg:1} implies that $ A _ n \subseteq S _ n$, and thus $ v _ n ( A _ n ) \geq v _ n ( S_  n ) \geq \MMS _ n $, which completes the proof of this case.
	
	\medskip
	
	\emph{Case 2:} $\MMS_n < -\frac{2}{n}$.
	For any $ i \in [n-1]$, we have three sub cases described:
	\begin{enumerate}
	\item \emph{Case 2.1:} bundle $ A _ i $ is assigned in Step \ref{step:chores:mms:1} and $ v _ n ( A _ i ) \geq -\frac{2}{n}$.
	\item \emph{Case 2.2:} bundle $ A _ i $ is assigned in tep \ref{step:chores:mms:1} and $ v _ n ( A _ i ) < - \frac{2}{n}$.
	\item \emph{Case 2.3:} bundle $ A _ i $ is assigned in Step \ref{step:chores:mms:2}.
	\end{enumerate}
	If Case 2.3 happens, agent $n$'s value on $A _ i $ satisfies $ v _ n ( A _ i) < - \frac{2}{n}$ due to the condition of Step 4. Thus, for both Case 2.2 and 2.3, we have $ v _ n ( A _ i ) < -\frac{2}{n}$. 
	
	We then focus on Case 2.1. 
	Given an agent $i$ satisfying the condition of Case 2.1, let $\bar{e}\notin A_i$ be the item right-connected with bundle $ A _ i $. 
	Then, we must have $ v _ n ( A _ i \cup \{ \bar{e}\}) < -\frac{2}{n}$; otherwise, at the time when $A _ i $ is assigned, Algorithm \ref{alg:chores:MMS} should choose agent $n$ instead.
	Denote by $\mathcal{P}$ the set of agents who satisfy the condition of Case 2.1 and $\mathcal{Q}$ the set of agents whose bundle is right-connected to some $A _ i $ with $ i \in \mathcal{P}$. 
	Note that $|\mathcal{P}| = | \mathcal{Q}| $ and $\mathcal{P} \cap \mathcal{Q}$ can be non-empty. 
	For any $ i \in \mathcal{P}$, it uniquely maps a $ j _ i \in \mathcal{Q}$ where $ A _{ j _ i }$ is right-connected to $ A _ i $. 
	The value of agent $n$ on this union bundle satisfies $ v _ n ( A _ i \cup A _{ j _ i }) < - \frac{2}{n}$.
	We remark that the collection $\{ A _ i \} _ { i =1} ^ {n}$ can be classified by whether $ i \in \mathcal{P} \cup \mathcal{Q}$ or not. 
	For these $ i \notin \mathcal{P} \cup \mathcal{Q}$, we have $ v_ n ( A _ i ) < -\frac{2}{n}$. 
	Then we have the following claim.
	
	\begin{claim}\label{claim:chores:alg:2}
	$ v _ n ( \cup_{ i \in \mathcal{P} \cup \mathcal{Q}} A _ i ) < -\frac{|\mathcal{P} \cup \mathcal{Q}|}{n}$.
	\end{claim}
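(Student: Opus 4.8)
The plan is to exploit the pairing between $\mathcal{P}$ and $\mathcal{Q}$ induced by the map $i\mapsto j_i$ and to show that, after accounting for the overlaps in $\mathcal{P}\cap\mathcal{Q}$, each bundle of $\mathcal{P}\cup\mathcal{Q}$ contributes on average strictly less than $-\frac1n$ to agent $n$'s value. The key realization is that because all bundles are handed out from left to right and agents are renumbered in allocation order, $A_{j_i}$ is exactly the bundle $A_{i+1}$ immediately to the right of $A_i$; hence $i\mapsto j_i=i+1$ is an injection of $\mathcal{P}$ into $\mathcal{Q}$, and $\mathcal{P}\cup\mathcal{Q}$ splits into maximal blocks of consecutive indices (``runs'').

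First I would record the structure of a single run $\{a,a+1,\dots,b\}$. The left end $a$ must lie in $\mathcal{P}\setminus\mathcal{Q}$, for otherwise $a-1\in\mathcal{P}$ would extend the run; the right end $b$ must lie in $\mathcal{Q}\setminus\mathcal{P}$, for otherwise $b\in\mathcal{P}$ would force $b+1\in\mathcal{Q}$ into the run; and every interior index lies in $\mathcal{P}\cap\mathcal{Q}$. In particular the $\mathcal{P}$-indices of the run are precisely $\{a,\dots,b-1\}$, so the run carries exactly $b-a$ pairs.

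Next I would sum the per-pair inequalities. For each $i\in\mathcal{P}$ we already know $v_n(A_i\cup\{\bar e\})<-\frac2n$ with $\bar e\in A_{j_i}$, and since the items are chores this upgrades to $v_n(A_i)+v_n(A_{i+1})=v_n(A_i\cup A_{i+1})<-\frac2n$. Summing over the $b-a$ pairs of the run and noting that each interior bundle is counted twice while the two endpoints are counted once gives
\[
2\sum_{j=a}^{b}v_n(A_j)-v_n(A_a)-v_n(A_b)\;=\;\sum_{i=a}^{b-1}\big(v_n(A_i)+v_n(A_{i+1})\big)\;<\;-\tfrac{2(b-a)}{n}.
\]
The decisive step is the endpoint bookkeeping: the left endpoint satisfies $v_n(A_a)\le 0$ (here I only need the harmless sign bound, though $a\in\mathcal{P}$ even gives $v_n(A_a)\ge-\frac2n$), whereas the right endpoint $b\in\mathcal{Q}\setminus\mathcal{P}$ is \emph{not} a Case 2.1 bundle and therefore falls under Case 2.2 or 2.3, supplying the crucial extra $v_n(A_b)<-\frac2n$. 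Substituting $v_n(A_a)+v_n(A_b)<-\frac2n$ into the display converts it to $\sum_{j=a}^b v_n(A_j)<-\frac{b-a+1}{n}$, i.e. the run contributes strictly less than $-\frac1n$ per index, and summing over all runs yields $v_n(\cup_{i\in\mathcal{P}\cup\mathcal{Q}}A_i)<-\frac{|\mathcal{P}\cup\mathcal{Q}|}{n}$.

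I expect the main obstacle to be precisely this endpoint analysis rather than the algebra: the entire gain of an extra $-\frac1n$ per run hinges on recognizing that the rightmost bundle of each run is never a Case 2.1 bundle, so its value is strictly below $-\frac2n$. Some care is also needed for the single run that may terminate at the last agent $n$, whose bundle is assigned outside the while-loop and is not classified by Cases 2.1--2.3; there the right-endpoint bound is unavailable, and one should instead feed the explicit pair inequality $v_n(A_{n-1}\cup A_n)<-\frac2n$ directly into the outer argument bounding $v_n(A_n)$ rather than routing agent $n$ through this claim.
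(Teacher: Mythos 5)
Your argument matches the paper's: the same decomposition of $\cup_{i\in\mathcal{P}\cup\mathcal{Q}}A_i$ into maximal runs whose left endpoint lies in $\mathcal{P}\setminus\mathcal{Q}$ and whose right endpoint lies in $\mathcal{Q}\setminus\mathcal{P}$, the same pair inequality $v_n(A_i\cup A_{j_i})<-\frac{2}{n}$, and the same crucial observation that the rightmost bundle of each run is not a Case 2.1 bundle and hence has value below $-\frac{2}{n}$; the only difference is that you sum all overlapping adjacent pairs and correct for double-counting at the two endpoints, whereas the paper selects disjoint pairs and splits into even and odd run lengths, which is an immaterial variation. Your closing remark about a run terminating at agent $n$ (whose bundle is assigned outside the while-loop and is unclassified by Cases 2.1--2.3) flags a genuine edge case that the paper's write-up silently assumes away, and routing the pair inequality $v_n(A_{n-1}\cup A_n)<-\frac{2}{n}$ directly into the outer bound is the right repair.
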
 
	
	\begin{proof}[Proof of Claim \ref{claim:chores:alg:2}]
    Note that $\cup_{ i \in \mathcal{P} \cup\mathcal{Q}} A _ i $ contains several maximal connected blocks $\mathcal{A} _ s$'s on $E$, and each maximal block $\mathcal{A} _ s$ is the union of at least two bundles in $\{ A _i  \} _ { i \in \mathcal{P} \cup \mathcal{Q}}$.
	In particular, $\mathcal{A} _ s $ is in the form: the left most bundle $A _ p $ satisfies $ p \in \mathcal{P}\setminus \mathcal{Q}$; the right most bundle $ A _ q $ satisfies $ q \in \mathcal{Q}\setminus \mathcal{P}$; other bundles $A _ r $ satisfies $ r \in \mathcal{P} \cap \mathcal{Q}$. 
	Let $k_s$ be the number of blocks of $\{ A _ i \}_{i\in \mathcal{P} \cup \mathcal{Q}}$ in $ \mathcal{A} _ s$.
	Recall that for $i \in \mathcal{P}$ and $ j _ i $ with $ A _ {j_i}$ right-connected to $ A _i $, we have $ v _ n ( A _ i \cup A _ { j _ i }) < - \frac{2}{n}$. 
	Thus, if $k_s$ is even, 
	$$ 
	v _ n (\mathcal{A} _ s ) < - \frac{k_s}{2} \times \frac{2}{n}  = -\frac{k_s}{n};
	$$ 
	If $k_s$ is odd, since the right most bundle $A_q$ is not in $\{ A _ i \} _{ i \in \mathcal{P}}$, i.e., $v_n(A_q) < -\frac{2}{n}$, then 
	$$
	v _ n ( \mathcal{ A } _ s) < -\frac{k_s-1}{2}\times\frac{2}{n} - \frac{2}{n} = -\frac{k_s+1}{n}.
	$$
	Therefore, we have $ v _ n (\mathcal{A} _ s ) < -\frac{k _ s}{n}$ for all $\mathcal{A} _ s$. 
	Summing over all the maximal connected blocks $\mathcal{A} _ s$, we have
	$$
	 v _ n (\cup_{ i \in \mathcal{P} \cup \mathcal{Q} } A _ i) 
	 = \sum_{s} v _ n ( \mathcal{A} _ s) < - \sum_{s} \frac{k_s}{n} 
	 = - \frac{| \mathcal{P} \cup \mathcal{Q}|}{n}.
	$$
\end{proof}
	

	Combing Claim \ref{claim:chores:alg:2} and the fact that $ v_ n ( A _ i ) < -\frac{2}{n}$ for $ i \notin \mathcal{P} \cup \mathcal{Q}$, we have
	$$
	\begin{aligned}
	 v _ n ( \cup_{ i =1} ^ {n-1} A _ i )
	 & = \sum _ { i \in \mathcal{P}\cup \mathcal{Q}} v _ n ( A _ i) + \sum_{i \notin \mathcal{P}\cup \mathcal{Q}} v _ n ( A _ i) \\
	 & < - \frac{|\mathcal{P} \cup \mathcal{Q}|}{n} - \frac{2(                           n-1 - | \mathcal{P}\cup\mathcal{Q}|)}{n} \\
	 & \leq - \frac{n-1}{n},
	 	\end{aligned}
	$$ where the last inequality is due to $ | \mathcal{P} \cup \mathcal{Q}| \leq n - 1 $. 
	Since the valuations are normalized, we have $ v _ n ( A _n ) \geq -\frac{1}{n} \geq \MMS _ n $.
	
	In conclusion, $ v _ i ( A _ i ) \geq \textnormal{MMS} _ i $ for all $i \in N$, which finishes the proof of Lemma \ref{Lemma::MMS-UW-chore}.
\end{proof}

\medskip
    
\noindent\begin{proof}[Proof of Theorem \ref{thm::MMS-chore-UW}]
    We only prove for $n \ge 3$ in this section and defer the discussion of $n=2$ to Section \ref{sec:n=2}. 
    The proof is split into the following two lemmas.
    
    \begin{lemma}
\label{lem:chores:MMS:mid$Utilitarian$:n>2:upper}
    For the problem of Chores $\mid$ Utilitarian $\mid$ MMS with $n\ge 3$, $\PoF \le 3n$.
\end{lemma}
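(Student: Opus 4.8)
The plan is to exhibit, for every fair-chores instance, a single MMS allocation whose utilitarian disutility is within a factor $3n$ of that of a welfare-maximizing allocation; since $\PoF$ is a minimum over all MMS allocations, this bounds it. The natural witness is the allocation $\mathbf A$ returned by Algorithm~\ref{alg:chores:MMS}, which is MMS by Lemma~\ref{Lemma::MMS-UW-chore}, but I would only use it in one of two regimes.

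The first and main step is to prove an \emph{absolute} bound $|\UW(\mathbf A)|\le 3$, independent of $\OPT_U$, by classifying the bundles of $\mathbf A$ according to how they are created. Every bundle assigned in Step~\ref{step:chores:mms:1}, together with the final bundle handed to the last surviving agent $\ell$ (Step~\ref{step:chores:mms:3}), has value at least $-\tfrac{2}{n}$: for Step~\ref{step:chores:mms:1} this is just the selection threshold $\max\{\MMS_i,-\tfrac{2}{n}\}$, and for $\ell$ it is exactly the inequality $v_\ell(A_\ell)\ge -\tfrac{2}{n}$ extracted in the two cases of the proof of Lemma~\ref{Lemma::MMS-UW-chore} (and it holds trivially if $\ell$ receives nothing). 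Since there are at most $n$ such bundles, they contribute at most $n\cdot\tfrac{2}{n}=2$ to the total disutility. The only remaining bundles are the singletons $\{e\}$ assigned in Step~\ref{step:chores:mms:2}, and these are the delicate ones, since an individual badly-disliked item could cost the algorithm a lot. The key observation is that $\ell$ never leaves $N_0$ during the while-loop, so at the moment such an $e$ is assigned to $i=\arg\max_{j\in N_0}v_j(e)$ we have $v_\ell(e)\le v_i(e)\le 0$, i.e.\ $|v_i(e)|\le|v_\ell(e)|$. As the Step~\ref{step:chores:mms:2} items are distinct and hence form a subset of $E$, their aggregate cost is at most $\sum_e|v_\ell(e)|\le |v_\ell(E)|=1$. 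Combining the two parts gives $|\UW(\mathbf A)|\le 3$.

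With this bound I would split on the magnitude of $\OPT_U$. If $|\OPT_U|\ge\tfrac{1}{n}$, then $\mathbf A$ already suffices, since $\tfrac{|\UW(\mathbf A)|}{|\OPT_U|}\le\tfrac{3}{1/n}=3n$. If instead $|\OPT_U|<\tfrac{1}{n}$, the algorithm's output may be far from optimal, so here I would discard $\mathbf A$ and take a welfare-maximizing allocation $\mathbf O$ as the witness directly. The point is that when $|\OPT_U|<\tfrac{1}{n}$ every agent satisfies $|v_i(O_i)|\le|\OPT_U|<\tfrac{1}{n}$, hence $v_i(O_i)>-\tfrac{1}{n}\ge \MMS_i$ because $\MMS_i\le -\tfrac{1}{n}$ for chores; therefore $\mathbf O$ is itself an MMS allocation attaining the optimum, giving ratio $1\le 3n$.

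I expect the absolute bound $|\UW(\mathbf A)|\le 3$, and specifically the handling of the Step~\ref{step:chores:mms:2} singletons, to be the technical crux: the clean way to control the \emph{total} cost of all such items is the observation that the never-removed last agent dislikes each of them at least as much, capping their sum by the normalized total $1$. The second, more conceptual subtlety is recognizing that the algorithm cannot serve as the witness in every regime; for very cheap optima one cannot bound the algorithm's own ratio at all, and it is essential that $\PoF$ only requires the \emph{minimum} over MMS allocations, which is supplied by the fact that a sufficiently cheap optimum is automatically MMS.
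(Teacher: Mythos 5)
Your proposal is correct and follows essentially the same route as the paper: both establish the absolute bound $\UW(\mathbf A)\ge -3$ for the output of Algorithm~\ref{alg:chores:MMS} by charging the threshold bundles at $-\tfrac{2}{n}$ each and capping the aggregate cost of the Step~\ref{step:chores:mms:2} singletons at $-1$ via a reference agent who stays in $N_0$ throughout, and both dispose of the cheap-optimum regime by noting that an allocation with welfare above $-\tfrac{1}{n}$ is automatically MMS. The only cosmetic difference is your choice of the last surviving agent $\ell$ as the reference for the singleton sum, where the paper uses the rightmost singleton recipient $i_p$; the argument is the same.
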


\begin{lemma}
\label{lem:chores:MMS:mid$Utilitarian$:n>2:lower}
    For the problem of Chores $\mid$ Utilitarian $\mid$ MMS with $n \ge 3$, $\PoF \ge \frac{n}{5}$.
\end{lemma}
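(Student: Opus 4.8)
The plan is to exhibit a single family of instances, parameterized by $n$, on which $\OPT_U = -\Theta(1/n)$ while \emph{every} MMS allocation has utilitarian welfare $-\Theta(1)$, so that the ratio is $\Theta(n)$. The guiding intuition is that the efficient allocation concentrates a small but ``coarse'' bundle on the single agent who minds it least, whereas MMS fairness caps each agent's disutility at $|\MMS_i|$ and thereby forbids this concentration, forcing a constant amount of disutility onto agents who value the bundle highly.

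Concretely, I would place $M$ fine-grained ``filler'' items $e_1,\dots,e_M$ on the left and two ``heavy'' items $b_1,b_2$ on the right. Agent $1$ values each filler at $-\frac{1-2/n}{M}$ and each heavy item at $-\frac1n$ (so $v_1(E)=-1$); every other agent values the fillers at $0$ and each heavy item at $-\frac12$ (so $v_i(E)=-1$). First I would compute the maximin shares: by isolating $b_1$ and $b_2$ and splitting the fillers evenly into blocks of value $\frac1n$, agent $1$ attains $\MMS_1=-\frac1n$, while for each other agent any block containing a heavy item already costs $\frac12$, so $\MMS_i=-\frac12$. Consequently the MMS cap lets agent $1$ hold at most one heavy item, since $v_1(\{b_1,b_2\})=-\frac2n<-\frac1n=\MMS_1$, and likewise lets every other agent hold at most one heavy item, since $v_i(\{b_1,b_2\})=-1<-\frac12=\MMS_i$.

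Next I would pin down the two welfare values. For $\OPT_U$, assigning $\{b_1,b_2\}$ to agent $1$ (the cheapest holder, at $-\frac2n$) and all fillers to the other agents (who value them at $0$) is a connected partition, giving $\OPT_U=-\frac2n$; note this allocation is \emph{not} MMS fair, as agent $1$ exceeds its cap. For the MMS side, both heavy items must be allocated and, by the cap computation above, no agent may take both; hence at least one of them is placed on an agent valuing it $-\frac12$, while the other costs at least $\frac1n$, so every MMS allocation has disutility at least $\frac1n+\frac12$. Matching this with a concrete MMS allocation (agent $1$ takes one heavy item, a second agent takes the other, and the remaining agents absorb the free fillers in separate connected blocks) shows the best MMS welfare equals $-(\frac12+\frac1n)$. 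Dividing yields $\PoF\ge\frac{\frac12+\frac1n}{\frac2n}=\frac n4+\frac12\ge\frac n5$ for all $n\ge 3$.

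The main obstacle, and the step needing the most care, is the lower bound over \emph{all} MMS allocations: I must verify the maximin-share values exactly — in particular that $\MMS_1=-\frac1n$ rather than something strictly larger, which forces the fillers to be splittable into blocks of value $\frac1n$ and hence $M$ to be chosen suitably divisible — and then argue by a pigeonhole/counting step that $b_1$ and $b_2$ can never both land on agent $1$ in any connected MMS allocation, while routing either to a different agent genuinely costs $\frac12$. The remaining work is routine bookkeeping: confirming that the claimed optimal and best-MMS allocations are feasible connected partitions for every $n\ge 3$.
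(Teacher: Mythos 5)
Your construction is correct and follows essentially the same strategy as the paper's proof: exhibit an instance where the welfare-optimal allocation concentrates the chores on the one agent who minds them least, while her MMS cap forces a constant amount of spillover onto agents who value those chores a factor of $\Theta(n)$ more. The paper realizes this with $2n$ medium items (agent $1$ capped at $n$ of them) rather than your two heavy items (agent $1$ capped at one), but both instances check out and both yield the same bound $\frac{n}{4}+\frac{1}{2}\ge\frac{n}{5}$.
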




\begin{proof}[Proof of Lemma \ref{lem:chores:MMS:mid$Utilitarian$:n>2:upper}]
	For the upper bound, if the maximum utilitarian welfare is at least $-\frac{1}{n}$, then the allocation that maximize the utilitarian welfare must be MMS. 
	Thus it suffices to consider the case when the optimal utilitarian welfare is strictly less than $ - \frac{1}{n}$. 
	Let $\mathbf{A}$ be the allocation returned by Algorithm \ref{alg:chores:MMS}. 
	By Lemma~\ref{Lemma::MMS-UW-chore}, allocation $\mathbf{A}$ is MMS. 
	Let $N_1$ and $N_2$ be the sets of agents whose bundles are assigned in Step \ref{step:chores:mms:1} or \ref{step:chores:mms:2} respectively, and for any $ i \in N _ 1 $, $ v _ i ( A _ i ) \geq -\frac{2}{n} $. 
	Without loss of generality, assume agent $n$ is the last one to receive a bundle. 
	Note $n $ may or may not be in $ N _ 1 \cup N_ 2$, and by the proof of Lemma~\ref{Lemma::MMS-UW-chore}, $ v _ n ( A _ n ) \geq - \frac{1}{n}$. 
	
	Next we bound the welfare of agents in $N _ 2 $.
	Denote by $ N _ 2 = \{ i_ 1, i _ 2, \ldots, i _ p \}$, and bundle $ A _ { i _ l }$ is on the left of $ A _ { i _ k }$ for any $ l < k \leq p$. 
	By the selection of agents in Step \ref{step:chores:mms:2:con}, $ v _ { i _ p } ( A _ { i _ k }) \leq v _ { i _ k }( A _ {i _ k })$  for $k \leq p$. 
	Thus, the welfare of agents in $N _ 2 $ is
	$$
	\sum_{ i \in N _ 2 } v _ i ( A _ i) = \sum_{ k \in [p]} v _ {i_k }( A _ {i _ k }) \geq \sum_{ k \in [p]} v _ {i _ p } ( A _ {i_k}) \geq - 1,
	$$ 
	where the last inequality is by the normalized valuations. 
	Accordingly, the overall welfare of $\mathbf{A}$ is bounded by
	$$
	\begin{aligned}
	\sum_{ i \in [n]} v _ i ( A _ i ) &\geq \sum_{ i \in N_1} v _ i ( A _ i) + \sum_{ i \in N _ 2} v _ i ( A _ i ) + v _ n ( A _ n ) \\
	& \geq -\frac{2}{n}\cdot |N _ 1 | - 1 - \frac{1}{n}\\
	& \geq -3,
	\end{aligned}
	$$ 
	where the last inequality is due to $ | N _ 1 | \leq n - 1 $ and the condition in Step \ref{step:chores:mms:1:con}. 
	Thus, we find an MMS allocation with welfare at least $ - 3$, and the price of MMS is at most $3n$.
	\end{proof}

	\medskip
	
	\begin{proof}[Proof of Lemma \ref{lem:chores:MMS:mid$Utilitarian$:n>2:lower}]
	As for the lower bound, consider the instance with $n$ agents and a set $E = \{ e _ 1, \ldots, e _ {3n - 2} \}$ of $3n-2$ chores. 
	The valuations are shown in Table \ref{tab:chore-utilitarian-mms-n>3}.
	\begin{table}[h]
	    \centering
	    \begin{tabular}{c|c|c|c|c|c|c}
	         Items & $e_1$ & $\cdots$ & $e_{2n}$ & $e_{2n+1}$ & $\cdots$ & $e_{3n-2}$  \\
	         \hline
	         $v_1(\cdot)$ & $-\frac{1}{n^2}$ & $\cdots$ & $-\frac{1}{n^2}$ & $-\frac{1}{n}$ & $\cdots$ & $-\frac{1}{n}$  \\
	         $v_i(\cdot)$ for $i\ge 2$ & $-\frac{1}{2n}$ & $\cdots$ & $-\frac{1}{2n}$ & $0$ & $\cdots$ & $0$ 
	    \end{tabular}
	    \caption{The Lower Bound Instance in Lemma \ref{lem:chores:MMS:mid$Utilitarian$:n>2:lower}}
	    \label{tab:chore-utilitarian-mms-n>3}
	\end{table}
	We can verify that $\MMS_i = -\frac{1}{n}$ for all $ i \in N$. 
	In an utilitarian welfare-maximization allocation $\mathbf{O}=(O_1,\ldots, O_n)$, the first $2n$ items are assigned to agent 1, and the rest items are arbitrarily allocated to the other agents so that $v_1(O_1)=-\frac{2}{n}$, $v_i(O_i)=0$ for $i \ge 2$ and $\UW(\mathbf{O}) = -\frac{2}{n}$. 
	However, this allocation is not MMS fair to agent 1 since $v _ 1 ( O _ 1) < \MMS_1$. 
	For any MMS allocation $\mathbf{A}$, agent 1 can receive at most $n$ of the first $2n$ items, and thus, at least $n$ of the first $2n$ items are assigned to agents $ i \geq 2 $. 
	Thus, $\UW(\mathbf{A}) \leq -\frac{1}{n} - \frac{1}{2}$, and consequently, the price of MMS is upper bounded by
	$$
	\PoF \ge \frac{\frac{1}{n} +\frac{1}{2}}{\frac{2}{n}} = \frac{1}{2} + \frac{n}{4} = \Omega(n).
	$$
	Therefore, the price of MMS with respect to utilitarian welfare is $\Theta(n).$
\end{proof}
\end{proof}

\medskip
\noindent\begin{proof}[Proof of Theorem \ref{thm::MMS-chore-RW}]
    Again, we only prove for $n \ge 3$ in this section and defer the discussion of $n=2$ to Section \ref{sec:n=2}. 
    The proof is split into the following two lemmas.

\begin{lemma}
\label{lem:chores:MMS:Egalitarian:n>2:upper}
    For the problem of Chores $\mid$ Egalitarian $\mid$ MMS with $n \ge 3$, $\PoF \le \frac{n}{2}$.
\end{lemma}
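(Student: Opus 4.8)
The plan is to prove the bound by exhibiting, for every fair-chores instance, a single MMS allocation whose egalitarian welfare is at least $\frac{n}{2}\OPT_E$. Since for chores $\PoF=\sup_{\cI}\min_{\mathbf A\in \MMS(\cI)}\EW(\mathbf A)/\OPT_E$ and both $\EW(\mathbf A)$ and $\OPT_E$ are non-positive, the inequality $\EW(\mathbf A)\ge \frac n2\OPT_E$ is exactly equivalent to $\EW(\mathbf A)/\OPT_E\le \frac n2$ (dividing by the negative $\OPT_E$ flips the direction), so producing one such MMS allocation for each instance yields $\PoF\le \frac n2$. Throughout I would use the elementary bound $\MMS_i\ge -1$, which holds because any single bundle $X_j$ of a connected $n$-partition has value $\sum_{e\in X_j}v_i(e)\ge v_i(E)=-1$.

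I would split on the magnitude of $\OPT_E$. In the easy regime $\OPT_E\le -\frac2n$ we get $\frac n2\OPT_E\le -1\le \MMS_i$ for every $i$, so the MMS allocation returned by Algorithm \ref{alg:chores:MMS} (which is MMS by Lemma \ref{Lemma::MMS-UW-chore}) already satisfies $v_i(A_i)\ge \MMS_i\ge \frac n2\OPT_E$ for all agents, hence $\EW\ge \frac n2\OPT_E$ and we are done. The whole difficulty is therefore concentrated in the regime $-\frac2n<\OPT_E\le 0$, where $\frac n2\OPT_E\in(-1,0]$ and the naive bound $\EW\ge \min_i\MMS_i$ is too weak.

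For the hard regime I would run a moving-knife variant of Algorithm \ref{alg:chores:MMS} in which the threshold $\max\{\MMS_i,-\frac2n\}$ is replaced by $t_i:=\max\{\MMS_i,\frac n2\OPT_E\}$: scanning from the left, repeatedly assign to a satisfiable agent the farthest prefix $L(p)\cap E_0$ with $v_i(L(p)\cap E_0)\ge t_i$, delete that agent and the prefix, and give the last remaining agent everything left. By construction every non-last agent $i$ receives a connected bundle with $v_i(A_i)\ge t_i\ge \frac n2\OPT_E$ and $v_i(A_i)\ge\MMS_i$. Two facts remain. First, the scan never gets stuck at the leftmost item: if every remaining agent valued that item below $\frac n2\OPT_E$, then in the optimal allocation $\mathbf O$ whoever receives the bundle containing it would have value below $\frac n2\OPT_E\le\OPT_E$ (as $\tfrac n2\ge 1$), contradicting $\EW(\mathbf O)=\OPT_E$; I would promote this first-item argument to every iteration through a reduction claim, analogous to Claim \ref{claim:reduction} and Claim \ref{claim:chores:alg:1}, showing that deleting a satisfied agent together with a left prefix leaves the remaining $n-1$ agents an instance whose optimal egalitarian welfare is still at least $\OPT_E$. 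Second, the last agent must also meet $\MMS_n$ and $\frac n2\OPT_E$; here I would re-use the pairing/counting argument of Lemma \ref{Lemma::MMS-UW-chore} (the sets $\mathcal P,\mathcal Q$ in Claim \ref{claim:chores:alg:2}) to bound the total value that the first $n-1$ bundles remove from agent $n$'s perspective, together with the pigeonhole step used in Lemma \ref{lem:goods:e:mms:ub}.

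The main obstacle is exactly this last-agent feasibility in the hard regime: establishing that the farthest-prefix rule, now driven by the welfare threshold $\frac n2\OPT_E$ rather than by $-\frac2n$, still leaves the final agent a connected bundle of value at least $\max\{\MMS_n,\frac n2\OPT_E\}$ and consumes all items. I expect this to require adapting the even/odd block counting of Claim \ref{claim:chores:alg:2} to the new threshold and combining it with the reduction claim that preserves $\OPT_E$; the remaining parts (the reduction to producing one good MMS allocation, the case $\OPT_E\le-\frac2n$, and the per-agent bound $v_i(A_i)\ge t_i$) are routine. Finally, since computing $\OPT_E$ is NP-hard, I would make the construction polynomial by guessing $\OPT_E$ via the same binary-search underestimate technique used after Lemma \ref{lem:goods:e:mms:ub}.
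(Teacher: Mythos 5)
Your high-level framing is fine (the equivalence $\EW(\mathbf A)\ge\frac n2\OPT_E\iff\EW(\mathbf A)/\OPT_E\le\frac n2$, the trivial regime $\OPT_E\le-\frac2n$ via $\MMS_i\ge-1$, and the observation that $\OPT_E\ge-\frac1n$ makes the optimal allocation itself MMS), but the two ``remaining facts'' you defer are precisely the content of the lemma, and the tools you propose to discharge them do not transfer to your modified threshold. First, your reduction claim --- that deleting a served agent together with her prefix leaves an instance whose optimal egalitarian welfare is still at least $\OPT_E$ --- is not analogous to Claim \ref{claim:reduction} or Claim \ref{claim:chores:alg:1}: those arguments nest the algorithm's prefix inside the first bundle of a partition evaluated by a \emph{single} agent's valuation, whereas $\OPT_E$ is witnessed by a partition whose bundles are evaluated by \emph{different} agents. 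Concretely, the leftmost bundle $O_{j_1}$ of an optimal allocation satisfies $v_{j_1}(O_{j_1})\ge\OPT_E\ge\frac n2\OPT_E$ but need not satisfy $v_{j_1}(O_{j_1})\ge\MMS_{j_1}$, so under your threshold $t_i=\max\{\MMS_i,\frac n2\OPT_E\}$ the farthest admissible prefix can end strictly inside $O_{j_1}$, and the leftover must then be absorbed by some surviving agent's bundle (or, if the removed agent is not $j_1$, some $O_{j_k}$ loses its owner); either way the reduced instance's egalitarian optimum can drop below $\OPT_E$, and your ``never stuck'' argument collapses after the first iteration. Second, the even/odd counting of Claim \ref{claim:chores:alg:2} is numerically calibrated to the threshold $-\frac2n$ (so that $\frac{n-1}2$ pairs account for $-\frac{n-1}n$, leaving $-\frac1n$ for the last agent); replacing it by $\frac n2\OPT_E$ changes the bookkeeping to something of the form $v_n(A_n)>-1+(\#\text{pairs})\cdot\frac n2|\OPT_E|$, which is not obviously at least $\frac n2\OPT_E$, breaks down when the last served prefix is acceptable to agent $n$ and adjacent to $A_n$ (there is then nothing to pair it with), and the nesting argument that separately guarantees $v_n(A_n)\ge\MMS_n$ requires agent $n$'s threshold to be exactly $\MMS_n$, not $\max\{\MMS_n,\frac n2\OPT_E\}$.

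The paper avoids both problems by \emph{not} changing the threshold: it runs Algorithm \ref{alg:chores:MMS} as is (threshold $\max\{\MMS_i,-\frac2n\}$, with the single-item fallback of Step \ref{step:chores:mms:2}, so the scan never gets stuck by construction) and aims for the absolute bound $\EW\ge-\frac12$, which combined with $\OPT_E<-\frac1n$ already yields the ratio $\frac n2$. The only way the output can violate $\EW\ge-\frac12$ is that some agent $k$ received a single chore $A_k$ of value below $-\frac12$ via the fallback; the paper then \emph{repairs} the allocation: the counting argument (with the original $-\frac2n$ calibration) shows at most $\lceil n/2\rceil$ agents received non-empty bundles, the assumption $\OPT_E>-\frac2n\ge-\frac12$ guarantees an earlier agent $i'$ with $v_{i'}(A_k)\ge-\frac12$ to whom $A_k$ can be handed, and Algorithm \ref{alg:chores:MMS} is re-run on the freed bundle $A_{i'}$ with the many empty-handed agents (each of whom values $A_{i'}$ above $-\frac12$ by normalization). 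A separate explicit case analysis is given for $n=3$. If you want to salvage your plan, you would essentially have to reinvent this repair step; the threshold substitution alone does not work.
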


\begin{lemma}
\label{lem:chores:MMS:Egalitarian:n>2:lower}
    For the problem of Chores $\mid$ Egalitarian $\mid$ MMS with any $n \ge 3$, there is an instance where no MMS allocation has egalitarian welfare strictly greater than $\frac{2}{n}$ fraction of the maximum egalitarian welfare.
\end{lemma}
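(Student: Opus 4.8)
The plan is to establish the matching lower bound $\PoF\ge\frac n2$ by producing a single fair-chores instance $\cI$ in which no MMS allocation does better than a $\frac2n$ fraction of the optimum; equivalently, every MMS allocation $\cA$ satisfies $\EW(\cA)\le \frac n2\cdot\OPT_E(\cI)$ (both quantities are negative, so this means $\frac n2$ times more negative). Together with the upper bound of Lemma~\ref{lem:chores:MMS:Egalitarian:n>2:upper} this gives $\PoF=\frac n2$ in Theorem~\ref{thm::MMS-chore-RW}. Before building $\cI$ it is worth recording one structural constraint any witness must obey: if $\OPT_E(\cI)>-\frac1n$, then the egalitarian-optimal allocation already gives every agent value $>-\frac1n\ge\MMS_i$, hence it is itself an MMS allocation and the ratio collapses to $1$. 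So the instance must force $\OPT_E$ down to essentially $-\frac1n$, and the gap has to come from an allocation that is egalitarian-optimal yet \emph{narrowly} violates MMS for one agent, every repair of which cascades (through indivisibility and connectivity) into some agent being pushed near $-\frac12$.

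Concretely, I would split the agents into $n-2$ ``demanding'' agents with $\MMS_i=-\frac1n$ and $2$ ``absorbing'' agents, and calibrate the valuations of a small set of transferable heavy chores so that each such chore carries disutility about $\frac1n$ for a demanding agent but about $\frac12$ for an absorbing agent, padding the rest of the line with items whose per-item weight is driven to $0$ by a scaling parameter $k\to\infty$ (exactly as in the goods egalitarian instance of Lemma~\ref{lem:goods:e:mms:lb}). The single ratio $\tfrac{1/2}{1/n}=\tfrac n2$ between these two disutilities is the entire source of the bound: the optimum can \emph{spread} the transferable chores across the demanding agents, keeping everyone near $-\frac1n$ and so $\OPT_E\ge-\frac1n$ up to lower-order terms, whereas every MMS allocation is forced to \emph{concentrate} them on the two absorbing agents, one of whom ends up with value $\le-\frac12+o(1)$. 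The write-up then mirrors the goods case and splits into two parts: the lower bound on $\OPT_E$, which I would prove by simply exhibiting the explicit spread allocation and evaluating each agent's value, and the upper bound on $\max_{\cA\in\mathrm{MMS}}\EW(\cA)$, which is the substantive half.

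The main obstacle is precisely this upper bound: I must show that in \emph{every} connected allocation with $v_i(A_i)\ge\MMS_i$ for all $i$, both transferable chores fall on the absorbing agents. The difficulty is that a transferable chore can never be kept away from a demanding agent by a weight argument alone, because $\MMS_i\le\min_e v_i(e)$ means any agent may take any single item in isolation and still meet its maximin share. The exclusion must therefore be engineered through connectivity, by laying the chores and padding out so that, once each demanding agent has claimed the contiguous light bundle its maximin share compels, the transferable chores lie strictly inside the segment that the absorbing agents must cover, and any connected attempt by a demanding agent to reach a transferable chore also forces it to swallow enough extra weight to drop below $-\frac1n$. I expect to make this rigorous by ordering the $n$ bundles along the line and arguing block by block --- in the style of Claim~\ref{claim:chores:alg:1} in the proof of Lemma~\ref{Lemma::MMS-UW-chore} --- that no demanding agent's connected bundle can intrude into the heavy portion while respecting $v_i(A_i)\ge\MMS_i$; this localization, combined with a pigeonhole count that leaves the two heavy chores for the two absorbing agents, is the crux of the argument.
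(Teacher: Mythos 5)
Your high-level framing --- force $\OPT_E\approx-\frac{1}{n}$ while making every MMS allocation push a $-\frac{1}{2}$ chore onto someone --- matches the paper's, but the construction you propose has a gap at exactly the step you call the crux. You want the MMS constraint to \emph{exclude} the heavy items from the $n-2$ demanding agents, enforced ``through connectivity, so that any connected attempt by a demanding agent to reach a transferable chore also forces it to swallow enough extra weight.'' But a bundle consisting of a single item is always connected and swallows nothing: since for chores $\MMS_i\le\min_{e}v_i(e)$, a demanding agent that takes a heavy chore as a singleton bundle automatically meets its maximin share. Worse, in your calibration the two heavy chores already account for the absorbing agents' entire disutility of $-1$, so the padding is worth about $0$ to them. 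The allocation in which two demanding agents take the two heavy chores as singletons, the remaining demanding agents cut the padding into connected pieces worth about $-\frac{1}{n}$ each (possible since per-item weights vanish as $k\to\infty$), and the two absorbing agents take the leftovers (worth about $0$ to them) is then a connected MMS allocation with egalitarian welfare about $-\frac{1}{n}$. This defeats the claimed bound, and no pigeonhole count over the two heavy chores can rescue it, because MMS for chores can never forbid a singleton pickup.

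The paper's instance inverts the roles to avoid precisely this. There is one special agent (agent $1$); the two items worth $-\frac{1}{2}$ to everyone else are worth only $-\frac{1}{n}$ and $-\epsilon$ to her, and they sit at positions $e_1$ and $e_3$ with a filler item $e_2$ between them. The egalitarian optimum needs agent $1$ to absorb \emph{both}, which by connectivity means taking the block $\{e_1,e_2,e_3\}$, and the $\epsilon$-perturbations on $e_4,e_5$ are tuned so that this three-item block (not any single item) has value strictly below $\MMS_1$. Hence every MMS allocation hands one of $e_1,e_3$ to some agent $i\ge 2$ at cost $-\frac{1}{2}$, while $\OPT_E\ge-\frac{1}{n}-2\epsilon$, giving the ratio $\frac{n}{2}$. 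If you want to keep the many-demanding-agents architecture, the obstruction has to take this form --- one agent forced by connectivity to bundle two separated items together with what lies between them --- rather than an attempt to keep single items away from agents, which the MMS condition for chores cannot do.
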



\medskip
\noindent\begin{proof}[Proof of Lemma \ref{lem:chores:MMS:Egalitarian:n>2:upper}]
    If $\OPT_E \geq -\frac{1}{n}$, then a egalitarian welfare-maximizing allocation is also MMS. If $\OPT_E \leq -\frac{2}{n}$, then the statement trivially holds due to the normalized valuations. Thus we can further assume $ -\frac{2}{n} < \OPT _ E < -\frac{1}{n}$.
    
    Let $\mathbf{A}$ be the allocation returned by Algorithm~\ref{alg:chores:MMS} and $\mathbf{A}$ is MMS according to Lemma~\ref{Lemma::MMS-UW-chore}. Renumber the agents from 1 to $n$ by the order of receiving bundles in the algorithm, where agent 1 is the first to receive a bundle and agent $n$ is the last. We first consider the case of $n=3$ and show either $\EW(\mathbf{A}) \geq -\frac{1}{2}$ holds or another allocation achieves egalitarian welfare at least $-\frac{1}{2}$. If $\min_{i}\MMS_i \ge - \frac{1}{2}$, then clearly, we have $\EW(\mathbf{A}) \ge -\frac{1}{2}$ and the statement holds. Then, we can further assume $\MMS_i < -\frac{1}{2}$ for some $i$. Denote by $\mathbf{T} = (T_1, T_2,T_3)$ the $\MMS_i$-defining partition and $v_i(T_1) \le v _ i ( T _ 2) \le v _ i ( T _ 3)$. Accordingly, we have $|T_1| = 1$ that means there exists a single chore on which agent $i$ has value less than $-\frac{1}{2}$. We then split the proof into three cases based on the number of agents with $\MMS_i < -\frac{1}{2}$.
    
    \emph{Case 1}: only one agent $x$ has $\MMS_x < -\frac{1}{2}$. Denote by $e^x$ the item with $v _ x ( e^ x) < - \frac{1}{2}$ and it suffices to show $e^x \notin A _ x$. According to the proof of Lemma~\ref{Lemma::MMS-UW-chore}, we have $ v _ 3 ( A _ 3) \ge -\frac{1}{3}$ and thus $x \neq 3$. Then, when agent $x$ receives $A _ x$, there exists another agent has not received a bundle, and due to the tie break rule of Step 5 and 8, item $e^x$ can never be assigned to agent $x$. Thus, $\PoF \le \frac{3}{2}$.
    
    \emph{Case 2}: two agents $x$, $y$ has MMS value less than $-\frac{1}{2}$. Denote by $e^i$ the item with $v _ i ( e ^ i ) < -\frac{1}{2}$ for $i = x,y$. If $e^x = e^y$, consider allocation $\mathbf{B}$ in which $e^x$ is assigned to agent $z\neq x,y$; all items at the right of $e^x$ to agent $x$; all items at the left of $e^x$ to agent $y$. We can verify that allocation $\mathbf{B}$ is MMS and has welfare $\EW(\mathbf{B}) \ge -\frac{1}{2}$. If $e^x \neq e^y$, without loss of generality, we assume $e^x$ at the left of $e^y$. Consider allocation $\mathbf{C}$ in which all items at the right of $e^x$ (not including $e^x$) are assigned to agent $x$; all other items are assigned to agent $y$. We can verify that allocation $\mathbf{C}$ is MMS and has welfare $\EW(\mathbf{C}) \ge -\frac{1}{2}$. Thus, $\PoF \le \frac{3}{2}$.
    
    \emph{Case 3}: agents $x,y,z$ have MMS value less than $-\frac{1}{2}$. Denote by $e^i$ the item with $v _ i ( e ^ i ) < -\frac{1}{2}$ for $i = x,y,z$. If $e^x = e^y = e^z$, then one can verify that allocation $\mathbf{A}$ indeed achieves $\EW(\mathbf{A}) = \OPT_E$. If items $ e^i$ are not identical, without loss of generality, we assume $e^x$ the left-most item among $e^x, e^y, e^z$. Consider the allocation $\mathbf{D}$ in which all items at the right of $e^x$ (not including $e^x$) are assigned to agent $x$; all other items are assigned to agent $y$. We can verify that allocation $\mathbf{D}$ is MMS and has welfare $\EW(\mathbf{D}) \ge -\frac{1}{2}$. Thus, $\PoF \le \frac{3}{2}$.

   	We then prove the statement for $ n \geq 4$ and it is sufficient to show that there exists a connected MMS allocation with $\EW \geq - 1/2$. If $\min _ { i \in [n]} v _ i ( A _ i ) \geq -1/2$, the statement trivially holds. We can further assume $\min _ { i \in [n]} v _ i ( A _ i ) < - 1/2$ and suppose $ v _ k ( A _ k ) \leq v _ i ( A _ i )$ for $ i \in [n]$. Then, due to condition of Step 4, bundle $ A _ k $ must be assigned in Step 9 and $ | A _ k | = 1 $. Due to normalized valuations, we have $ v _ k ( A _ i ) > - 1/2$ for any $ i \neq k $. 
   	Since Step 8 chooses the agent with the largest value, we have $v _ j ( A _ k ) < -\frac{1}{2}$ for $ j \geq k $. Accordingly, we have $ v _ n ( A _ k ) < -\frac{1}{2}$ that implies $\MMS_n < -\frac{1}{2} \leq -\frac{2}{n}$. We then analyse the value of agent $ n $ and prove that at most $\lceil \frac{n}{2} \rceil$ agents receive non-empty bundles. Denote by $\mathcal{P}$ the set of agents (excluding the one whose bundle is left-connected to $ A _ k $) whose bundle is assigned in Step 6 and $\mathcal{Q}$ the set of agents (excluding agent $k$) whose bundle is right-connected to $ A _ i $ with $i \in \mathcal{P}$. By similar argument as that in the proof of Lemma~\ref{Lemma::MMS-UW-chore}, we have
	$
	v _ n ( \cup _{i \in \mathcal{P} \cup \mathcal{Q}} A _ i ) < - \frac{|\mathcal{P} \cup \mathcal{Q}|}{n}.
	$ and $ v _ n (  A _ i ) < - \frac{2}{n}$ for any $ i \notin \mathcal{P}$. Let $\bar{\mathcal{P}} = \{ i\in [n-1] \mid A _ i \neq \emptyset \text{ and }  i \neq k\}$. We then have the following,
	$$
	\begin{aligned}
		v _ n ( \cup _ { i =1} ^ {n-1} A _ i ) &=  v _ n ( A _ k ) + v _ n ( \cup _{ i \in \mathcal{P} \cup \mathcal{Q}} A _ i ) + v _ n ( \cup _{ i \in \bar{\mathcal{P}} \setminus (\mathcal{P} \cup \mathcal{Q})} A _ i ) \\
		& < - \frac{1}{2} - \frac{| \mathcal{P} \cup \mathcal{Q}|}{n} - \frac{2(|\bar{\mathcal{P}}| - | \mathcal{P} \cup \mathcal{Q}|)}{n} \\
		& \leq -\frac{1}{2} - \frac{| \bar{\mathcal{P}}|}{n},
	\end{aligned}
	$$ where the last inequality is due to $| \bar{\mathcal{P}}| \geq | \mathcal{P} \cup \mathcal{Q}|$. As $ v_ n ( E ) = - 1 $, we have $ | \bar{\mathcal{P}}| < \frac{n}{2}$ and hence, at most $\lceil \frac{n}{2} \rceil$ agents receive non-empty bundles in allocation $\mathbf{A}$. Since $\OPT_E > -\frac{2}{n} \geq -\frac{1}{2}$, there must exist an agent $ i ^ {\prime} \in [ k -1 ]$ such that $ v _ { i ^ {\prime}} ( A _ k ) \geq - \frac{1}{2}$. We now partition $N = N _ 1 \cup N _ 2 $ where $ N _ 1 = \{ i \in [n] \mid A _ i \neq \emptyset\}$ and $ N _ 2 = N \setminus N _ 1 $. Consider a partial allocation $\mathbf{B}$ with $ B _ i = A _ i $ for $ i \in N _ 1 \setminus\{ k, i^{\prime}\}$ and $ B _ { i ^ {\prime}} = A _ k $. Then, each agent $ i \in N _ 1 \setminus\{ k \}$ has value at least $ - \frac{1}{2}$ and satisfies MMS and connectivity constraint in $\mathbf{B}$. Then note that the only unassigned bundle is $ A _{ i ^ {\prime}}$ clearly connected. Recall that $| N _ 2 \cup \{ k \}| \geq n - \lceil \frac{n}{2} \rceil  +1 $ and for each agent $ i \in N _ 2 \cup \{ k \}$, $ v _ i ( A _{ i ^ {\prime}}) > -1/2$ due to $ v _ i ( A _ k ) < - 1/2$ and $ v _ i ( E ) = - 1 $. Thus, by implementing Algorithm~\ref{alg:chores:MMS} on bundle $ A _{ i ^ {\prime}}$ and agents $ N _ 2 \cup \{ k \}$, one can extend allocation $\mathbf{B}$ to a complete MMS allocation, in which every agent has value at least $ - \frac{1}{2}$.
\end{proof}

\begin{proof}[Proof of Lemma \ref{lem:chores:MMS:Egalitarian:n>2:lower}]
	As for the lower bound, let us consider an instance with $n \ge 3$ agents and a set $E = \{ e _ 1, \ldots, e _ { n + 2 }\}$ of $ n + 2 $ items.
	The valuations are shown in the following table, where $\epsilon > 0 $ is arbitrarily small.
	\begin{table}[h]
	    \centering
	    \begin{tabular}{c|c|c|c|c|c|c|c|c}
	         Items & $e_1$ & $e_2$ & $e_3$ & $e_4$ & $e_5$ & $e_6$ & $\cdots$ & $e_{n+2}$  \\
	         \hline
	         $v_1(\cdot)$ & $-\frac{1}{n}$ & $-\epsilon$ & $-\epsilon$ & $-\frac{1}{n} + \epsilon$ & $-\frac{1}{n} + \epsilon$ & $-\frac{1}{n}$ & $\cdots$ & $-\frac{1}{n}$  \\
	         $v_i(\cdot)$ for $i\ge 2$ & $-\frac{1}{2}$ & $0$ & $-\frac{1}{2}$ & $0$ & 0 & 0 & $\cdots$ & $0$ 
	    \end{tabular}
	    \caption{The Lower Bound Instance in Lemma \ref{lem:chores:MMS:Egalitarian:n>2:lower}}
	\end{table}
	In a connected egalitarian welfare maximizing allocation $\mathbf{O}$, agent 1 receives bundle $ O _ 1 = \{ e _ 1, e _ 2, e _ 3 \}$ and has valuation $v _ 1 ( O _ 1 ) = - \frac{1}{n} - 2\epsilon$. However, one can verify that $\textnormal{MMS}_1 = - \frac{1}{n} < v _ 1 ( O_ 1 )$, and thus, allocation $\mathbf{O}$ is not MMS. 
	Thus in any MMS allocation, items $ e _ 1$ and $ e  _3 $ cannot be assigned to agent 1 at the same time. 
	Accordingly, the egalitarian welfare of a connected MMS allocation is at most $- \frac{1}{2}$, and thus the price of MMS with respect to egalitarian welfare is at least 
	$$
	\PoF \ge \frac{\frac{1}{2}}{\frac{1}{n} + 2 \epsilon} \rightarrow \frac{n}{2}, \text{ when $\epsilon \rightarrow 0 $,}
	$$
	which finishes the proof.
\end{proof}
\end{proof}

\subsection{Price of PROP1 for Indivisible chores}

\subsubsection{The Algorithm}


Algorithm \ref{alg:chores:PROP1} is similar with Algorithm \ref{alg:chores:MMS} but the critical item selected in Step \ref{step:chores:prop1:con} is changed to be the first item violating the value threshold $-\frac{1}{n}$.

\begin{algorithm}[H]
	\caption{\hspace{-2pt}{ Chores-PROP1}} 
	\label{alg:chores:PROP1}
	\begin{algorithmic}[1]
		\REQUIRE An instance $\cI = \langle N, E, \mathcal{V} \rangle$.
		\ENSURE Allocation $\mathbf{A} = (A_1,\ldots,A_n)$.
		\STATE Initialize $N _ 0 \leftarrow N$, $ E _ 0 \leftarrow E$ and $A_i = \emptyset$ for all $i \in N$.
		\WHILE{$|N _ 0| > 1 \And E _ 0 \neq \emptyset$}
		\STATE Let $ e _ L \in E $ be the left most item.
		\IF{$\exists i \in N _ 0$ such that $ v _ i(e _ L ) \geq - \frac{1}{n}$} \label{step:chores:prop1:con}
		\STATE Let $p$ be the largest index such that there exists an agent $i$ with $v _ i ( L (p) \cap E _ 0) \geq  - \frac{1}{n}$. If there is a tie, choose the agent with highest value on bundle $L(p+1)\cap E _ 0$.
		\IF{ $v _ i (L( p + 1) \cap E _ 0) < -\frac{2}{n}$}
		\STATE $ A _ i \leftarrow L (p) \cap E _ 0$. \label{step:chores:prop1:p}
		\ELSE 
		\STATE $A _ i \leftarrow L ( p + 1 ) \cap E _ 0 $. \label{step:chores:prop1:p+1}
		\ENDIF
		\ELSE
		\STATE Let $ i \in \arg\max\limits_{j\in N _ 0} v _ j ( e _ L )$, and break ties arbitrarily. Assign $ A _ i \leftarrow \{ e _ L \}.$\label{step:chores:prop1:single}
		\ENDIF
		\STATE Update $N _ 0  \leftarrow N _ 0 \setminus \{ i \}$ and $E _ 0 \leftarrow E _ 0  \setminus A_i$;
		\ENDWHILE
		\IF{$E _ 0  \neq \emptyset$}
		\STATE Let $l$ be the remaining agent in $N_0$.
		\STATE $A _ {l} \leftarrow E _ 0 $;
		\ENDIF
		\RETURN $\mathbf{A}$
	\end{algorithmic}
\end{algorithm}

\begin{lemma}\label{Lemma::PROP-chore-utilitarian}
Algorithm \ref{alg:chores:PROP1} can efficiently compute a PROP1 allocation $\mathbf{A}$.
\end{lemma}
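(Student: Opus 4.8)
The plan is to verify four things in turn: that the returned allocation is feasible, that it is computable in polynomial time, that every agent served inside the while-loop satisfies PROP1, and that the single agent who receives the final block $E_0$ also satisfies PROP1. Feasibility and efficiency are immediate: every bundle assigned in Steps \ref{step:chores:prop1:p}, \ref{step:chores:prop1:p+1}, \ref{step:chores:prop1:single} is a prefix of the current remaining set $E_0$, hence a contiguous block, and the unassigned items always form a suffix; since $|N_0|$ drops by one each iteration the loop runs at most $n$ times and each step is polynomial (unlike Algorithm \ref{alg:chores:MMS}, the threshold here is the fixed value $-\frac1n$, so no $\MMS$ computation is even invoked).

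For an agent $i$ served inside the loop I would check the three branches directly. If $A_i = L(p)\cap E_0$ (Step \ref{step:chores:prop1:p}) then $v_i(A_i)\ge -\frac1n$, so PROP, and hence PROP1, holds with no removal. If $A_i = L(p+1)\cap E_0$ (Step \ref{step:chores:prop1:p+1}) then deleting the right-most item returns $L(p)\cap E_0$, whose value is $\ge -\frac1n$, so PROP1 holds. If $A_i=\{e_L\}$ (Step \ref{step:chores:prop1:single}) then removing that single chore leaves the empty set, so PROP1 holds trivially; any agent left with an empty bundle is fine for the same reason.

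The crux is the last agent $l$, who receives $A_l=E_0$; this parallels the last-agent analysis of Lemma \ref{Lemma::MMS-UW-chore} (Claim \ref{claim:chores:alg:2}). Write the blocks handed out before $l$ as $B_1,\dots,B_{n-1}$ from left to right, and let $f_j$ be the item immediately to the right of $B_j$, so that $f_{n-1}=h$ is the left-most item of $E_0$. I would first record a one-step-ahead bound: by the maximality of $p$ in Step 5, no agent, and in particular not $l$, reaches the threshold on $L(p+1)\cap E_0$, giving $v_l(B_j\cup\{f_j\})< -\frac1n$ for every $j$. Setting $\epsilon_j=-\frac1n-v_l(B_j)$ (which satisfies $\epsilon_j\ge -\frac1n$ since chores have non-positive value), a short computation shows $v_l(E_0\setminus\{h\}) = \sum_{j=1}^{n-1}\epsilon_j-\tfrac1n-v_l(h)$, so the whole goal reduces to the single inequality $\sum_{j=1}^{n-1}\epsilon_j\ge v_l(h)$, after which removing the (left-most, hence connectivity-preserving) item $h$ certifies PROP1 for $l$.

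The delicate point, and the main obstacle, is controlling the \emph{light} blocks with $v_l(B_j)>-\frac1n$ (equivalently $\epsilon_j<0$), for which the bound of the previous paragraph is too weak to sum; indeed a naive summation fails because the augmenting item $f_j$ is shared with the next block. Here I would exploit the tie-breaking rule of Step 5: a light block is necessarily of the form $A_{i}=L(p)\cap E_0$ and forces $l$ to qualify at index $p$ while another agent is selected, so that chosen agent has value on $L(p+1)\cap E_0$ at least as large as $v_l(L(p+1)\cap E_0)$, yet it took the $L(p)$-branch exactly because its value there fell below $-\frac2n$; combining these yields the strengthened estimate $v_l(B_j\cup\{f_j\})< -\frac2n$, which upgrades to $\epsilon_j+\epsilon_{j+1}>0$. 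Consequently each light block is immediately followed by a strictly surplus heavy block, so light blocks are isolated and can be paired with their successors into pairs of positive total; the remaining positions are heavy singletons (contributing $\ge 0$), except possibly a light block at position $n-1$, whose contribution is bounded below by $v_l(h)+\frac1n$ through the same $-\frac2n$ estimate. Summing the pairs and singletons gives $\sum_{j}\epsilon_j\ge v_l(h)$, completing the argument. I expect that making this pairing airtight — the disjointness of the pairs and the treatment of the boundary block $n-1$ — will be the fussiest part of the write-up.
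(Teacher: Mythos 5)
Your proof is correct, and at its core it is the same argument as the paper's: the three in-loop branches are disposed of identically, and for the last agent you use precisely the paper's two key facts --- the one-step-ahead bound $v_l(B_j\cup\{f_j\})<-\frac{1}{n}$ coming from the maximality of $p$ (or from the failed test in Step \ref{step:chores:prop1:con}), and its upgrade to $-\frac{2}{n}$ for blocks that the last agent values above $-\frac{1}{n}$, which is exactly the paper's Claim \ref{claim::chores-prop1} derived from the tie-breaking rule combined with the branch condition that routes the algorithm to Step \ref{step:chores:prop1:p}. The differences lie in the final accounting. The paper charges each such ``light'' block to the entire bundle on its right, groups $\mathcal{P}_0\cup\mathcal{Q}_0$ into maximal connected super-blocks and runs a parity count on the number of constituent bundles to obtain $v_n(\cup_{i\in[n-1]}A_i)<-\frac{n-1}{n}$, hence the stronger conclusion $v_n(A_n)\ge-\frac{1}{n}$ (the last agent is in fact PROP); you instead pair each light block with its immediate successor, observe that each pair has positive $\epsilon$-surplus, and settle for PROP1 via deleting the left-most item $h$ of $E_0$. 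Your weaker target buys a clean treatment of the one genuinely delicate boundary configuration --- a light block sitting directly to the left of $E_0$ --- through the estimate $\epsilon_{n-1}>v_l(h)+\frac{1}{n}$; in the paper this is the configuration in which $A_n$ itself lands in $\mathcal{Q}_0$ and the displayed decomposition of $v_n(\cup_{i\in[n-1]}A_i)$ must be read with some care (in fact your own inequalities show this configuration cannot arise: it would force $v_l(E_0)>v_l(h)$, whereas $v_l(E_0)\le v_l(h)$ by monotonicity of chore values). One caveat if your write-up is meant to substitute for the paper's: the stronger fact $v_n(A_n)\ge-\frac{1}{n}$ is reused later (in the proofs of Lemma \ref{lemma::PROP1-chore-utilitarian-ak-add-1} and Theorem \ref{thm::PROP1-chore-RW-n>3}), so you should record it explicitly; it drops out of your pairing once the boundary case is excluded as above, since then $\sum_j\epsilon_j\ge 0$ and $v_l(E_0)=\sum_j\epsilon_j-\frac{1}{n}\ge-\frac{1}{n}$.
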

\begin{proof}
	Notice that in every round of while-loop of Algorithm~\ref{alg:chores:PROP1}, the number of agents is reduced by one and the set of items are reduced, so the algorithm clearly terminates. Renumber the agents from $1$ to $n$ by the order of receiving bundles in the algorithm, where agent 1 is the first to receive a bundle and agent $n$ is the last.
	
	We first show that if agent $i$ who receives a bundle in while-loop, she satisfies PROP1. For the case where $A _ i $ is assigned in Step 7 or 9, denote by $ p _ i$ the index found in Step 5 at the round when $i$ receives $A_i$ and $ E ^{(i)} _ 0 $ the set of remaining items at the start of this round. Due to the condition of Step 5, we have $ v _ i ( L ( p _ i ) \cap E _ 0 ^ {(i)}) \geq - \frac{1}{n}$. If $A _ i $ is assigned in Step 7, then $ A _ i = L ( p _ i ) \cap E _ 0 ^ {(i)}$ and hence $ v _ i ( A _ i) \geq -\frac{1}{n}$. If $ A _ i $ is assigned in Step 9, we have $ A _ i = L ( p _ i + 1) \cap E _ 0 ^{(i)}$. Notice that $ v _ i (A _ i \setminus \{ e _ {p_i + 1} \}) \geq -\frac{1}{n}$ and $A _ i \setminus \{ e _ {p_i + 1} \} \in \mathcal{S}$, so we can claim that agent $i$ also satisfies PROP1. For the remaining case where $A _ i $ is assigned in Step 12, we have $| A _ i |  = 1 $ and so agent $i$ trivially satisfies PROP1. 
	Consequently, if while-loop terminates as all chores are assigned, then connected allocation $\mathbf{A}$ is PROP1. Hence, the remaining work is to show that when while-loop terminates as $ E _ 0 \neq \emptyset$ and $ |N_0| = 1$, the returned allocation $\mathbf{A}$ is still PROP1. It is sufficient to show that agent $n$ also satisfies PROP1.
	
	Denote by $\mathcal{P}$ the set of agents whose bundle is assigned in Step 7 and $\mathcal{Q}$ the set of agents whose bundle is right-connected to some $A _ i $ with $ i \in \mathcal{P}$. Note that $ |\mathcal{P} | = | \mathcal{Q}|$ and $\mathcal{P} \cap \mathcal{Q}$ may be non-empty. For an agent $ i \notin \mathcal{P}$, if bundle $A _ i $ is assigned in Step 9, we have $ v _ n ( A _ i ) < -\frac{1}{n}$; otherwise, contradicting to the choice of Step 5. If bundle $ A _ i $ is assigned in Step 12, then it also holds that $ v _ n ( A _ i ) < -\frac{1}{n}$ due to the condition of Step 4. For each $ i \in \mathcal{P}$, it uniquely maps a $ j _ i \in \mathcal{Q}$ where $ A_{ j _ i }$ is right-connected to $A_ i $. As the value of agent $n$, we have the following claim.
	
	\begin{claim}\label{claim::chores-prop1}
	Either $ v _ n ( A _ i \cup A _ { j _i}) < -\frac{2}{n}$ or $v _ n ( A _ i ) < - \frac{1}{n}$.
	\end{claim}
	
\noindent\begin{proof}[Proof of Claim~\ref{claim::chores-prop1}]
    Because if $ v _ n ( A _ i ) \geq - \frac{1}{n}$, we must have $v _ n ( A _ i \cup \{ e ^* \}) \leq v _ i ( A_ i \cup \{ e ^*\}) < -\frac{2}{n}$ where $e^ * \in A _ { j _ i }$ is the left-most item. Then, due to the monotonicity of $ v _ n (\cdot)$, we have $v _ n ( A _ i \cup A _ { j _ i }) < - \frac{2}{n}$. 
\end{proof}

Next, we bound the value of agent $n$ on the bundles received by agents $[n-1]$ and show $ v _ n ( A_ n ) > -\frac{1}{n}$. We further let $\mathcal{P} _ 0 \subsetneq \mathcal{P}$ be the set of index $ i \in \mathcal{P}$ such that $ v _ n ( A _ i ) \geq -\frac{1}{n}$ but $ v _ n ( A _ i \cup A _{ j _ i }) < -\frac{2}{n}$ where $ A _ { j _ i }$ with $ j _ i \in \mathcal{Q}$ is right-connected to $A _ i $. Denote by $ \mathcal{Q} _ 0 \subsetneq \mathcal{Q}$ be the set of agents-index whose bundle is righted connected to bundle $ A _ i $ for some $ i \in \mathcal{P} _ 0 $. Similarly, $ | \mathcal{ P } _ 0 | = | \mathcal{ Q } _ 0 | $ and $ \mathcal{ P } _ 0 \cap \mathcal{ Q } _ 0 $ can be non-empty. Note that $ v _ n ( A _ i ) < -\frac{1}{n}$ holds for $ i \notin \mathcal{P} _ 0 $. 

	Note that $\cup_{i \in \mathcal{P}_0 \cup \mathcal{Q} _ 0 } A _ i $ contains several maximal connected blocks $\mathcal{A}_s$ on $E$, and each maximal block $\mathcal{A}_s$ is the union of at least two bundles in $\{ A_i \}_{i \in \mathcal{P}_0 \cup \mathcal{Q} _ 0}$. In particular, $\mathcal{A}_s$ 
	is in the form: the left most bundle $A _ p $ satisfies $ p \in \mathcal{P}_ 0 \setminus \mathcal{Q} _ 0 $; the right-most bundle $A _ q $ satisfies $ q \in \mathcal{Q} _0\setminus \mathcal{P} _ 0 $; other bundles $ A_  r $ satisfies $r \in \mathcal{P} _ 0 \cap \mathcal{Q} _ 0 $. Now, suppose that $ \mathcal{A}_ s$ contains a number $ k_s $ of bundles $ \{ A _ i \} _ { i \in \mathcal{P} _ 0 \cup \mathcal{Q} _ 0 }$. Recall that for $ i \in \mathcal{P} _0 $ and $ j _ i \in \mathcal{Q} _ 0 $ with $ A _ {j _ i }$ right-connected to $A _ i $, we have $ v _n ( A_  i \cup A _{j_i }) < -\frac{2}{n}$. Then, if $ k _ s$ is even, we have
	$$
	v _ n ( \mathcal{A}_s) < -\frac{k _ s}{2} \times \frac{2}{n} = - \frac{k _ s}{n}.
	$$
	And if $k _ s$ is odd, since the right most bundle $A_q$ does not in $\{ A_ i \} _ { i \in \mathcal{P}_0 }$, i.e., $v _ n (A_ q) < - \frac{1}{n}$, then,
	$$
	v _ n ( \mathcal{A}_s) < - \frac{k _ s -1}{2} \times \frac{2}{n} - \frac{1}{n} = - \frac{k_s}{n}.
	$$
	Therefore, we have $ v _ n (\mathcal{A} _ s ) < - \frac{|\mathcal{A} _s |}{n}$ for all $\mathcal{A}_s$. Summing over all the maximal connected blocks $\mathcal{A}_s$, we have
	$$
	v _ n ( \cup _ { i \in \mathcal{P}_0 \cup \mathcal{Q} _ 0 } A _ i ) = \sum _ { s }v _ n ( \mathcal{A} _ s) < -\sum _{s } \frac{| \mathcal{A} _ s |}{n} = - \frac{ | \mathcal{P}_0 \cup \mathcal{Q}_0|}{n}.
	$$
    Due to $v _ n ( A _ i ) < -\frac{1}{n}$ for all $i\notin \mathcal{P}_0$. we have,
	$$
	\begin{aligned}
		v _ n ( \cup_{i \in [n-1]} A _ i ) &=  \sum _ { i \in  \mathcal{P} _ 0  \cup \mathcal{Q} _ 0 } v _ n ( A_ i ) + \sum _ { i \in [n-1]\setminus (\mathcal{P} _ 0 \cup \mathcal{Q} _ 0 )}   v _ n ( A _ i ) \\
		& < -\frac{| \mathcal{P} _ 0 \cup \mathcal{Q} _ 0 |}{n} - \frac{n-1 - | \mathcal{P} _ 0 \cup \mathcal{Q} _ 0 |}{n} \\
		& = - \frac{n-1}{n}.
	\end{aligned}
	$$
	Then, due to the normalized valuations, we have $v _ n ( A _ n ) \geq - \frac{1}{n}$, which completes the proof of Lemma~\ref{Lemma::PROP-chore-utilitarian}.
\end{proof}

\subsubsection{Chores $\mid$ Utilitarian $\mid$ PROP1}

\begin{proof}[Proof of Theorem \ref{thm::PROP1-chore-UW-n>3}] Again, we only prove for $n\ge 3$ in this section and defer the discussion of $n=2$ to Section~\ref{sec:n=2}. The proof is split into the following two lemmas.

\begin{lemma}\label{lemma::PROP1-chore-utilitarian-ak-add-1}
    For the problem of Chores $\mid$ Utilitarian $\mid$ PROP1 with $n \ge 3$, $\PoF \leq 3n$.
\end{lemma}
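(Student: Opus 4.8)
The plan is to mirror the structure of the MMS upper bound (Lemma~\ref{lem:chores:MMS:mid$Utilitarian$:n>2:upper}). First I would dispose of the easy regime: if $\OPT_U \ge -\frac{1}{n}$, then since every term $v_i(A_i)$ of a utilitarian welfare-maximizing allocation is non-positive and their sum is at least $-\frac{1}{n}$, each agent must already have $v_i(A_i)\ge -\frac{1}{n}=\frac{1}{n}v_i(E)$; hence that allocation is PROP, and removing an endpoint of each (connected) bundle only raises the value, so it is PROP1. This gives $\PoF=1\le 3n$ in that regime. It therefore suffices to treat $\OPT_U<-\frac{1}{n}$, where $|\OPT_U|>\frac{1}{n}$, and to exhibit a \emph{single} PROP1 allocation whose utilitarian welfare is bounded below by an absolute constant. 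Concretely I would take the allocation $\mathbf{A}$ produced by Algorithm~\ref{alg:chores:PROP1}, which is PROP1 by Lemma~\ref{Lemma::PROP-chore-utilitarian}, and prove $\UW(\mathbf{A})\ge -3$; combined with $|\OPT_U|>\frac{1}{n}$ this yields $\frac{\UW(\mathbf{A})}{\OPT_U}=\frac{|\UW(\mathbf{A})|}{|\OPT_U|}<\frac{3}{1/n}=3n$.

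The remaining task is the welfare bound $\UW(\mathbf{A})\ge -3$. Renumber the agents $1,\dots,n$ by the order in which they receive bundles, so agent $n$ is last. I would split the agents served inside the while-loop into $N_1$ (those assigned in Step~\ref{step:chores:prop1:p} or Step~\ref{step:chores:prop1:p+1}) and $N_2$ (those assigned the single item $e_L$ in Step~\ref{step:chores:prop1:single}). For $N_1$ the thresholds built into the algorithm give $v_i(A_i)\ge -\frac{2}{n}$ directly: Step~\ref{step:chores:prop1:p} assigns $L(p)\cap E_0$ whose value is $\ge -\frac{1}{n}$, while the guard preceding Step~\ref{step:chores:prop1:p+1} assigns $L(p+1)\cap E_0$ only when its value is $\ge -\frac{2}{n}$. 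Since $|N_1|\le n-1$, this yields $\sum_{i\in N_1}v_i(A_i)\ge -\frac{2}{n}(n-1)=-2+\frac{2}{n}$. For the final agent, the proof of Lemma~\ref{Lemma::PROP-chore-utilitarian} already establishes $v_n(A_n)\ge -\frac{1}{n}$.

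The main obstacle is $N_2$: a single item $e_L$ assigned in Step~\ref{step:chores:prop1:single} can be almost maximally disliked (value close to $-1$), so these bundles cannot be controlled one at a time. Here I would reuse the ``last agent'' comparison from the MMS analysis. Write $N_2=\{i_1,\dots,i_p\}$ in order of assignment. When $i_k$ is chosen in Step~\ref{step:chores:prop1:single} it maximizes $v_j(e_L)$ over the agents still present, and $i_p$ is present for every $k<p$, so $v_{i_p}(A_{i_k})\le v_{i_k}(A_{i_k})$. Summing and using that the single-item bundles are disjoint chores gives
$$\sum_{i\in N_2}v_i(A_i)=\sum_{k=1}^{p}v_{i_k}(A_{i_k})\ge \sum_{k=1}^{p}v_{i_p}(A_{i_k})=v_{i_p}\Big(\bigcup_{k=1}^{p}A_{i_k}\Big)\ge v_{i_p}(E)=-1.$$
Adding the three contributions yields $\UW(\mathbf{A})\ge\big(-2+\frac{2}{n}\big)+(-1)+\big(-\frac{1}{n}\big)=-3+\frac{1}{n}\ge -3$; the corner case in which the loop halts because $E_0=\emptyset$ is easier, since the unserved agents contribute $0$. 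Together with the easy regime, the two cases give $\PoF\le 3n$.
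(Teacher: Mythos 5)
Your proposal is correct and follows essentially the same route as the paper: both dispose of the case $\OPT_U\ge-\frac{1}{n}$ by noting the optimal allocation is already PROP1, and both bound $\UW(\mathbf{A})\ge-3$ for the output of Algorithm~\ref{alg:chores:PROP1} by combining the $-\frac{2}{n}$ threshold for moving-knife bundles, the last-agent comparison $v_{i_p}(A_{i_k})\le v_{i_k}(A_{i_k})$ summing to $-1$ for the single-item bundles, and the $-\frac{1}{n}$ bound for the final agent from Lemma~\ref{Lemma::PROP-chore-utilitarian}. The only cosmetic difference is that you merge the paper's $N_1$ and $N_2$ (Steps~\ref{step:chores:prop1:p} and~\ref{step:chores:prop1:p+1}) into a single group with the common bound $-\frac{2}{n}$, which changes nothing.
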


\begin{lemma}\label{lemma::PROP1-chore-utilitarian-ak-add-2}
    For the problem of Chores $\mid$ Utilitarian $\mid$ PROP1 with $n \ge 3$, $\PoF \ge \frac{n}{6}$.
\end{lemma}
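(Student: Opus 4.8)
The plan is to exhibit a single hard instance on which every PROP1 allocation is forced to concentrate a constant amount of disutility, while the unconstrained optimum spreads it out to reach utilitarian welfare of order $-\frac1n$. In fact the instance used for the MMS lower bound (Table~\ref{tab:chore-utilitarian-mms-n>3}) already suffices: $n$ agents and $3n-2$ chores, where agent $1$ has value $-\frac{1}{n^2}$ on each of the first $2n$ ``cheap'' chores and $-\frac1n$ on each of the last $n-2$, whereas every agent $i\ge 2$ has value $-\frac{1}{2n}$ on each of the first $2n$ chores and $0$ on the rest. The unconstrained optimum $\mathbf{O}$ gives the first $2n$ cheap chores to agent $1$ and distributes the remaining chores among the others, so that $\OPT_U=\UW(\mathbf{O})=-\frac2n$. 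The whole difficulty is to argue that no PROP1 allocation can imitate this concentration.

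The core of the argument is a counting bound on how many of the first $2n$ chores agent $1$ can absorb under PROP1. Her proportional threshold is $\frac1n\cdot v_1(E)=-\frac1n$, and a contiguous bundle of $s$ cheap chores has value $-\frac{s}{n^2}$; removing one endpoint (the only removals allowed by connectivity) leaves $-\frac{s-1}{n^2}\ge-\frac1n$ precisely when $s\le n+1$, and a short case check on bundles that straddle into the expensive block shows that including $b\ge1$ expensive chores only tightens the cap to $s\le n$. Hence agent $1$ holds at most $n+1$ cheap chores, so at least $n-1$ of them land on agents $i\ge2$, each contributing exactly $-\frac{1}{2n}$ to the utilitarian welfare no matter how they are grouped. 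Consequently every PROP1 allocation $\mathbf{A}$ satisfies
$$
\UW(\mathbf{A}) \le -\frac{n-1}{2n} - \frac{n+1}{n^2},
$$
and since $\OPT_U=-\frac2n$ this gives
$$
\PoF \ge \frac{\frac{n-1}{2n}+\frac{n+1}{n^2}}{\frac2n} = \frac{n}{4}+\frac14+\frac{1}{2n} \ge \frac{n}{6},
$$
as claimed.

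I would carry this out by first recording $\OPT_U$ and the proportional thresholds, then proving the ``at most $n+1$ cheap chores'' cap as the key lemma, and finally summing the forced disutilities. The main obstacle is the cap argument: I must verify it for \emph{every} contiguous bundle agent $1$ could receive, including those crossing the boundary between $e_1,\dots,e_{2n}$ and $e_{2n+1},\dots,e_{3n-2}$, and I must confirm robustness against the adversary picking the \emph{best} PROP1 allocation, i.e.\ that no clever grouping of the surplus cheap chores onto the other agents (each of whom can hold up to three such chores and stay PROP1) can dodge the $-\frac{1}{2n}$ cost per chore. Because that cost is borne by whoever holds the chore, the bound is insensitive to the grouping, which is exactly what makes the forcing yield a clean $\Omega(n)$ ratio.
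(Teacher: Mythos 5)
Your proposal is correct and takes essentially the same approach as the paper's proof: a single hard instance in which agent~1 values a block of ``cheap'' chores far less than everyone else, PROP1 (via the endpoint-removal computation) caps how many of them she may absorb, and the surplus forces a constant total disutility onto the other agents while $\OPT_U=\Theta(\frac1n)$. The only difference is the concrete instance --- you reuse the MMS lower-bound instance with $2n$ cheap chores at $-\frac{1}{n^2}$ (cap $n+1$) rather than the paper's $(n+1)$-item instance with one heavy chore (cap $\frac{n}{2}+1$) --- which checks out arithmetically, yields the slightly stronger bound $\frac{n}{4}+\frac14+\frac{1}{2n}$, and incidentally avoids the paper's restriction to even $n$.
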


\noindent\begin{proof}[Proof of Lemma~\ref{lemma::PROP1-chore-utilitarian-ak-add-1}]
   We consider the allocation $\mathbf{A}$ returned by Algorithm \ref{alg:chores:PROP1}, and according to Lemma~\ref{Lemma::PROP-chore-utilitarian}, allocation $\mathbf{A}$ is PROP1. We now let $ {N} _ 1, N_2, N_3$ be the set of agents who receive items in Step 7, 9, and 12, respectively. Then $ v _ i ( A_ i ) \geq -\frac{1}{n}$ holds for all $ i \in N_ 1$ and $ v _ i ( A_ i ) \geq -\frac{2}{n}$ holds for all $ i \in N_2$. We now bound the sum of value of agents in $N _ 3$. Suppose $N _ 3 = \{ i _ 1, i _2, \ldots, i _ p \}$ and bundle $ A _{i_ l}$ is on the left of bundle $ A _ { i _ k }$ for any $ l < k \leq p $. For every $ k \leq p $, we have $ v _ { i _ p }( A _ k ) \leq v _ { i _ k } ( A _ { i _ k })$ due to the condition in Step 12 of Algorithm \ref{alg:chores:PROP1}. Consequently, the welfare of agents in $N_3$ is bounded by
	$$
	 \sum _{ i \in N _3 }v _ i ( A _ i ) \geq \sum_{ i\in N _3} v _{ i _ p } ( A _ i ) \geq -1,
	$$
	where the last transition is due to the additivity of $ v _ { i _ p } (\cdot)$ and $v _ {i _ p } ( E ) = -1$. As for agent $n$, according to the proof of Lemma~\ref{Lemma::PROP-chore-utilitarian}, we have $v _ n ( A _ n ) \geq - \frac{1}{n}$. Therefore, the utilitarian welfare of allocation $\mathbf{A}$ satisfies the following,
	$$
	\begin{aligned}
		\UW(\mathbf{A}) & \geq ( \sum_{ i \in N_1} + \sum_{i \in N_2} + \sum_{ i \in N_3} ) v _ i ( A_  i)  + v _ n ( A _ n ) \\
		& > -\frac{2}{n}(|N _ 1| + |N_ 2 | ) - 1 - \frac{1}{n}\\
		& \geq -3,
	\end{aligned}
	$$ where the last inequality is due to when either term -1 or $ - \frac{1}{n}$ exits, it holds that $ |N_1| + |N _ 2 | \leq n - 1$. Suppose $\mathbf{O}$ be a connected utilitarian welfare-maximizing allocation. If $\UW(\mathbf{O}) \geq -\frac{1}{n}$, then allocation $\mathbf{O} $ is PROP1 and the statement trivially holds. We can further assume $\UW(\mathbf{O} ^ {\prime}) < -\frac{1}{n}$. Since allocation $\mathbf{A}$ is a connected PROP1 allocation with welfare at least $-3$, the price of PROP1 is at most $3n$.
\end{proof}
	
\medskip
\noindent\begin{proof}[Proof of Lemma~\ref{lemma::PROP1-chore-utilitarian-ak-add-2}]
  \begin{table}[h]
	    \centering
	    \begin{tabular}{c|c|c|c|c|c}
	         Items & $e_1$ & $e_2$ & $\cdots$ & $e_n$ & $e_{n+1}$ \\
	         \hline
	         $v_1(\cdot)$ & $-\frac{2}{n^2}$ & $-\frac{2}{n^2} $ & $\cdots$ & $-\frac{2}{n^2} $ & $-\frac{n-2}{n}$ \\
	         $v_i(\cdot)$ for $i\ge 2$ & $- \frac{1}{n+1}$ & $- \frac{1}{n+1}$ & $\cdots$ & $- \frac{1}{n+1}$ & $- \frac{1}{n+1}$
	    \end{tabular}
	    \caption{The Lower Bound Instance for PROP1 in Lemma \ref{lemma::PROP1-chore-utilitarian-ak-add-2}}
	    \label{tab::chore-utilitarian-prop1-n>3}
	\end{table}
    	As for the lower bound, consider an instance with $n$ (even) agents and a set $E = \{ e _ 1, \ldots, e _ { n + 1} \}$ of $ n + 1 $ chores. The valuations are shown in Table \ref{tab::chore-utilitarian-prop1-n>3}. In an utilitarian welfare-maximizing allocation $\mathbf{O} = ( O _ 1, \ldots, O _ n )$, the first $n$ items are assigned to agent 1 and item $e_{n+1}$ is arbitrarily allocated to the other agents so that $\UW(\mathbf{O}) = -\frac{3n+2}{n(n+1)}$. However, agent 1 violates PROP1 in $\mathbf{O}$. In a PROP1 allocation $\mathbf{A}$, agent 1 can receive at most $\frac{n}{2} + 1$ items, which leaves at least $n$ items to other agents. Thus, $\UW(\mathbf{A}) \le -\frac{n}{2(n+1)} - \frac{n+2}{ n ^ 2}$, and consequently, the price of PROP1 is upper bounded by
    	$$
    	\PoF \geq \frac{n^3 + 2(n+2)(n+1)}{2n(3n+2)} \ge \frac{n}{6} + \frac{2}{9} = \Omega(n).
    	$$
\end{proof}

Combining Lemma~\ref{lemma::PROP1-chore-utilitarian-ak-add-1} and Lemma~\ref{lemma::PROP1-chore-utilitarian-ak-add-2}, the price of PROP1 with respect to utilitarian welfare is $\Theta(n)$.
\end{proof}

\subsubsection{Chores $\mid$ Egalitarian $\mid$ PROP1}

\begin{proof}[Proof of Theorem \ref{thm::PROP1-chore-RW-n>3}]
We start from the upper bound. If $\OPT _ E \geq -\frac{1}{n}$, then an egalitarian welfare maximizing allocation is also PROP1 and we are done. 
Moreover, since the valuations are normalized, 
any allocation has egalitarian welfare no smaller than $-1$, and thus if $ \OPT _ E \leq -\frac{2}{n}$, the statement trivially holds. 
We then focus on the case when $-\frac{2}{n} < \OPT _ E < - \frac{1}{n}$. Denote by $\mathbf{A}$ the allocation returned by Algorithm~\ref{alg:chores:PROP1}, and without loss of generality, we assume $ 1, \ldots, n$ is the order of agents receiving items. If several agents receive none, the order among them is arbitrary. According to Lemma~\ref{Lemma::PROP-chore-utilitarian}, allocation $\mathbf{A}$ is PROP1.

We first prove the statement for the case $ n = 3 $, and it is sufficient to show the existence of a connected PROP1 allocation with egalitarian welfare at least $ -\frac{2}{3}$. Based on the proof of Lemma~\ref{Lemma::PROP-chore-utilitarian}, we have $ v _ 3 ( A _ 3 ) \geq -\frac{1}{3}$. We then discuss the assignment of the first two agents. If neither of $ A _ 1$ nor $ A _ 2 $ is assigned in Step 12, we clearly have $ v _ i ( A _ i ) \geq - \frac{2}{3}$ for any $ i \in [2]$, then $\EW(\mathbf{A}) \geq - \frac{2}{3}$ holds. If $ A _ 1 $ is assigned in Step 12, we must have $ - \frac{2}{3} < \OPT_E \leq v _ 1 ( A _ 1 ) < - \frac{1}{3}$ due to the condition of Step 4 and 12. Moreover, under this case, we must have $v _ 2 ( A_ 2) \geq - \frac{2}{3}$ due to the normalized valuation. Thus, $\EW(\mathbf{A}) \geq - \frac{2}{3}$ holds. If only $ A _ 2 $ is assigned in Step 12, we can only consider the case $ v _ 2 ( A _ 2) < - \frac{2}{3}$. Recall that under this case $|A _ 2 | = 1$ holds, and let $ i ^*$ be the agent who receives bundle $ A _ 2 $ in an egalitarian welfare-maximizing allocation. Then, swapping the bundle received by agent 2 and $ i ^*$ leads to a connected PROP1 allocation with egalitarian welfare at least $-\frac{2}{3}$. For the lower bound, let us consider an instance with three agents and a set $E = \{ e _ 1, \ldots, e _ 7 \}$ of seven items. The valuations are shown in the following table, in which $\epsilon >0$ is arbitrarily small.
\begin{table}[h]
	    \centering
	    \begin{tabular}{c|c|c|c|c|c|c|c}
	         Items & $e_1$ & $e_2$ & $e_3$ & $ e _4$ & $ e_5$ & $e _ 6 $ & $e_7$   \\
	         \hline
	         $v _ 1$ & $ 0 $ & $ 0 $ & $ 0 $ & $- \frac{1}{3}$ &  $- \epsilon$ & $ -\frac{1}{3} - 5\epsilon$ & 
	         $ -\frac{1}{3} + 6 \epsilon$  \\
	         $v_i(\cdot) \textnormal{ for } i = 2,3$ & $-\epsilon$ & $ -\frac{1}{3}$ & $-\epsilon$ & $ -\epsilon $ & $ -\epsilon$ & $ -\frac{2}{3} + 4\epsilon $ & $0$ 
	    \end{tabular}
	    \caption{The Lower Bound Instance for $n =3 $ in Theorem \ref{thm::PROP1-chore-RW-n>3}}
	\end{table}
	
Consider an egalitarian welfare-maximizing allocation $\mathbf{O}$ with $O_ 1 = \{ e_6\}$, $ O _ 2 = \{ e_1, \ldots, e _ 5 \}$ and $ O _ 3 = \{ e _ 7 \}$. We can compute $\EW(\mathbf{O}) = -\frac{1}{3} - 5\epsilon$ and thus, $\OPT _ E \geq -\frac{1}{3} - 5\epsilon$ holds. Then, for an PROP1 allocation $\mathbf{D}$, one can verify that either $e_6 \in D_2\cup D _3$ or $ e _ 6 \cup e _ 7 \in D_1$ holds. In both cases, we have $\EW(\mathbf{D}) \geq -\frac{2}{3 } + \epsilon$. Therefore, the price of PROP1 with respect to egalitarian welfare is bounded by$$
\PoF \geq \frac{-\frac{2}{3} + \epsilon}{ -\frac{1}{3} - 5\epsilon} \rightarrow 2, \textnormal{ when } \epsilon \rightarrow 0.
$$

Next we prove the statement for $ n \geq 4$. 
Since $\OPT_E < -\frac{1}{n}$, if Algorithm \ref{alg:chores:PROP1} returns a PROP1 allocation $\cA$ with egalitarian welfare at least $ - \frac{1}{2}$, then we finish the proof. 
If $\min_{ i \in [n]} v _ i ( A _ i ) < - \frac{1}{2}$, we next construct another allocation which satisfies the requirement. 
Let agent $b$ be the one that has smallest value, i.e., $ v _ k (  A _ k ) \leq v _ i ( A _ i )$ for $ i \in N$. 
By the design of the algorithm, $ A _ k $ must be assigned in Step \ref{step:chores:prop1:single} and $ | A _ k | = 1 $. Due to the normalized allocation, we can claim that $ v _ k ( A _ i ) > - \frac{1}{2}$ for $ i \neq k$. 
According to the tie break rule in Step \ref{step:chores:prop1:single}, we have $ v _ j ( A _ k )< -\frac{1}{2}$ for $ j > k $; otherwise Step \ref{step:chores:prop1:single} would pick other agents, rather than agent $k$. 
We now analyze the value of agent $ n $ and prove that at most $\lceil \frac{n}{2} \rceil$ agents receive non-empty bundles in $\mathbf{A}$. 
Let $\mathcal{P}_0$ be the set of agents (excluding the one whose bundle is on the left of $A _ k$ and connected to $ A _ k $ if any) receiving bundles in Step \ref{step:chores:prop1:p} and thus $ v _ n ( A _ i ) \ge -\frac{1}{n} $ for all $ i \in \mathcal{P} _ 0$. 
Let $\mathcal{Q} _ 0 $ be the set of agents whose bundles are on the right of $A _ i$ and connected to $A _ i $ for some $ i \in \mathcal{P} _ 0 $. 
Note that $|\mathcal{P} _ 0 | = | \mathcal{Q} _ 0 |$ and $\mathcal{P} _ 0 \cap \mathcal{Q} _ 0 $ can be non-empty. 
One can verify that for $ i \in \mathcal{P} _ 0$, we have $ v _ n ( A _ i\cup A _ { j _ i }) < -\frac{2}{n}$ where $A _ {j _ i }$ is on the right of $A _ i$ and $ A _ i\cup A _ { j _ i } \in \mathcal{S}$. 
By similar argument with that in the proof of Lemma~\ref{Lemma::PROP-chore-utilitarian}, it holds that 
\[
v _n ( \cup _{ i \in \mathcal{P} _ 0 \cup \mathcal{Q} _ 0 } A _ i ) < -\frac{|\mathcal{P} _ 0 \cup \mathcal{Q} _ 0|}{n}.`
\]
To bound the value of agent $n$, we define ${\mathcal{R} _ 0 } = \{ i\in [n-1] \mid A _ i \neq \emptyset \text{ and }  i \neq k\}$. 
Notice that $ v _ n ( A _ i ) < -\frac{1}{n}$ holds for all $i \notin \mathcal{P} _ 0 $, and thus we have the following
$$
\begin{aligned}
	 v _ n ( \bigcup_ { i \in [ n - 1]} A _ i ) &=  v _ n ( A _ k ) + v _ n ( \bigcup _{ i \in \mathcal{P} _ 0 \bigcup \mathcal{Q} _ 0 } A _ i ) + v _ n ( \bigcup _{ i \in {\mathcal{R} _ 0} \setminus (\mathcal{P}_0 \cup \mathcal{Q}_0)} A _ i ) \\
	 & < - \frac{1}{2} - \frac{| \mathcal{P} _ 0 \cup \mathcal{Q} _ 0|}{n} - \frac{|{\mathcal{R}_0}| - | \mathcal{P}_0 \cup \mathcal{Q}_0|}{n} \\
	 & = -\frac{1}{2} - \frac{| {\mathcal{R}_0}|}{n}.
\end{aligned}
$$
Since $v_i(\bigcup_{i\in [n]} A_i)\ge -1$, $ | {\mathcal{R} _ 0 }| < \frac{n}{2}$ and hence, at most $\lceil \frac{n}{2} \rceil$ agents receive non-empty bundles in allocation $\mathbf{A}$. 
Since $\OPT _ E > -\frac{2}{n} \geq -\frac{1}{2}$, there must exist an agent $ i ^ {\prime} \in [ k -1 ]$ such that $ v _ { i ^ {\prime}} ( A _ k ) \geq - \frac{1}{2}$. We now partition $N = N _ 1 \cup N _ 2 $ where $ N _ 1 = \{ i \in [n] \mid A _ i \neq \emptyset\}$ and $ N _ 2 = N \setminus N _ 1 $. Consider a partial allocation $\mathbf{B}$ with $ B _ i = A _ i $ for $ i \in N _ 1 \setminus\{ k, i^{\prime}\}$ and $ B _ { i ^ {\prime}} = A _ k $. Then, each agent $ i \in N _ 1 \setminus\{ k \}$ has value at least $ -\frac{1}{2}$ and satisfies PROP1 and connectivity constraint in $\mathbf{B}$. Then, the only unassigned bundle is $ A _{ i ^ {\prime}}$ that is clearly connected. Recall that $| N _ 2 \cup \{ k \}| \geq n - \lceil \frac{n}{2} \rceil  +1 $ and for each agent $ i \in N _ 2 \cup \{ k \}$, $ v _ i ( A _{ i ^ {\prime}}) > -\frac{1}{2}$ due to $ v _ i ( A _ k ) < - \frac{1}{2}$ and $ v _ i ( E ) = - 1 $. 
Thus, by implementing Algorithm~\ref{alg:chores:PROP1} on bundle $ A _{ i ^ {\prime}}$ and agents $ N _ 2 \cup \{ k \}$, one can extend allocation $\mathbf{B}$ to a complete PROP1 allocation, in which, moreover, every agent has value at least $ - \frac{1}{2}$.

For the lower bound, consider the instance with $n$ agents and a set $ E = \{ e_1, \ldots, e _ { n + 2} \} $ of $ n + 2 $ chores. The valuations are shown in the following table, where $\epsilon > 0 $ is arbitrarily small. 
 \begin{table}[h]
	    \centering
	    \begin{tabular}{c|c|c|c|c|c|c|c|c}
	         Items & $e_1$ & $e_2$ & $e_3$ & $ e _4$ & $ \cdots$ & $e _ n $ & $e_{n+1}$ & $ e _ { n + 2 }$  \\
	         \hline
	         $v_1(\cdot)$ & $-\epsilon$ & $- \frac{1}{n}$ & -$\epsilon$ &  $-\frac{1}{n}$ & $ \cdots$ & 
	         $ -\frac{1}{n}$ & $-\frac{1}{n}$ & $-\frac{1}{n} + 2\epsilon$\\
	         $v_i(\cdot) \textnormal{ for } i\ge 2$ & $-\frac{1}{2}$ & $0$ & $-\frac{1}{2}$ & $ 0 $ & $ \cdots$ & $ 0 $ & $0$ & $0$
	    \end{tabular}
	    \caption{The Lower Bound Instance for $n \geq 4 $ in Theorem \ref{thm::PROP1-chore-RW-n>3}}
	\end{table}
	In a egalitarian welfare-maximizing allocation $\mathbf{O}$, agent 1 receives bundle $ O _ 1 = \{ e_1, e_2, e_3 \} $ and has value $ - \frac{1}{n} - 2\epsilon$, and agent $ i \geq 2$ receive other items and have value zero. Then, we have $\OPT_ E = -\frac{1}{n} - 2\epsilon$. However, in allocation $\mathbf{O}$, agent 1 violates PROP1 as removing item $e_1$ or $e_3$ still yields value $-\frac{1}{n} - \epsilon < - \frac{1}{n} $ for him, and thus, in a PROP1 allocation, agent 1 can not receive all $ e_1, e_2, e_3$. In a PROP1 allocation $\mathbf{A}$, at least one of $ e_1, e_3$ must be assigned to agents $ i\ge 2$, which implies $\EW(\mathbf{A}) \leq -\frac{1}{2}$. Therefore, the price of PROP1 with respect to egalitarian welfare is at least $$\PoF \geq \frac{-\frac{1}{2}}{ -\frac{1}{n} - 2\epsilon} \rightarrow  \frac{n}{2}, \textnormal{ when } \epsilon \rightarrow 0 ,$$ which completes the proof.
\end{proof}

\section{Missing Proofs in Section \ref{sec:n=2}}

\begin{proof}[Proof of Theorem \ref{thm:n=2:goods}] 

    Consider the allocation $\mathbf{O}$ constructed in Lemma \ref{lem:n=2:goods}.
    The PoF regarding egalitarian welfare is straightforward by the design, and we focus on the  utilitarian welfare in the following.
    By Lemma \ref{lem:n=2:goods}, we have $\UW(\mathbf{O}) \ge 1$.
    Let $\OPT_U$ be the optimal utilitarian welfare, and thus $\OPT_U \ge 1$.
    Denote by $\OPT_U = 1 + x$ where $x\ge 0$.
    If $x \le \frac{1}{2}$, since $\UW(\mathbf{O}) \ge 1$,  we have the desired PoF ratio of $\frac{3}{2}$;
    If $x > \frac{1}{2}$, then the optimal egalitarian welfare is at least $x$ and
    $\UW(\mathbf{O}) \ge 2x$.
    Thus
    \[
    \PoF \le \frac{1+x}{2x} = \frac{1}{2x} + \frac{1}{2} < \frac{3}{2},
    \]
    which completes the proof of the upper bounds.
    
    Next we prove the tightness. 
    Regarding MMS, let us consider an instance the following instance with three items $ E = \{ e_1, e_2, e_3 \}$ of three items. 
    The valuations are shown in Table \ref{tab::thm24-mms} in which $\epsilon>0$ is arbitrarily small.
    It can be verified that a utilitarian welfare maximizing allocation $\mathbf{A}$ assigns $ \{e _ 1, e _ 2\}$ to agent 1 and $ \{e _ 3\}$ to agent 2 resulting in the welfare of $ \frac{3}{2} - 2\epsilon$. 
    However, $ v _ 2 ( A _ 2 ) < \frac{1}{2} = \MMS_2$ and thus $\mathbf{A}$ is not MMS. 
    Actually, in an MMS allocation $\mathbf{B}$, it must be the case that one agent receives $ e _ 1 $ and the other agent receives the rest two items, which implies $\UW(\mathbf{B}) = 1$. 
    Therefore, the price of MMS is at least $ \frac{\frac{3}{2} - 2\epsilon}{1} \rightarrow \frac{3}{2}$ as $\epsilon \rightarrow 0 $.  
    
    \begin{table}[h]
	    \centering
	    \begin{tabular}{c|c|c|c}
	         Items & $e_1$ & $e_2$ & $e_3$ \\
	         \hline
	         $v_1(\cdot)$ & $\frac{1}{2}$ & $\frac{1}{2}-\epsilon$ & $\epsilon$  \\
	         $v_2(\cdot)$ & $\frac{1}{2}$ & $\epsilon$ & $\frac{1}{2}-\epsilon$ 
	    \end{tabular}
	    \caption{The Lower Bound Instance for MMS fairness in Theorem \ref{thm:n=2:goods}}
	    \label{tab::thm24-mms}
	\end{table}

    \begin{table}[h]
	    \centering
	    \begin{tabular}{c|c|c|c|c}
	         Items & $e_1$ & $e_2$ & $e_3$ & $e_4$  \\
	         \hline
	         $v_1(\cdot)$ & $\epsilon$ & $\epsilon$ & $\frac{1}{2}-\epsilon$ & $\frac{1}{2}-\epsilon$ \\
	         $v_2(\cdot)$ & $\frac{1}{4}-\epsilon$ & $\frac{1}{4}-\epsilon$ & $\epsilon$ & $\frac{1}{2}+\epsilon$
	    \end{tabular}
	    \caption{The Lower Bound Instance for PROP1 fairness in Theorem \ref{thm:n=2:goods}}
	    \label{tab::thm24-prop1}
	\end{table}
    
    Regarding PROP1, let us consider an instance with 4 items $E = \{ e_1, e_2, e_3, e_4 \}$. The valuations are shown in Table \ref{tab::thm24-prop1}. 
    It can be verified that $\bar{\mathbf{A}}$ with $ \bar{A} _ 1 = \{ e_3, e_4\}$ and $ \bar{A} _ 2 = \{ e_1, e_2 \}$ achieves the optimal utilitarian welfare $\UW(\bar{\mathbf{A}}) = \frac{3}{2} - 4\epsilon$. 
    But $\bar{\mathbf{A}}$ is not PROP1 as $v_2(A_2\cup \{ e_3\}) < \frac{1}{2}$. 
    Further, it is not hard to check that $\bar{\mathbf{B}}$ with $\bar{B}_1 = \{ e_1, e_2, e_3\}$ and $ \bar{B}_2 = \{ e_4 \} $ is a PROP1 allocation that achieves the largest welfare of $1 + 2\epsilon$. 
    Taking $\epsilon \rightarrow 0 $, the price of PROP1 is at least $\frac{3}{2}$.
\end{proof}

\medskip
\noindent\begin{proof}[Proof of Lemma \ref{lem:n=2:chores}]
  We explicitly construct such an allocation $\mathbf{O} = (O_1,\ldots, O_n)$:
    \begin{itemize}
        \item $\mathbf{O}$ first maximizes the egalitarian welfare among all connected allocations; If there is a tie, $\mathbf{O}$ minimizes the number of items allocated to the agent with smaller value.
    \end{itemize}
By construction, it is straightforward that $\mathbf{O}$ maximizes the egalitarian welfare. With loss of generality, we assume $ v _ 1 ( O _ 1 ) \leq v _ 2 ( O _ 2 )$ and $ O _ 1 $ is at the left of $ O _ 2$. If $ v _ 1 ( O _ 1) \ge -\frac{1}{2}$, allocation $\mathbf{O}$ satisfies the conditions described in the statement. If $ v _ 2 ( O _ 2) < -\frac{1}{2}$ or $ v _ 2 ( O _ 1) > v _ 1 ( O _ 1)$, swapping the bundles would increase the egalitarian welfare. In the following, it suffices to consider the case where $ v _ 2 ( O _ 1) \le v _ 1 ( O _ 1) < -\frac{1}{2} \le v _ 2 ( O _ 2)$ and show $ O _ 1 $ satisfies MMS and PROP1 for agent 1.

First, we show $\mathbf{O}$ is PROP1 for agent 1.  Let $e^* \in O _ 1$ be the item such that $O _ 2 \cup\{ e ^* \}$ is connected. Since $\mathbf{O}$ minimizes the number of items assigned to smaller value agent, we have $ v _ 1 ( e ^ *) <  0 $. Consider another connected allocation $\mathbf{O}^{\prime} = (O ^ {\prime} _ 1, O ^ {\prime} _ 2 )$ with $ O^{\prime} _ 1 = O _ 1\setminus \{ e ^* \}$ and $ O^{\prime} _ 2 = O _ 2 \cup \{e^* \}$. Since $ v _ 1 (O^{\prime} _ 1) > v _ 1 ( O _ 1 )$, we must have $ v _ 2 ( O^{\prime} _ 2 ) \leq v _ 1 ( O _ 1 ) < -\frac{1}{2}$; otherwise $\EW(\mathbf{O}^{\prime}) > \EW(\mathbf{O})$, contradiction. As a consequence, $ v _ 1 ( O^{\prime} _ 1) >- \frac{1}{2}$ must hold; otherwise, swapping $O^{\prime}_1$ and $O^{\prime}_2$ results in another allocation with egalitarian welfare at least $-\frac{1}{2}$, contradiction. Thus, agent 1 satisfies PROP1 under allocation $\mathbf{O}$.

Next, we show $\mathbf{O}$ is MMS fair for agent 1. Let $\mathbf{T} = \{ T _ 1, T _ 2 \}$ be an $\MMS_1$-defining partition and $ T _ 1$ is at the left of $ T _ 2$. Again, we consider allocation $\mathbf{O}^{\prime}$ constructed in the previous paragraph. If $ v _ 1 ( O _ 1 ) < \MMS_1$, then we have $ T _ 1 \subsetneq O _ 1 $ and accordingly $ T _ 1 \subseteq O^{\prime} _ 1  $. Since $ v _ 1 ( O ^ {\prime} _ 1) > -\frac{1}{2}$, we have $ v_ 1 ( T _ 1) \ge v _ 1 ( O ^ {\prime} _ 1) > -\frac{1}{2}$ and $\MMS_1 = v _ 1 ( T _ 2) \le v _ 1 (O^{\prime} _ 2)$. Thus, $ v _ 1 ( O ^ {\prime} _ 2) > v _ 1 ( O _ 1)$. That is by allocating $ O ^ {\prime} _ 2$ to agent 1 and $ O ^ {\prime} _ 1$ to agent 2, we have $ v _ 2 ( O^ {\prime} _ 1) > -\frac{1}{2} > v _ 1 (O_1)$ and $ v _ 1 ( O^ {\prime} _ 2) > v _ 1 ( O _ 1)$, which induces higher egalitarian welfare than $\EW(\mathbf{O})$ and leads to a contradiction.

Finally, we show $\UW(\mathbf{O}) \ge -1$. Since $ v _ 2 ( O _ 1 ) \le v _ 1 ( O _ 1) $, then $ v _ 2 ( O _ 2) = - 1 - v _ 2 ( O _ 1 ) \ge - 1 - v _ 1 ( O _ 1)$ and thus $ v _1 ( O _ 1) + v _ 2 ( O _ 2) \ge 1$, which completes the proof of the lemma.
\end{proof}

\medskip
\noindent\begin{proof}[Proof of Theorem \ref{thm:n=2:chores}]
    Consider the allocation $\mathbf{O}$ constructed in Lemma \ref{lem:n=2:chores}.
    The PoF regarding egalitarian welfare is straightforward by the design, and we focus on the  utilitarian welfare in the following.
    By Lemma \ref{lem:n=2:chores}, we have $\UW(\mathbf{O}) \ge -1$.
    Let $\OPT_U$ be the optimal utilitarian welfare, and thus $\OPT_U \ge -1$.
    Denote by $\OPT_U = -1 + x$ where $x\ge 0$.
    If $x \le \frac{1}{2}$, then $\OPT_U\le -\frac{1}{2}$ and since $\UW(\mathbf{O}) \ge -1$,  we have the desired PoF ratio of $2$;
    If $x > \frac{1}{2}$, then the optimal egalitarian welfare is at least $-1+x$ and
    $\UW(\mathbf{O}) \ge 2(-1+x)$, which also gives the ratio of $2$.

	Next, we prove the lower bound.
	Regarding MMS, consider an instance with two agents and three items $ E = \{ e _ 1, e _ 2, e _ 3 \}$.
	The valuations are shown in the following table, where $\epsilon > 0 $ is arbitrarily small.
	\begin{table}[h]
	    \centering
	    \begin{tabular}{c|c|c|c}
	         Items & $e_1$ & $e_2$ & $e_3$ \\
	         \hline
	         $v_1(\cdot)$ & $-\frac{1}{2}$ & $-\frac{1}{2} + \epsilon$ & $-\epsilon$  \\
	         $v_2(\cdot)$ & $-\frac{1}{2}$ & $-\epsilon$ & $-\frac{1}{2} + \epsilon$
	    \end{tabular}
	    \caption{The Lower Bound Instance for MMS fairness in Theorem \ref{thm:n=2:chores}}
	    \label{tab:my_label}
	\end{table}
	Note that $ \MMS _ i= -\frac{1}{2}$ for both $ i \in [2]$. 
	The utilitarian welfare-maximizing allocation $\mathbf{A}$ assigns chores $\{e_1, e_2\}$ to agent 2 and $\{ e _3\} $ to agent 1 resulting the maximum welfare of $-\frac{1}{2} - 2\epsilon$. 
	However, the allocation is not MMS fair to agent 2. 
	Actually, in any MMS allocation, it must be the case that an agent receives $e _ 1 $ and the other one receives $ e _2, e_3$, and thus the welfare of an MMS allocation is $-1$. 
	Therefore, the price of MMS with respect to utilitarian welfare is bounded by
	\[
	\PoF \ge \frac{1}{1/2 + 2\epsilon} \rightarrow 2, \text{ when } \epsilon \rightarrow 0,
	\] 
	which finishes the proof.
	
	Regarding PROP1 fairness, consider an instance with 7 items $E = \{ e_1,\ldots, e_7 \}$. Agents' valuations are described in the following table,
where $\epsilon > 0 $ takes arbitrarily small value.

\begin{table}[h]
	    \centering
	    \begin{tabular}{c|c|c|c|c|c|c|c}
	         Items & $e_1$ & $e_2$ & $e_3$ & $e_4$ & $e_5$ & $e_6$ & $e_7$ \\
	         \hline
	         $v_1(\cdot)$ & $-\epsilon$ & $-\epsilon$ & $- \frac{1}{2}$ & $-\epsilon$ & $- \frac{1}{2} + 5\epsilon$ & $-\epsilon$ & $-\epsilon$ \\
	         $v_2(\cdot)$ & $-\frac{1}{4} + \epsilon$ & $-\frac{1}{4} + \epsilon$ & $0$ & $-\frac{1}{8} - \epsilon$ & $-\frac{1}{8} - \epsilon$ & $-\frac{1}{4} + \epsilon$ & $- \epsilon$
	    \end{tabular}
	    \caption{The Lower Bound Instance for PROP1 fairness in Theorem \ref{thm:n=2:chores}}
	    \label{tab:my_label}
	\end{table}

Denote by $\mathbf{A} ^*$ the allocation with optimal utilitarian welfare. Clearly, we have $ A^* _ 1 = \{ e_6, e_7\}$ and $ A^*_2 = E \setminus A^* _ 1$, and welfare $\UW(\mathbf{A}^*) = -\frac{1}{2}- 2\epsilon$. 
But in $\mathbf{A} ^*$, agent 2 violates PROP1 as removing a boundary item yields value at most $-\frac{1}{2} - \epsilon < -\frac{1}{2}$ for her. 
Is not hard to verify that allocation $\mathbf{B}$ with $B_1 =\{ e_1, e_2, e_3 \}$ and $ B  _2 = E \setminus B_ 1 $ is PROP1 and achieves the maximum utilitarian welfare $\UW(\mathbf{B}) = -1 - 4\epsilon$. Therefore, the price of PROP1 with respect to utilitarian welfare is bounded by 
$$
\PoF \geq \frac{-1 - 4\epsilon}{ - \frac{1}{2} - 2\epsilon} \rightarrow 2, \textnormal{ when } \epsilon \rightarrow 0,
$$ which completes the proof.
\end{proof}

%

\end{document}